\theoremstyle{definiton}
\newtheorem{theorem}{\sc{Th�or�me}}[section]
\newtheorem{theo}[theorem]{Theorem}
\newtheorem{lemm}[theorem]{Lemma}
\newtheorem{lem}{Lemma}
\theoremstyle{remark}
\newcommand{\rA}{\mathcal{A}}
\newcommand{\rE}{\mathcal{E}}
\newcommand{\rL}{\mathcal{L}}
\newcommand{\EE}{\mathbb{E}}
\newcommand{\NN}{\mathbb{N}}
\newcommand{\RR}{\mathbb{R}}
\def\l{\left|}
\def\r{\right|}
\def\x{\chi}
\def\O{\Omega}
\def\F{\Phi}
\def\s{\sigma}
\def\t{\theta}
\def\f{\phi}
\def\o{\omega}
\def\T{\tau}
\def\wh{\widehat}
\def\td{\widetilde}
\def\br{\overline}
\font\timesept=cmr7
\title{Continuous Limits of\\ Classical Repeated Interaction Systems\footnote{Work supported by ANR project "HAM-MARK", N${}^\circ$ ANR-09-BLAN-0098-01}}
\author{Julien DESCHAMPS}
\date{}
\begin{document}

\maketitle

\centerline{\timesept Universite de Lyon}
\vskip -1mm
\centerline{\timesept Universite de Lyon 1, C.N.R.S.}
\vskip -1mm
\centerline{\timesept Institut Camille Jordan}
\vskip -1mm
\centerline{\timesept 21 av Claude Bernard}
\vskip -1mm
\centerline{\timesept 69622 Villeubanne cedex, France}

\begin{abstract}
We consider the physical model of a classical mechanical system (called "small system") undergoing repeated interactions with a chain of identical small pieces (called "environment"). This physical setup constitutes an advantageous way of implementing dissipation for classical systems, it is at the same time Hamiltonian and Markovian. This kind of model has already been studied in the context of quantum mechanical systems, where it was shown to give rise to quantum Langevin equations in the limit of continuous time interactions (\cite{AP}), but it has never been considered for classical mechanical systems yet. The aim of this article is to compute the continuous limit of repeated interactions for classical systems and to prove that they give rise to particular stochastic differential equations in the limit. In particular we recover the usual Langevin equations associated to the action of heat baths. In order to obtain these results, we consider the discrete-time dynamical system induced by Hamilton's equations and the repeated interactions. We embed it into a continuous-time dynamical system and we compute the limit when the time step goes to 0. This way we obtain a discrete-time approximation of stochastic differential equation, considered as a deterministic dynamical system on the Wiener space, which is not exactly of the usual Euler scheme type. We prove the $L^p$ and almost sure convergence of this scheme. We end up with applications to concrete physical exemples such as a charged particule in an uniform electric field or a harmonic interaction. We obtain the usual Langevin equation for the action of a heat bath when considering a damped harmonic oscillator as the small system. 
\end{abstract}

\newpage

\tableofcontents

\section{Introduction}

In order to study open physical systems, that is, systems in interaction with a large environment, two main approaches are often considered. The first one is Hamiltonian, it consists in describing completely the small system, the environment and their interactions in a completely Hamiltonian way. Then one studies the associated dynamical system. The other approach is Markovian, it consists in giving up describing the environment (which is too complicated or unknown) and to describe only the effective action of the environment on the small system. Under some assumptions on the environment, the evolution of the small system is a Markov process. One can then study this Markov process with the associated probabilistic tools (invariant measure, etc).

\smallskip
In the context of quantum mechanical systems, S. Attal and Y. Pautrat propose in \cite{AP} a new type of model for the interaction of a small system and the environment: the \emph{scheme of repeated interactions}. In this setup, the environment is regarded as an infinite assembly of small identical pieces which act independently, one after the other, on the system during a small time step $h$. This approach has the advantage of being in between the two previous approaches: it is Hamiltonian for each interaction of the small system with one piece of the environment is described by a full Hamiltonian, it is also Markovian in its structure of independent and repeated interactions with fresh pieces of the environment. 

This approach, in the quantum context, has also the advantage to give rise to a rather workable way of implementing the dissipation. It is physically realistic for it shown in \cite{AP} that, in the continuous interaction limit ($h$ tends to 0) the associated dynamics converges to the usual quantum Langevin equations associated to open quantum systems. 

Our aim in this article is to consider this scheme of repeated interactions, and its continuous time limit, for classical physical systems.

\smallskip
S. Attal describes in \cite{A} a mathematical framework for classical systems of repeated interactions. His construction is based on a strong connection between Markov processes and dynamical systems that he describes; we present it in Section 2. The main idea is that Markov processes are all obtained from deterministic dynamical systems on product spaces when ignoring one of the two components. In particular stochastic differential equations can be seen as deterministic dynamical systems on the state space times the Wiener space.

The dynamical system associated to repeated classical interactions is discrete in time, depending on the time parameter $h$. If one wants to consider all these dynamical systems for all $h$ and their continuous limit, when $h$ tends to 0, we have to embed discrete time and continuous time dynamical systems into a common natural setup. This is what we develop in Section 3.

\smallskip
Section 4 is devoted to presenting several physical exemples to which our main theorems will be applied at the end of the article.

\smallskip
The convergence of the discrete dynamical systems associated to classical repeated interactions is carefully studied in Section 5. More precisely, the evolution of the system undergoing repeated interactions shall be represented by a Markov chain $(X^h_{nh})$. The study of a limit evolution comes down to the convergence of a linearly interpolated Markov chain to the solution of a stochastic differential equation. Indeed, the embedded dynamics lead to consider the linear interpolation of $(X^h_{nh})$, i.e. the process $(X_t^h)$ defined by
\begin{align*}
 X_t^h = X_{\lfloor t/h\rfloor h}^h +\dfrac{t-\lfloor t/h\rfloor h}{h} \Big{\{}& X_{(\lfloor t/h\rfloor +1)h}^h-X_{\lfloor t/h\rfloor h}^h\Big{\}}\,.
\end{align*}
The main theorems of Section 5 are concerned with the convergence of this process $(X_t^h)$ under some assumptions. Theorem 5.2 shows the $L^{p}$ and almost sure convergences when the evolution of the Markov chain is given by
\[X^h_{(n+1)h}= X_{nh}^h+ \sigma(X_{nh}^h)(W_{(n+1)h}- W_{nh}) + h b(X_{nh}^h) + h \eta^{(h)}(X_{nh}^h,W_{(n+1)h}- W_{nh})\,.\]
The limit process is the solution of the stochastic differential 
\[dX_t = b(X_t)\,dt + \s(X_t)\,d W_t\,\]
when the applications $b$ and $\s$ are Lipschitz, and when $\eta^{(h)}$ satisfies
\[\left|\eta^{(h)}(x,y) \right| \leq K (h^{\alpha}\left|x\right| + \left|y\right|)\,.\]

All these results are then applied in Section 6 to the physical exemples previously  presented in Section 4.

\section{Dynamical Systems and Markov Processes}

\subsection{Discrete Time}

A connexion between deterministic dynamical systems and Markov chains has been recently presented by Attal  in \cite{A}. He shows that randomness in Markov chains appears naturally from deterministic dynamical systems when loosing some information about some part of the system. In this section, we present the context and the main results of \cite{A}.

\smallskip
A \emph{discrete time dynamical system} is a measurable map $\wh{T}$ on a measurable space $(F,\mathcal{F})$. The dynamics of this system is given by the sequence $(\wh{T}^n)_{n \in \mathbb{N}^*}$ of  iterations of $\wh T$. From this map $\wh T$, one can naturally define a map ${T}$ on $\mathcal{L}^{\infty} (F,\mathcal{F})$, the set of bounded functions on $F$, by
\[ {T}f(x)=f(\wh{T}(x))\,,\]
for all $f$ in $\mathcal{L}^{\infty} (F,\mathcal{F})$ and all $x$ in $F$.

\smallskip
Now consider a dynamical system $\wh{T}$ on a product space $S \times E$ where $(S,\mathcal{S})$ and $(E, \rE)$ are two measurable spaces. Furthermore, assume that $(E,\rE)$ is equipped with a probability measure  $\mu$. Physically, $S$ is understood as the phase space of a ``small'' system and $E$ as the one of the environment. Let $T$ be the application on $\mathcal{L}^{\infty} (S \times E)$ induced by $\wh{T}$. 

The idea of the construction developed in \cite{A} is to consider the situation where one has only access to the system $S$ and not to the environment $E$ (for example $E$ might be too complicated, or unknown, or unaccessible to measurement). One wants to understand what kind of dynamics is obtained from $T$ when restricting it to $S$ only.

\smallskip
Consider a bounded function $f$ on $S$, we naturally extend it as a (bounded) function on $S \times E$ by considering
\[(f \otimes \mathds{1}) (x,y)= f(x),\]
for all $x$ in $S$, all $y$ in $E$. That is, the function $f$ made for acting on $S$ only is seen as being part of a larger world $S\times E$. 

The dynamical system $T$ can now act on $f\otimes \mathds{1}$. We assume that what the observer of the system $S$ sees from the whole dynamics on $S\times E$ is an average on $E$ along the given probability measure $\mu$. 
 Therefore,  we have to consider the application $L$ on $\rL^\infty(S)$ defined by
\[Lf(x)= \int_E T(f \otimes \mathds{1}) (x,y)\, d\mu(y)\,.\]
Note that $Lf$ also belongs to $\rL^\infty(S)$. This operator $L$ on $\rL^\infty(S)$ represents the restriction of the dynamics $\wh T$ on $S$.
In other words, we have the following commuting diagram :
\[\begin{CD}
\mathcal{L}^{\infty} (S \times E) @>T>> \mathcal{L}^{\infty} (S \times E)\\
@A{\otimes \mathds{1}}AA @VV{\int_E \cdot \hspace{1mm} d\mu}V\\
\mathcal{L}^{\infty} (S) @>>L> \mathcal{L}^{\infty} (S)
\end{CD}\]
It is natural now to wonder what is the nature of the operator $L$ obtained this way. In \cite{A} the following result is proved.

\begin{theo}
There exists a Markov transition kernel $\Pi$ such that $L$ is of the form 
\[ Lf(x) = \int_S f(z) \,\Pi(x,dz)\,,\]
for all $f\in\rL^\infty(S)$.

\smallskip
Conversely, if $S$ is a Lusin space and $\Pi$ is any Markov transition kernel on $S$, then there exist a probability space $(E,\rE,\mu)$ and a dynamical system $\wh T$ on $S\times E$ such that the operator 
\[ Lf(x) = \int_S f(z) \,\Pi(x,dz)\,,\]
is of the form 
\[Lf(x)= \int_E T(f \otimes \mathds{1}) (x,y)\, d\mu(y)\]
for all $f\in\rL^\infty(S)$.
\end{theo}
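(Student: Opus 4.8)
The plan is to prove the two directions separately.

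For the first (direct) direction, I would show that the operator $L$ defined by $Lf(x) = \int_E T(f\otimes\mathds{1})(x,y)\,d\mu(y)$ is a positive, unital operator on $\rL^\infty(S)$ that is suitably continuous (say, under bounded pointwise limits), and then invoke the standard representation of such operators as integrals against a transition kernel. Concretely, for fixed $x\in S$, the map $f\mapsto Lf(x)$ is a positive linear functional with $L\mathds{1}=\mathds{1}$, hence — modulo the measure-theoretic regularity needed to get a genuine kernel rather than a finitely additive set function — defines a probability measure $\Pi(x,\cdot)$ on $(S,\mathcal{S})$ via $\Pi(x,A) = L\mathds{1}_A(x) = \int_E \mathds{1}_A(\pi_S(\wh T(x,y)))\,d\mu(y)$, where $\pi_S$ is the projection onto $S$. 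Measurability of $x\mapsto\Pi(x,A)$ follows from measurability of $\wh T$ and Fubini. The key point to check is just that $A\mapsto\Pi(x,A)$ is countably additive, which comes from dominated convergence applied inside the $\mu$-integral, and that the kernel property holds; this direction is essentially bookkeeping.

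For the converse, given a Markov kernel $\Pi$ on a Lusin space $S$, I would construct the environment explicitly as a copy of $S$ equipped with a reference probability measure, together with a ``disintegration'' map. Since $S$ is Lusin, one can realize $\Pi$ by a measurable map: there exists a probability space $(E,\rE,\mu)$ — one may take $E=[0,1]$ with Lebesgue measure, or $E=S$ with a suitable auxiliary measure — and a measurable map $v:S\times E\to S$ such that $v(x,\cdot)_*\mu = \Pi(x,\cdot)$ for every $x$; this is the standard ``kernel-to-random-map'' lemma available precisely because $S$ is a nice (Lusin/standard Borel) space. Then I would define $\wh T(x,y) = (v(x,y),\,\theta(y))$ for some auxiliary measurable self-map $\theta$ of $E$ (the choice of $\theta$ is free here, e.g. the identity, since the $E$-component is integrated out); one checks that $T(f\otimes\mathds{1})(x,y) = f(v(x,y))$, so that $\int_E T(f\otimes\mathds{1})(x,y)\,d\mu(y) = \int_E f(v(x,y))\,d\mu(y) = \int_S f(z)\,\Pi(x,dz) = Lf(x)$, as desired.

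The main obstacle is the measurable selection step in the converse: producing a single jointly measurable map $v:S\times E\to S$ that realizes the whole family of measures $\{\Pi(x,\cdot)\}_{x\in S}$ simultaneously. This is exactly where the Lusin hypothesis on $S$ is used — it guarantees, via the isomorphism of $S$ with a Borel subset of a Polish space and the existence of regular conditional distributions / generalized inverses of distribution functions, that such a measurable parametrization exists. Once $v$ is in hand, everything else is a routine verification using Fubini's theorem and the change-of-variables formula for pushforward measures. I would therefore devote most of the write-up to stating and applying the appropriate selection lemma (citing \cite{A} or a standard reference), and keep the verification of the commuting-diagram identity brief.
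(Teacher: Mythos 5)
Your proposal is correct and follows essentially the same route as the proof this paper relies on (the theorem is only quoted here from \cite{A}): the direct part is the pushforward construction $\Pi(x,A)=\mu(\{y:\ U(x,y)\in A\})$ with routine verification of countable additivity and measurability in $x$, and the converse uses the Lusin hypothesis exactly where you place it, through a Borel isomorphism of $S$ with a Borel subset of the real line and the inverse-distribution-function realization of $\Pi$ by a jointly measurable map $v$ with $v(x,\cdot)_*\mu=\Pi(x,\cdot)$, the second component of $\wh T$ being indeed arbitrary. The only details you leave implicit --- joint measurability of the quantile map and modifying it on the Lebesgue-null set where its values fall outside the Borel image of $S$ --- are routine, though for the kernel-to-random-map lemma you should cite a standard reference rather than \cite{A} itself, since that is precisely the source of the theorem being proved.
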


The mapping $L$ on the system $S$ is not a dynamical system anymore. It represents the part of the dynamics on the large system $S\times E$ which is obtained when observing $S$ only. The main fact of the above result is that $L$ encodes some randomness, while $T$ was completely deterministic. This randomness arises from the lack of knowledge on the environment.

\smallskip
Note that in the reciprocal above, the dynamical system $T$ which dilates $L$ is not unique.

\bigskip
The Markov operator $L$ obtained above comes from the projection of only one iteration of the dynamical system $T$. It is not true in general that if one projects the mapping $T^k$ on $S$ one would obtain $L^k$ (see \cite{A} for a counter-example).  It would be very interesting to be able to construct a dynamical system $T$ which dilates a Markov operator $L$ not only for one step, but for all their powers $T^k$ and $L^k$. This would mean that we have a whole dynamical system $(T^k)_{k\in\NN}$ which dilates a whole Markov chain $(L^k)_{k\in\NN}$ when restricting it to $S$. This is to say that one wants the following diagram to commute for all $k\in\NN$:
$$
\begin{CD}
\mathcal{L}^{\infty} (S \times E) @>T^k>> \mathcal{L}^{\infty} (S \times E)\\
@A{\otimes \mathds{1}}AA @VV{\int_E \cdot \hspace{1mm} d\mu}V\\
\mathcal{L}^{\infty} (S) @>>L^k> \mathcal{L}^{\infty} (S)
\end{CD}
$$
As explained in \cite{A} this can be obtained through the scheme of repeated interactions, as follows.

Let $\wh{T}$ be a dynamical system on $S \times E$ which dilates a Markov operator $L$. Since $\wh{T}$ is a measurable map from $S \times E$ to $S \times E$, there exist two measurable applications $U$ and $V$ such that, for all $x$ in $S$, all $y$ in $E$,
\[\wh{T}(x,y) = (U(x,y),V(x,y))\,.\]
Now, consider the space $\td{E} = E^{\NN^*}$ endowed with the usual $\s$-field $\rE^{\otimes \NN^*}$ and the product measure $\td{\mu}=\mu^{\otimes \NN^*}$. From the map $\wh{T}$, one defines a dynamical system $\td{T}$ on the space $S \times \td{E}$ by
\[\td{T}(x,y)=(U(x,y_1),\t(y))\,,\]
for all $x$ in $S$, all sequence $y=(y_m)_{m \in \NN^*}$ in $\td{E}$, where $\t$ is the shift on $\td{E}$, that is, 
$$
\t(y)=(y_{m+1})_{m \in \NN^*}\,.
$$
Physically, this construction can be understood as follows. 
The system $S$ is in interaction with 
a large environment. This large environment is a chain $\td{E}$ made of copies of a small piece $E$. One after the other, each part $E$ of this environment interacts with the system $S$, independently from the others. This is the so-called ``\emph{repeated interactions scheme}''.

\smallskip
Note that the application $V$ doesn't appear in the definition of $\td{T}$.
From the physical point of view this is natural, the map $V$ gives the evolution of the part $E$ of the environment which acts on the system. As this part shall not be involved in the dynamics of the system $S$, its new state has no importance for the system.

\smallskip
As previously, the mapping $\td{T}$ induces an operator $T$ on $\rL^{\infty}(S \times \td{E})$. 
Then, the following theorem (proved in \cite{A}) shows that the sequence $(T^k)$ of iterations of $T$ dilates the whole Markov chain $(L^k)_{k\in\NN}$.

\begin{theo}
For all $m$ in $\mathbb{N}^*$, all $x$ in $S$, and all $f$ in $\mathcal{L}^{\infty} (S)$, 
\[(L^m f) (x)= \int_{\td{E}} T^m(f \otimes \mathds{1}) (x,y)\, d\td{\mu}(y)\,.\]
\end{theo}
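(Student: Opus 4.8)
The natural approach is induction on $m$, carried out after first describing explicitly the iterates of the repeated-interaction map $\td T$.

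\smallskip
First I would unwind the structure of $\td T^m$. Writing $\wh T(x,y)=(U(x,y),V(x,y))$ as in the construction above, so that $\td T(x,y)=(U(x,y_1),\t(y))$, an easy induction on $m$ (using $\td T^{m+1}=\td T^m\circ\td T$ and $\t^m(\t y)=\t^{m+1}y$) gives
\[\td T^m(x,y)=\left(U^{(m)}(x,y_1,\dots,y_m),\ \t^m(y)\right),\]
where $U^{(1)}=U$ and $U^{(m+1)}(x,y_1,\dots,y_{m+1})=U^{(m)}\left(U(x,y_1),y_2,\dots,y_{m+1}\right)$. The content of this step is that the $m$-th iterate consumes exactly the first $m$ letters $y_1,\dots,y_m$ of the environment chain, in order, and that neither $V$ nor the tail $\t^m(y)$ plays any role. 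Since $(f\otimes\mathds{1})(x,y)=f(x)$ does not depend on the $\td E$-variable, it follows immediately that
\[T^m(f\otimes\mathds{1})(x,y)=(f\otimes\mathds{1})\left(\td T^m(x,y)\right)=f\left(U^{(m)}(x,y_1,\dots,y_m)\right).\]

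\smallskip
Next I would express $L^m$ in the same terms. By the one-step dilation property of $\wh T$ (the standing hypothesis that $\wh T$ dilates $L$), we have $Lg(x)=\int_E g(U(x,y))\,d\mu(y)$ for every $g\in\rL^\infty(S)$. Iterating this identity, using the recursion defining $U^{(m)}$ together with Fubini's theorem --- all functions in sight are bounded and measurable, being compositions of the measurable maps $U$, $V$ and $f$ --- a straightforward induction on $m$ yields
\[L^m f(x)=\int_{E^m} f\left(U^{(m)}(x,y_1,\dots,y_m)\right)\,d\mu(y_1)\cdots d\mu(y_m).\]

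\smallskip
Finally I would combine the two displays. Integrating the formula for $T^m(f\otimes\mathds{1})$ against the product measure $\td\mu=\mu^{\otimes\NN^*}$ and observing that the integrand depends only on the coordinates $y_1,\dots,y_m$, the product structure of $\td\mu$ reduces the integral over $\td E$ to the finite-dimensional integral over $E^m$ appearing above, so that
\[\int_{\td E}T^m(f\otimes\mathds{1})(x,y)\,d\td\mu(y)=\int_{E^m}f\left(U^{(m)}(x,y_1,\dots,y_m)\right)\,d\mu(y_1)\cdots d\mu(y_m)=L^m f(x),\]
which is precisely the assertion. I expect the only genuinely delicate point to be the bookkeeping in the first step: one must make sure the shift $\t$ is applied in step with the iteration, so that $\td T^m$ reads off $y_1,\dots,y_m$ --- exactly these, and in this order --- which is what makes the reduction from the infinite product $\td\mu$ to $\mu^{\otimes m}$ legitimate. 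Everything else is a routine induction and an application of Fubini's theorem.
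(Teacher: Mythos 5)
Your argument is correct, and it is essentially the same one the paper relies on (the proof is deferred to \cite{A}, but the paper's own identification $\td{T}^k(x,y)=(X^x_k(y),\t^k(y))$ with $X^x_{n+1}(y)=U(X^x_n(y),y_{n+1})$ is exactly your $U^{(m)}$ recursion): iterate $\td T$ to see that only $y_1,\dots,y_m$ are read off, iterate the one-step dilation identity $Lg(x)=\int_E g(U(x,y))\,d\mu(y)$, and use the product structure of $\td\mu$ to reduce the integral over $\td E$ to one over $E^m$. No gaps worth flagging beyond the routine measurability/Fubini remarks you already make.
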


The operator $L$ represents the action of a Markov kernel $\Pi$ on bounded applications and the restriction of the whole dynamics given by $T$ to $S$.
On the other hand, the Markovian behaviour on applications associated to the action of $\Pi$ can be also seen on points when focusing on the dynamical system $\td{T}$. Indeed, for all initial state $x$ of the system in $S$, and all state of the environment $y$ in $\td{E}$, the evolution of the system is given by the sequence $(X_n^x(y))_{n \in \NN}$ defined by \[X_{n+1}^x (y) = U(X_n^x(y), y_{n+1})\,,\]
with $X_0=x$.
Now if the state of the environment is unknown, the dynamics of the system is represented by the sequence $(X_n^x)_{n \in \NN}$ of applications from $\td{E}$ to $S$.
This sequence $(X_n^x)$ is a Markov chain. Furthermore, the Markov transition kernel of $(X_n^x)$ 
 is $\Pi$ again.

\smallskip
Note that the whole dynamics of $\td{T}$ can be expressed from this sequence $(X_n^x)$ as follows. For all $k$ in $\NN$, all $x$ in $S$, and all $y$ in $\td{E}$,
\[\td{T}^k(x,y)=(X^x_k(y),\t^k(y))\,,\]
where $\t^k(y)$ is the sequence $y$ which is shifted $k$ times, i.e. $\t^k(y)=(y_{n+k})_{n \in \NN^*}$.

\bigskip
Note that sequences are indexed by $\NN^*$ in this scheme of repeated interactions. Physically, this can be understood as repeated interactions between the systems which last for a time duration $1$. Now, a new parameter $h$ is added. It represents the time step for the interactions. Henceforth, all the sequences are now indexed by $h \NN^*$. 

For instance, the dynamics of the system is now given by the Markov chain $(X_{nh}^{(h)})_{n \in \NN}$ defined for all starting state $x$ by 

\[X_{(n+1)h}^{(h)} (y) = U^{(h)}(X_{nh}^{(h)}(y), y_{(n+1)h})\,,\] 
with $X_0^{(h)}=x$.

The dynamical system given by the scheme of repeated interactions is now 
\[\td{T}^{(h)}(x,y)=(U^{(h)}(x,y_h),\t^{(h)}(y))\,,\]
where $\t^{(h)}$ is the shift, i.e. $\t^{(h)}(y)=(y_{(n+1)h})_{n \in \NN^*}$.

\bigskip
Note that the application $U^{(h)}$ can also depend on the time step $h$. We shall see in the exemples of Section 4 explicit expressions for this map $U^{(h)}$ in some physical situations.

\smallskip
Our aim is now to understand what can be the limit dynamics of these repeated interactions when the time step $h$ goes to $0$.

Attal and Pautrat, in \cite{AP}, have studied open quantum systems with an equivalent setup of repeated interactions. They show the convergence of repeated interactions to quantum stochastic differential equations.
In section 5, we shall see that the limit evolutions in the classical case are solutions of some stochastic differential equations. Therefore, we shall need to extend our parallel between Markov chains and dynamical systems to the continuous time setup and in particular to solutions of stochastic differential equations. 

\subsection{Continuous Time}

As previously seen, from a Markov operator $L$, with the help of repeated interactions, one can construct a dynamical system which dilates the whole sequence~$(L^k)$. Moreover, the evolution of the first component is given by a Markov chain associated to the Markov transition kernel. We wish to extend this idea to Markov processes which are solutions of stochastic differential equations.

\smallskip
A \emph{continuous time dynamical system} is a semigroup of measurable applications $(T_t)_{t \in \RR^+}$, that is, which satisfy $T_s \circ T_t = T_{s+t}\,,$ for all $s$, $t$ in $\RR^+$.

Consider now a stochastic differential equation (SDE) on $\RR^m$,
\begin{equation}
dX_t = b(X_t)\, dt + \s(X_t) \,dW_t \label{eq:A}\,,
\end{equation}
where $W_t$ is a $d$-dimensional standard Brownian motion, $b$ and $\s$ are measurable functions respectively from $\mathbb{R}^m$ to $\mathbb{R}^m$, and from $\mathbb{R}^m$ to $\mathcal{M}_{m,d} (\mathbb{R})$, the space of $m \times d$ real matrices.
We want to create an environment $\O$ and a deterministic dynamical system $(T_t)$ on $\mathbb{R}^m \times \O$ which dilates the solution and such that the first component of $T_t$ is the solution of
\eqref{eq:A} at time $t$.

\smallskip
However, we shall need existence and uniqueness of the solution of \eqref{eq:A} at all time $t$ and for all initial conditions. In order to guarantee that properties we have to make some assumptions on the functions $b$ and $\s$. We require them to be either globally Lipschitz,
\begin{enumerate}
\item[(H1)]There exists $K >0$ such that for all $x,y$ we have
\[\left|b(x)-b(y)\right|\leq K_0 \left|x-y\right|\, \text{ and}  \hspace{5mm} \|\sigma(x)-\sigma(y)\| \leq K_0 \left|x-y\right|\,,\]
\end{enumerate}
or locally Lipschitz and linearly bounded,
\begin{enumerate}
\item[(H2)] The functions $b$ and $\sigma$ are locally Lipschitz:
for all $N>0$, there exists $K_N >0$ such that for all $x,y \in \mathbb{B}(0,N)$ we have
\[\left|b(x)-b(y)\right|\leq K_N \left|x-y\right|\, \text{ and}  \hspace{5mm} \|\sigma(x)-\sigma(y)\| \leq K_N \left|x-y\right|\,.\]
\item[(H3)] Linear growth bound: There exists a constant $K_1>0$ such that 
\[\left|b(x)\right| \leq K_1(1 + \left|x\right|)\qquad\mbox{and}\qquad
\|\sigma(x)\|\leq K_1(1 + \left|x\right|)\,,\]
\end{enumerate}
where $\left| \cdot \right|$ is the euclidean norm on $\RR^m$ and $\| \cdot \|$ is the Hilbert-Schmidt norm on $\mathcal{M}_{m,d} (\mathbb{R})$ defined by
\[\| \s \|^2 = \sum_{j = 1}^d \sum_{i=1}^m (\s^{i,j})^2\,.\]
Note that Assumption (H1) implies (H2) and (H3). We differentiate the two cases because the convergences studied in section 5 shall be stronger for globally Lipschitz functions $b$ and $\s$.
 
Note also that Assumption (H1) or Assumptions (H2) and (H3) are sufficient (see \cite{IW}) but not necessary.
\smallskip
One can now construct the environment and the dynamical system. Consider the Wiener space $(\O,\mathcal{F},\mathbb{P})$ associated to the canonical Brownian motion $(W_t)$. This is to say that $\O=C_0 (\mathbb{R}_+, \mathbb{R}^d)$ is the space of continuous functions from $\mathbb{R}_+$ to $\mathbb{R}^d$ vanishing at the origin. The canonical Brownian motion $(W_t)$ is then defined by $W_t (\omega)=\omega(t)$, for all $\omega\in\O$ and all $t\in\mathbb{R}_+$.

Consider the \emph{shift} $\theta_t$ defined on $\O$ by
\[\theta_t(\omega)(s)=\omega (t+s) - \omega(t)\,.\]
and the family of mappings $(T_t)_{t\in \RR^+}$ on $\RR^+\times \O$ defined by
\[T_t(x,\omega) = (X^x_t(\omega),\theta_t(\omega))\,,\]
for all $x\in\RR^m$, all $\o\in\O$,  where $X^x_t$ is the solution of \eqref{eq:A} at time $t$, starting at $x$. In \cite{A}, Attal shows the following.

\begin{theo}
The family $(T_t)$ on $\mathbb{R}^m \times \O$ is a continuous time dynamical system.
\end{theo}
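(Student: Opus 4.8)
The plan is to verify directly the two defining properties of a continuous time dynamical system: that each $T_t$ is a measurable map on $\RR^m\times\O$, and that the family satisfies the semigroup law $T_s\circ T_t=T_{s+t}$. Since $T_t(x,\omega)=(X^x_t(\omega),\theta_t(\omega))$, one may compare the two coordinates of $T_s\circ T_t(x,\omega)$ with those of $T_{s+t}(x,\omega)$ separately, so it is enough to establish the two identities $\theta_s\circ\theta_t=\theta_{s+t}$ and $X^{X^x_t(\omega)}_s\big(\theta_t(\omega)\big)=X^x_{s+t}(\omega)$ for $s,t\ge0$ (the second holding $\PP$-a.s.). Measurability of $\theta_t$ is clear from its definition, while joint measurability of $(x,\omega)\mapsto X^x_t(\omega)$ — indeed the existence of a version of the solution flow that is a.s.\ continuous in $(t,x)$ — is the standard consequence of hypothesis (H1), or of (H2)--(H3), for which I would simply refer to \cite{IW}; such a version is fixed once and for all. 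We also note $T_0=\mathrm{id}$, since $X^x_0=x$ and $\theta_0=\mathrm{id}$.

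The shift identity is a one-line telescoping computation: for $u\ge0$,
\[
\big(\theta_s\circ\theta_t\big)(\omega)(u)=\theta_t(\omega)(s+u)-\theta_t(\omega)(s)=\big(\omega(t+s+u)-\omega(t)\big)-\big(\omega(t+s)-\omega(t)\big)=\omega\big((s+t)+u\big)-\omega(s+t),
\]
which is precisely $\theta_{s+t}(\omega)(u)$.

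For the SDE coordinate I would argue by pathwise uniqueness. Starting from the integral form of \eqref{eq:A} and breaking each integral at time $t$ by the substitution $r=t+v$, one obtains
\[
X^x_{t+u}(\omega)=X^x_t(\omega)+\int_0^u b\big(X^x_{t+v}(\omega)\big)\,dv+\int_0^u \s\big(X^x_{t+v}(\omega)\big)\,d\big(W_{t+v}(\omega)-W_t(\omega)\big),
\]
using that $W_t(\omega)$ is constant in $v$. Since $W_{t+v}(\omega)-W_t(\omega)=\omega(t+v)-\omega(t)=\theta_t(\omega)(v)=W_v\big(\theta_t(\omega)\big)$, the process $v\mapsto X^x_{t+v}(\omega)$ solves \eqref{eq:A} with initial condition $X^x_t(\omega)$ and driver the shifted Brownian motion $v\mapsto W_v(\theta_t(\omega))$ — a genuine Brownian motion, independent of $\mathcal F_t=\sigma(W_r:r\le t)$, whereas $X^x_t$ is $\mathcal F_t$-measurable. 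By the existence-and-uniqueness statement guaranteed by the hypotheses, this solution is $\PP$-a.s.\ equal to $v\mapsto X^{X^x_t(\omega)}_v(\theta_t(\omega))$; evaluating at $v=s$ gives the second identity, and together with the shift identity this yields $T_s\circ T_t=T_{s+t}$.

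\textbf{Expected main difficulty.} The formal manipulation above is routine; the real care is measure-theoretic. The solution $X^x_\cdot$ is only well defined modulo a $\PP$-null set possibly depending on $x$, so to even make sense of $X^{X^x_t(\omega)}_s(\theta_t(\omega))$ — feeding the random $\mathcal F_t$-measurable point $X^x_t(\omega)$ and the independent path $\theta_t(\omega)$ into the solution map — one must work with the fixed jointly measurable (a.s.\ continuous) flow, and then promote the cocycle/flow property from deterministic to $\mathcal F_t$-measurable random initial data, using the independence of $\theta_t$ from $\mathcal F_t$ together with a monotone-class (or conditioning) argument and absorbing the exceptional null sets into a single conull set. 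This null-set bookkeeping, rather than the change of variables or the uniqueness argument, is the step I expect to require the most attention.
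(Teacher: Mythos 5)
Your argument is correct and is essentially the standard (and, as far as this paper is concerned, the intended) proof: the paper itself does not reproduce a proof but defers to \cite{A}, where the semigroup property is obtained exactly as you do, by the telescoping identity $\theta_s\circ\theta_t=\theta_{s+t}$ together with the a.s.\ flow property of the SDE solution deduced from existence and pathwise uniqueness under (H1) (or (H2)--(H3)). Your closing remark correctly identifies the only delicate point, namely that the first-coordinate identity holds only outside a null set and requires a fixed jointly measurable version of the flow and an independence/conditioning argument to allow the random initial condition $X^x_t(\omega)$.
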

Let us denote by $\mathcal{T}_t$ the map induced by $T_t$ on $\mathcal{L}^\infty(\RR^m)$. It induces a semigroup $(P_t)_{t \in \RR^+}$ on $\mathcal{L}^\infty(\RR^m)$ by 
\[P_t(f)(x)=\mathbb{E}\left[\mathcal{T}_t(f \otimes \mathds{1})(x, \cdot )\right]\,.\]
The generator of this semigroup $(P_t)$ is
 \[\rA= \frac{1}{2} \sum_{i,j=1}^d a^{i,j}(x) \frac{\partial^2}{\partial x^i \partial x^j} + \sum_{i=1}^d  b^i(x) \frac{\partial}{\partial x^i}\,,\]
where $a$ is the symmetric matrix $\sigma \sigma^t$. 

\section{Embedding Discrete Time into Continuous Time}

In order to prove the convergence of the discrete time dynamical systems, associated to repeated interactions, to the continuous time dynamical systems associated to solutions of stochatic differential equations, we need to explicitly embed the discrete time dynamical systems into a continuous time setup.

The state space on which the repeated interaction dynamical system $\td{T}^{(h)}$ acts is $\RR^m \times (\RR^d)^{h\NN^*}$, whereas the one of $(T_t)_{t \in \RR_+}$ is $\RR^m \times \O$. The first step in constructing the embedding of dynamical systems is to construct an embedding of $(\RR^d)^{h\NN^*}$ into $\O$.

\subsection{Discrete Approximation of $\O$}

Let $\f_I^{(h)}$ be the map from $(\mathbb{R}^d)^{h \mathbb{N}^*}$ to $\O$  defined by
\[\f_I^{(h)}(y) (t) = \sum_{n=0} ^{\lfloor t/h \rfloor} y_{nh} + \frac{t - \lfloor t/h \rfloor h}{h} y_{(\lfloor t/h \rfloor +1)h}\, ,\]
where $y_0 =0$.
This map $\f_I^{(h)}$ actually builds a continuous, piecewise linear, function whose increments are the elements of the sequence $y$. The range of $\f_I^{(h)}$ is denoted by $\O^{(h)}$, it is a subspace of $\O$.

\smallskip
Conversely, define the map $\f_P^{(h)}$ from $\O$ to $(\mathbb{R}^d)^{h \mathbb{N}^*}$ by
\[\f_P^{(h)}(\omega)=(W_{nh}(\omega) - W_{(n-1)h}(\omega))_{n \in \mathbb{N}^*}= (\omega (nh) - \omega((n-1)h))_{n \in \mathbb{N}^*} \,.\]
In other words, the range of an element of $\O$ by $\f_P^{(h)}$ is the sequence of its increments at the times $nh$, for all $n\in\NN^*$.
Note that these applications $\f_I^{(h)}$ and $\f_P^{(h)}$ satisfy 
\[ \f_P^{(h)} \circ \f_I^{(h)} = \mbox{Id}_{(\mathbb{R}^d)^{h \mathbb{N}^*}}\,.\]
In particular the map $\f_I^{(h)}$ is an injection from $(\mathbb{R}^d)^{h \mathbb{N}^*}$ to $\O$, the spaces $(\mathbb{R}^d)^{h \mathbb{N}^*}$ and $\O^{(h)}$ are in bijection through the applications $\f_I^{(h)}$ and $(\f_P^{(h)})_{\vert \O^{(h)}}$.

\smallskip
We now show that $(\mathbb{R}^d)^{h \mathbb{N}^*}$ can be viewed as an approximation of $\O$ for the usual metric associated to the topology of uniform convergence on compact sets. For two elements $\o$ and $\o'$ in $\O$ define the distance
\[D(\omega,\omega') = \sum_{n=1}^{\infty} \frac{1}{2^n} \hspace{3mm} \dfrac{\underset{0\leq t \leq n}{\sup} \left| \omega(t) - \omega '(t) \right|}{1+\underset{0\leq t \leq n}{\sup} \left| \omega(t) - \omega '(t) \right|}\,,\]
where $\left| \cdot \right|$ is the euclidean norm on $\mathbb{R}^d$.
The space $\Omega$ endowed with this metric is a polish space.

\begin{lem}
For all $\omega \in \Omega$,  \[\underset{h \rightarrow 0}{\lim} \hspace{2mm} D(\omega,\f_I^{(h)} \circ \f_P^{(h)} (\o) )=0\,.\]
\end{lem}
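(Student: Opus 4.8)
The goal is to show that the piecewise-linear interpolant $\f_I^{(h)}\circ\f_P^{(h)}(\o)$, which agrees with $\o$ at the dyadic-type mesh points $nh$ and interpolates linearly in between, converges to $\o$ in the metric $D$ as $h\to 0$. The plan is to reduce convergence in $D$ to uniform convergence on each compact interval $[0,N]$, and then to exploit the uniform continuity of $\o$ on such intervals. First I would fix $\o\in\O$ and an integer $N\ge 1$, and set $\o^{(h)}=\f_I^{(h)}\circ\f_P^{(h)}(\o)$. On $[0,N]$, for $t$ in a mesh cell $[nh,(n+1)h]$ with $(n+1)h\le N+1$, the value $\o^{(h)}(t)$ is a convex combination of $\o(nh)$ and $\o((n+1)h)$, hence
\[
\bigl|\o(t)-\o^{(h)}(t)\bigr|\;\le\;\bigl|\o(t)-\o(nh)\bigr|+\bigl|\o(nh)-\o((n+1)h)\bigr|\;\le\;2\,\varpi_\o(h;N+1),
\]
where $\varpi_\o(\delta;M)=\sup\{|\o(s)-\o(t)|:s,t\in[0,M],\ |s-t|\le\delta\}$ is the modulus of continuity of $\o$ on $[0,M]$. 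Taking the supremum over $t\in[0,N]$ gives $\sup_{0\le t\le N}|\o(t)-\o^{(h)}(t)|\le 2\,\varpi_\o(h;N+1)$.

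The key analytic input is that a continuous function on a compact interval is uniformly continuous, so $\varpi_\o(\delta;N+1)\to 0$ as $\delta\to 0$ for each fixed $N$. Consequently $\sup_{0\le t\le N}|\o(t)-\o^{(h)}(t)|\to 0$ as $h\to 0$, for every fixed $N$. It remains to transfer this to the series defining $D$. Given $\e>0$, choose $N_0$ with $\sum_{n>N_0}2^{-n}<\e/2$; since each summand is bounded by $2^{-n}$, the tail of the series is $<\e/2$ uniformly in $h$. For the finitely many remaining terms $n=1,\dots,N_0$, use that $u\mapsto u/(1+u)$ is nondecreasing together with the bound above, and pick $h$ small enough that $2\,\varpi_\o(h;N_0+1)<\e/2$; then each of those finitely many summands is $<\e/2$ (indeed their sum is, since $\sum 2^{-n}<1$), so $D(\o,\o^{(h)})<\e$. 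This proves the claimed limit.

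The only mild subtlety — and the place to be slightly careful rather than a genuine obstacle — is the bookkeeping at the right end of $[0,N]$: for $t$ near $N$ the relevant mesh point $(n+1)h$ can exceed $N$, which is why the modulus of continuity is evaluated on the slightly larger interval $[0,N+1]$ (valid once $h\le 1$); this does not affect the conclusion since $N+1$ is still fixed. Everything else is routine: the estimate on a single cell, the monotonicity of $u/(1+u)$, and the standard split of a $\sum 2^{-n}$ series into a uniformly small tail plus finitely many terms each made small by uniform continuity.
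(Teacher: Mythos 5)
Your proof is correct and follows essentially the same route as the paper: uniform continuity of $\o$ on compact intervals gives uniform convergence of the piecewise-linear interpolant on each $[0,N]$, and the series defining $D$ is then handled by a standard argument (you split off a small tail explicitly, while the paper invokes Lebesgue's dominated convergence; this is an immaterial difference). Your quantitative bound via the modulus of continuity is in fact a slightly tighter write-up of the same idea.
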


\begin{proof}
This result is based on the uniform convergence of piecewise linear applications to continuous one on compact sets.
Let $\omega$ be a function in $\Omega$. Then, by definition,
$$
\f_P^{(h)}(\omega)=(\omega (nh) - \omega((n-1)h))_{n \in \mathbb{N}^*}\,.
$$
Now the mapping $\f_I^{(h)}$ is applied to sequence $\f_P^{(h)}(\omega)$. For all $t$ in $\RR_+$, we have
\begin{multline*}
\f_I^{(h)} \circ \f_P^{(h)} (\o) (t)= \sum_{n=0} ^{\lfloor t/h \rfloor} (\omega (nh) - \omega((n-1)h)) +\hfill \\
\hfill +\frac{t - \lfloor t/h \rfloor h}{h} \Big{\{}\o((\lfloor t/h \rfloor +1)h)- \o(\lfloor t/h \rfloor h)\Big{\}}\\
\hphantom{\f_I^{(h)} \circ \f_P^{(h)} (\o) (t) \ \ \ }=\,\o(\lfloor t/h \rfloor h)  + \frac{t - \lfloor t/h \rfloor h}{h} \Big{\{}\o((\lfloor t/h \rfloor +1)h)- \o(\lfloor t/h \rfloor h)\Big{\}} \,.\hfill
\end{multline*}
Note that $\f_I^{(h)} \circ \f_P^{(h)} (\o) (t)$ is a point in the segment
$
\Big{[}\o(\lfloor t/h \rfloor h))\,,\o((\lfloor t/h \rfloor +1)h)\,\Big{]}
$
Since $\o$ is continuous we have
\[\lim_{h \rightarrow 0}\o(\lfloor t/h \rfloor h)=\lim_{h \rightarrow 0}\o((\lfloor t/h \rfloor +1)h)= \o(t)\,.\]
Therefore,
\[ \lim_{h \rightarrow 0} \f_I^{(h)} \circ \f_P^{(h)} (\o) (t)= \o (t)\,.\]
Let $n\in\NN$, as $\left[0,n\right]$ is compact the function $\o$ is uniformly continuous on this interval.
Thus,
\[ \lim_{h \rightarrow 0}\ \sup_{0\leq t \leq n} \left| \omega(t) - \f_I^{(h)} \circ \f_P^{(h)} (\o) (t) \right| = 0\,.\]
On the other hand, note that 
\[\frac{\underset{0\leq t \leq n}{\sup} \left| \omega(t) - \f_I^{(h)} \circ \f_P^{(h)} (\o) (t) \right|}{1+\underset{0\leq t \leq n}{\sup}  \left| \omega(t) - \f_I^{(h)} \circ \f_P^{(h)} (\o) (t) \right|} \leq 1\,.\] 
Hence, by Lebesgue's theorem,
\[ \lim_{h \rightarrow 0} D(\omega,\f_I^{(h)} \circ \f_P^{(h)} (\o)) = 0\,.\]
\end{proof}

\subsection{Embedding the Discrete Time Dynamics}

The first step in the construction of a continuous dynamics from $\td{T}^{(h)}$ is to relate it to a discrete time evolution on $\RR^m \times \O$.
Since the applications $\f_I^{(h)}$ and $\f_P^{(h)}$ are only defined on $(\mathbb{R}^d)^{h \mathbb{N}^*}$ or $\O$, we extend them to $\RR^m \times (\mathbb{R}^d)^{h \mathbb{N}^*}$ and  $\RR^m \times \O$, respectively, by
\[\F_I^{(h)} = (Id,\f_I^{(h)})\,,\qquad \F_P^{(h)} = (Id,\f_P^{(h)}) \,.\]
Consider the dynamical system $\overline{T}^{(h)}$ on $\RR^m \times \O$ given by
\[\overline{T}^{(h)}= \F_I^{(h)} \circ \td{T}^{(h)} \circ \F_P^{(h)}\,.\]
As we have
\[ \F_P^{(h)} \circ \F_I^{(h)} = \mbox{Id}_{\RR^{m} \times \O}\,,\]
the following diagram is communting for all $n \geq 1$:
\[\begin{CD}
\mathbb{R}^m \times \Omega @>(\overline{T}^{(h)})^n>> \mathbb{R}^m \times \Omega \\
@V\F_P^{(h)}VV @AA\F_I^{(h)}A\\
\mathbb{R}^m \times (\mathbb{R}^d)^{h \mathbb{N}^*} @>>(\td{T}^{(h)})^n> \mathbb{R}^m \times (\mathbb{R}^d)^{h \mathbb{N}^*}
\end{CD}\]
We now relate this dynamical system to a continuous time dynamics, by linearly interpolating in time. Define a new family of applications $(\overline{T}^{(h)}_t)_{t \in \mathbb{R}^+}$ by 
\begin{equation*}
\begin{split}
\overline{T}^{(h)}_t &=(\overline{T}^{(h)})^{\lfloor t/h\rfloor}+ \dfrac{t-\lfloor t/h\rfloor h}{h}\hspace{3mm} \Big{\{}(\overline{T}^{(h)})^{(\lfloor t/h\rfloor+1) } - (\overline{T}^{(h)})^{\lfloor t/h\rfloor}\Big{\}}\\
&=\frac{(\lfloor t/h\rfloor+1)h - t}{h}\hspace{3mm} (\overline{T}^{(h)})^{\lfloor t/h\rfloor } + \frac{t-\lfloor t/h\rfloor h}{h}\hspace{3mm} (\overline{T}^{(h)})^{(\lfloor t/h\rfloor+1)}\,,
\end{split}
\end{equation*}
where, by convention, $(\overline{T}^{(h)})^0=\mbox{Id}$. 

The projections on the first and the second component of $\overline{T}^{(h)}_t$ are respectively denoted by $\overline{X}^{(h)}_t$ and $\overline{\t}_t^{(h)}$.
More precisely, for all initial $x \in \RR$ and $\o$, the value of $\br{T}^{(h)}_t(x,\o)$ can be also expressed as

\[\br{T}^{(h)}_t(x,\o)= (\br{X}^{(h)}_t (\f_P^{(h)}(\o)),\br{\theta}^{(h)}_t(\o))\,.\]

\smallskip
Note that, in general, the family $(\br{T}^{(h)}_t)_{t \in \mathbb{R}_+}$ is not a semigroup because of the linear interpolation. The random process $\br{X}^{(h)}_t$ is not also Markovian but a linearly interpolated Markov chain.

Also note that the state of the environment for the evolution of the system is given by $\f_P^{(h)}(\o)$, i.e. by the increments of a continuous application of the Wiener space $\O$.

Finally, the convergence of the dynamical system $\td{T}^{(h)}$ to a continuous one can be studied by examining the dynamics of $(\br{T}^{(h)}_t)_{t \in \mathbb{R}_+}$, more exactly the convergence of the random process $\br{X}^{(h)}$ to a solution of a stochastic differential equation and the convergence of $\bar{\t}_t^{(h)}$ to the shift $\t_t$ on $\O$ according to the metric $D$.

\smallskip
However, before hands, we want to illustrate this framework and in particular the scheme of repeated interactions, through physical examples.

\section{Application to some Physical Systems}

As explained in the introduction, the main motivation for the study of repeated interaction schemes and their continuous limit is to try to obtain physically justified and workable models for the dissipation of a simple system into a large environment, such as a heat bath for example.

\subsection{Charged Particle in a Uniform Electric Field}

Consider a particle of charge $q$ and mass $m$ in an uniform electric field $E$ in dimension $1$. Its energy without interaction with $E$ is just kinetic, i.e. $p^2/2m$. In the presence of the exterior electric field, the particule has a potential energy $-qxE$, where $x$ is the position.
Thus, the Hamiltonian of the particle is
\[H(x,p) = \frac{p^2}{2m} -qxE\,.\]
The dynamics of the particle is governed by Hamilton's equations of motion,
\[\begin{cases}
\dot{x}= \dfrac{p}{m}\\
\dot{p}= qE\,.
  \end{cases}\]
These equations can be easily solved and give
\[\begin{cases}
x(t)= \dfrac{qE}{2m} t^2+\dfrac{p(0)}{m}t+ x(0)\\
p(t)= qE t + p(0)\,.
  \end{cases}\]
Now we set up the scheme of repeated interactions. The small system is the particle. As it moves in dimension $1$, the space $S$ is $\RR^2$ endowed with its Borelian $\s$-algebra.
The environment is the exterior electric field. Hence, the space $E$ is $\RR$. Intuitively, the limit process shall be a solution of a SDE driven by a $1$-dimensional Brownian motion $(W_t)$.

The environment of the discrete time dynamics at time step $h$ is the set $\RR^{h \NN^*}$.
The interaction between the system and the environment is described as follows. At each time $(n-1)h$, the value of the electric field $E(nh)$ is sampled from the increment $W(nh)-W((n-1)h)$ of the $1$-dimensional Brownian motion. The system evolves during a time $h$ according to the solution of the equations whose initial values $x((n-1)h)$ and $p((n-1)h)$ and where the electric field is $E(nh)$. After this time $h$, the interactions are stopped and one repeats the procedure.

In particular the evolution of the system is given by the following Markov chain
\[\begin{cases}
x(nh)= x((n-1)h)+h\dfrac{p((n-1)h)}{m}+ h^2\dfrac{qE(nh)}{2m}\\
p(nh)=p((n-1)h)  + h qE(nh)\,.
  \end{cases}\]
  
However, one has to make some renormalization somewhere. Indeed, when the time step $h$ decreases, the effect of the electric field on the particle becomes smaller and smaller. To counter that, the interactions have to be reinforced. The electric field $E$ is renormalized by multiplying by a factor $1/h$. 

This normalization factor can be intuitively understood as follows. First if one wants to keep the same intensity (at least in law) for the interactions, a factor $1/\sqrt{h}$ is needed. On the other hand, one needs to renormalize with an other $1/\sqrt{h}$ as in the quantum case of \cite{AP}, since the influence of the environment decreases with the time step $h$. Thus, the value of the electric field is now sampled from $1/{h}\,\big(W(nh)-W((n-1)h)\big)$.

Therefore, the new dynamics of the system is 
\[\begin{cases}
x(nh)= x((n-1)h)+h\dfrac{p((n-1)h)}{m}+ h\dfrac{qE(nh)}{2m}\\
p(nh)=p((n-1)h)  + qE(nh)\,.
  \end{cases}\]
In other words, using vector notations, the dynamics is
\[X(nh)= U^{(h)}(X((n-1)h),E(nh))\,,\]
where the map $U^{(h)}$ is defined by

\[ U^{(h)}(x,y)= x +  \s(x) y +h b(x) + h\eta^{(h)}(x,y)\,,\]
with
\[\s \left(\begin{array}{c} x_1\\ x_2 \end{array}\right)=\left(\begin{array}{c} 0\\ q \end{array}\right)\,, \hspace{10mm}b\left(\begin{array}{c} x_1\\ x_2 \end{array}\right)=\left(\begin{array}{c} \dfrac{x_2}{m}\\ 0 \end{array}\right)\,,\hspace{10mm}\eta^{(h)}(x,y)=\left(\begin{array}{c} \dfrac{qy}{2m}\\ 0 \end{array}\right)\,.\]

\subsection{Harmonic Interaction}

Our second example is another example where Hamilton's equations can be explicitly solved.

Consider two unit mass systems linked by a spring whose spring constant is $1$. Let $l$ be the length for which the potential energy is minimum.
We assume that the two objects can just horizontaly move without friction.

In this case, the Hamiltonian of the two objects is 
\[H \left[ \left(\begin{array}{c}
Q_1\\
P_1
\end{array}\right), \left(\begin{array}{c}
Q_2\\
P_2
\end{array}\right) \right] =\frac{P_1^2}{2} + \frac{P_2^2}{2} + \dfrac{1}{2}( Q_2 - Q_1 - l)^2 \,,\]
where ${P_1^2}/{2}$ is the kinetic energy of the system $1$ and ${P_2^2}/{2}$ the kinetic energy of the system $2$.

The dynamics of the whole system is given by Hamilton's equations
\[\begin{cases}
\dot{Q_1} = \dfrac{\partial H}{\partial P_1}  = P_1\qquad\mbox{and}\qquad
\dot{P_1} = -\dfrac{\partial H}{\partial Q_1} =  Q_2 -  Q_1 -l\\
\\
\dot{Q_2} = \dfrac{\partial H}{\partial P_2}  = P_2\qquad\mbox{and}\qquad
\dot{P_2} = -\dfrac{\partial H}{\partial Q_2} =   -  (Q_2 -Q_1-l)\,.
  \end{cases}
\]
These equations can be solved and give this evolution of the whole system according to the initial conditions.
\begin{align*}
 Q_1(t) =& \dfrac{1}2\left(P_1(0)+P_2(0)\right) t +\dfrac12(Q_1(0) + Q_2(0)- l)+ \\
 &\ \ \ +\dfrac12(Q_1(0)-Q_2(0)+l)\, \cos(\sqrt{2}t) +\dfrac1{2\sqrt2}(P_1(0)-P_2(0))\,\sin(\sqrt{2}t) \\
 P_1(t) =& \dfrac12(P_1(0)+P_2(0))- \dfrac1{\sqrt2}(Q_1(0)-Q_2(0)+l)\, \sin(\sqrt{2}t) + \\
 &\ \ \ +\dfrac12(P_1(0)-P_2(0))\, \cos(\sqrt{2}t)
 \end{align*}
 \begin{align*}
Q_2(t)  =& \dfrac12(P_1(0)+P_2(0)) t +\dfrac12(Q_1(0)+Q_2(0)+ l)+ \\
&\ \ \ -\dfrac12(Q_1(0)-Q_2(0)+l)\, \cos(\sqrt{2}t) - \dfrac1{2\sqrt2}(P_1(0)-P_2(0))\,\sin(\sqrt{2}t)\\
 P_2(t) =& \dfrac12(P_1(0)+P_2(0)) + \dfrac1{\sqrt2}(Q_1(0)-Q_2(0)+l)\,\sin(\sqrt{2}t)+ \\
 &\ \ \ -\dfrac12(P_1(0)-P_2(0))\, \cos(\sqrt{2}t)\,. 
\end{align*}

Let $h$ be a fixed time step. As $h$ is supposed sufficiently small,
an approached expression of the discrete time dynamical system can be computed by a Taylor expansion. 

On the other hand, as we shall be only interested in the dynamics of the system 1, the evolutions of $Q_2$ and $P_2$ are forgotten.
We get
\begin{multline*}
 Q_1(h) = Q_1(0)+h P_1(0) + \dfrac{1}{2}\left(Q_2(0)-Q_1(0)-l\right)h^2+ \hfill \\
 \hfill - \dfrac{1}{6}\left(P_1(0)-P_2(0)\right) h^3  +\circ(h^3)  \\
\ \ P_1(h) =  P_1(0) +\left(Q_2(0)-Q_1(0)-l\right) h+ \dfrac{1}{2} \left(P_2(0)-P_1(0)\right) h^2+\hfill \\
\hfill+ \dfrac{1}{3}\left(Q_1(0)-Q_2(0)+l\right) h^3+\circ(h^3)\,.
\end{multline*}
We now introduce the scheme of repeated interactions.
System $1$ is chosen as the small system. Thus, the space $S$ is $\RR^2$.
System $2$ plays the role of one piece of the environment. The space $E$ is also in this case $\RR^2$.

At time $nh$, the small system is in the state $Q_1(nh)$ and $P_1(nh)$. The values of $Q_2((n+1)h)$ and $P_2((n+1)h)$ are sampled from the increments of a $2$-dimensional Brownian motion $W_t$. One makes the system and the environment interact during a time $h$ with the initial conditions for the environment $Q_2((n+1)h)$ and $P_2((n+1)h)$. The interaction is stopped after a time $h$. One then repeats the procedure.

\smallskip
However, for the same reasons as for the first example, interactions need to be renormalized by a factor ${1}/{h}$.
The Markov chain which gives the evolution of the system becomes
\begin{multline*}
 Q_1(nh) = Q_1((n-1)h)+ \Big(P_1((n-1)h)  + \dfrac{1}{2} Q_2(nh)\Big) h +\hfill \\
  \hfill- \dfrac{1}{2} \Big(Q_1((n-1)h)+l - \dfrac{P_2(nh)}{3}\Big) h^2+ \circ(h^2) \\
 \ \ P_1(nh) = P_1((n-1)h) + Q_2(nh)-\Big(Q_1((n-1)h)+l -\dfrac{1}{2} P_2(nh)\Big) h +\hfill \\ 
\hfill - \Big(\dfrac{P_1((n-1)h)}{2} + \dfrac{Q_2(nh)}{3}\Big)h^2 + \circ(h^2)\,,
\end{multline*}
or, equivalently
\[X(nh)=U^{(h)}(X((n-1)h), Y(nh))\,\]
where
\begin{align*}
U^{(h)}(X,Y)= X + \sigma(X) Y + h b(X) +h \eta^{(h)}(X,Y),
\end{align*}
with
\[b\left(\begin{array}{c} x_1\\x_2  \end{array} \right) = \left(\begin{array}{c} x_2\\ -(x_1+l) \end{array} \right), \hspace{5mm} \sigma \left(\begin{array}{c} x_1\\ x_2 \end{array} \right) =\left(\begin{array}{cc} 0& 0\\ 1& 0 \end{array} \right)\,,\]
and
\[\eta^{(h)}\left[\left(\begin{array}{c} x_1\\ x_2 \end{array}\right), \left(\begin{array}{c} y_1\\ y_2 \end{array}\right) \right] = \dfrac{1}{2}\left(\begin{array}{c} y_1\\ y_2 \end{array}\right)-\dfrac{h}{2} \left(\begin{array}{c} x_1+l-y_2/3  \\ x_2+2y_1/3 \end{array}\right) + \circ(h)\,.\]

Note that this application $U^{(h)}$ is of the same form as in the case of the particle in an uniform electric field.

\bigskip
Note also that the renormalization can be  understood in a more general way as follows. In the two examples above, the interactions are reinforced by changing the states of the environment in the scheme of repeated interactions. But, in the same way as in the quantum case, the renormalization can be understood as a reinforcement of the interaction in the Hamiltonian.

Indeed, the Hamiltonien of the whole system can be written in general way, 
\[H  =\underbrace{H_1}_{System}+ \underbrace{H_2}_{Environment} + \underbrace{I}_{Interaction},\]
where $H_1$ is the part of the Hamiltonian which only depends on the state of the system ( ${P_1^2}/{2} +{Q_1^2}/{2}$ for instance in the case of the harmonic interaction), $H_2$, the part which only depends on the state of the environment (${P_2^2}/{2} +{Q_2^2}/{2}$) and the interaction part $I$ which really depends on the two states ($- Q_1 Q_2$). Our renormalization factor can be viewed as a reinforcement of the interaction term $I$ by multiplying it by this factor ${1}/{h}$. Then, the scheme of repeated interactions is set up from this new Hamiltonian which depends on $h$.

\subsection{Damped Harmonic Oscillator}

The formalism developped in Section 2 is general and it can also be applied to non-Hamiltonian systems. An example based on a change of the harmonic interaction by adding a friction term for the system is presented in this part.

Consider the same system as previously, i.e. two harmonically interacting objects of mass $1$. We assume that the system 1 also undergoes a fluid friction in $-f P_1$ where $f$ is the friction coefficient. Because of this force, the energy is not conserved, and therefore, the system is not Hamiltonian. Moreover, in order to simplify the interaction, the length $l$ of the minimum of the potential energy is supposed to be $0$.

The evolution of the system follows Newton's law of motion,  
\[\begin{cases}
\dot{Q_1}= P_1\\
\dot{P_1}= - f P_1 + Q_2-Q_1\,.
  \end{cases}\]
On the other hand, the environment part stays Hamiltonian and evolves according to the equations 
\[\begin{cases}
\dot{Q_2}= P_2\\
\dot{P_2}= Q_1-Q_2\,.
  \end{cases}\]
For a small time $h>0$, Taylor's expansions and the previous equations lead to a state of the system after a time $h$  
\[\begin{cases}
Q_1(h)=Q_1(0) + h P_1(0)+ {\cal O} (h^2)\\
P_1(h)= P_1(0) +h(- f P_1(0) + Q_2(0)-Q_1(0))+{\cal O}(h^2)\,.
  \end{cases}\]

We now set up the repeated interactions framework.
The space of the system is $\RR^2$. The environment is represented by the chain $(\RR^2)^{h\NN^*}$.
The motion of the system is given by the following Markov chain
\[\begin{cases}
Q_1((n+1)h)=Q_1(nh) + h P_1(nh)+ {\cal O} (h^2)\\
P_1((n+1)h)= P_1(nh) +h(- f P_1(nh) + Q_2(nh)-Q_1(nh))+{\cal O}(h^2)\,.
  \end{cases}\]
The sequence $(Q_2(nh), P_2(nh))_{n \in \NN}$ is sampled from the increments of a $2$-dimensional Brownian motion. However, as previousy, states of the environment are reinforced by a factor $1/h$.
Hence, the Markov chain $(X(nh))$ is defined by
\[X(nh)=U^{(h)}(X((n-1)h), Y(nh))\,\]
where
\begin{align*}
U^{(h)}(X,Y)= X + \sigma(X) Y + h b(X) +h \eta^{(h)}(X,Y),
\end{align*}
with
\[b\left(\begin{array}{c} x_1\\x_2  \end{array} \right) = \left(\begin{array}{c} x_2\\ -x_1 - f x_2 \end{array} \right), \hspace{5mm} \sigma \left(\begin{array}{c} x_1\\ x_2 \end{array} \right) =\left(\begin{array}{cc} 0& 0\\ 1& 0 \end{array} \right)\,,\]
and the remaining terms of Taylor's expansion are grouped together in the application $\eta^{(h)}$.

Note that this last function could be explicitely determined from the equations of Newton's law of motion from which higher derivatives can be expressed.

\section{Convergence of Dynamics}
We leave for the moment our physical examples and come back to the general setup as introduced in Sections 2 and 3.
In Subsection 3.2 a continuous dynamics $\br{T}^{(h)}$ related to $\td{T}^{(h)}$ was defined on $\RR^m \times \O$.
The convergence of $\br{T}^{(h)}$ to a continuous time dynamical system like $(T_t)$ is now studied in this part. 

Each dynamics acts on a product space. Therefore, we examine separately the functions on each component. More precisely, the convergence of the application $\br{\t}_t^{(h)}$ to $\t_t$ for all $t$ on $\O$ is firstly regarded in the next subsection. Then, the one of the processes $\br{X}_t$ to a solution of stochastic differential equation depending particularly on the application $U^{(h)}$ is studied. 

\subsection{Convergence of Shift}

The space of the environment is $\O$, the set of continuous functions from $\RR_+$ to $\RR^d$ vanishing at the origin. If the limit dynamical system exists, the space on which the shift acts has to be $\O$ too. Therefore, we now consider the shift $\t_t$ on $\O$. Recall that it is defined by 
\[\theta_t(\omega)(s)=\omega (t+s) - \omega(t)\,,\]
for all $t$, $s$ in $\RR_+$ and all $\o$.
\smallskip

The convergence of $\br{\t}_t^{(h)}$ to $\t_t$ according to the natural metric $D$ on $\O$ for all $t$ is shown in the next theorem.

\begin{theo}
Let $\omega$ be a function in $\Omega$. For all $t \in \mathbb{R}_+$,
$$
\lim_{h \rightarrow 0}\,D(\theta_t(\omega), \br{\theta}^{(h)}_t(\omega)) = 0\,.
$$
\end{theo}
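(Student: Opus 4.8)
The plan is to reduce the statement to the estimate already established in Lemma 3.2 (a piecewise linear interpolant on the grid $h\NN$ converges uniformly on compact sets to the path it interpolates) together with the uniform continuity of $\o$ on compact intervals.

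First I would unwind the definition of $\br{\t}^{(h)}_t$. Since the shift component of $\td T^{(h)}$ does not involve the first coordinate, the second component of $(\br T^{(h)})^k=\F_I^{(h)}\circ(\td T^{(h)})^k\circ\F_P^{(h)}$ evaluated at $(x,\o)$ is the function $g_k^{(h)}(\o):=\f_I^{(h)}\big((\t^{(h)})^k(\f_P^{(h)}(\o))\big)$, which does not depend on $x$. The $k$-fold shift of the increment sequence $\f_P^{(h)}(\o)$ has $n$-th term $\o((n+k)h)-\o((n+k-1)h)$; substituting this into the telescoping sum defining $\f_I^{(h)}$ gives, after simplification,
\[g_k^{(h)}(\o)(s)=\o\big((\lfloor s/h\rfloor+k)h\big)-\o(kh)+\frac{s-\lfloor s/h\rfloor h}{h}\Big(\o\big((\lfloor s/h\rfloor+1+k)h\big)-\o\big((\lfloor s/h\rfloor+k)h\big)\Big).\]
Comparing with the computation carried out in the proof of Lemma 3.2, the right-hand side is precisely $\f_I^{(h)}\circ\f_P^{(h)}\big(\t_{kh}(\o)\big)(s)$; that is, $g_k^{(h)}(\o)$ is the piecewise linear interpolant on $h\NN$ of the time-shifted path $\t_{kh}(\o)$. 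Using the interpolation formula defining $\br T^{(h)}_t$ and taking second components, with $k=\lfloor t/h\rfloor$ and $\lambda=\big((k+1)h-t\big)/h\in[0,1]$, we get
\[\br{\t}^{(h)}_t(\o)=\lambda\,\f_I^{(h)}\circ\f_P^{(h)}\big(\t_{kh}(\o)\big)+(1-\lambda)\,\f_I^{(h)}\circ\f_P^{(h)}\big(\t_{(k+1)h}(\o)\big).\]

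Next I would apply the triangle inequality
\[D\big(\t_t(\o),\br{\t}^{(h)}_t(\o)\big)\le D\big(\t_t(\o),\t_{kh}(\o)\big)+D\big(\t_{kh}(\o),\br{\t}^{(h)}_t(\o)\big)\]
and treat the two terms separately. For the first one, I would record that $s\mapsto\t_s(\o)$ is continuous from $\RR_+$ into $(\O,D)$: since $\t_s(\o)(u)-\t_{s'}(\o)(u)=\big(\o(s+u)-\o(s'+u)\big)-\big(\o(s)-\o(s')\big)$, the supremum over $u\in[0,n]$ is controlled by the modulus of continuity of $\o$ on $[0,n+s+s']$ and tends to $0$ as $s'\to s$, whence $D(\t_s(\o),\t_{s'}(\o))\to0$ by the same (Lebesgue, or split-the-series) argument as at the end of the proof of Lemma 3.2; since $kh=\lfloor t/h\rfloor h\to t$, this first term vanishes as $h\to0$. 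For the second term, fix $N\in\NN$ and bound $\sup_{0\le s\le N}\big|\t_{kh}(\o)(s)-\br{\t}^{(h)}_t(\o)(s)\big|$, via the convex-combination formula above, by the sum of $\sup_{0\le s\le N}|\eta(s)-\f_I^{(h)}\circ\f_P^{(h)}(\eta)(s)|$ for $\eta=\t_{kh}(\o)$ and for $\eta=\t_{(k+1)h}(\o)$, plus $\sup_{0\le s\le N}|\t_{kh}(\o)(s)-\t_{(k+1)h}(\o)(s)|$. Since $\f_I^{(h)}\circ\f_P^{(h)}(\eta)(s)$ lies on the segment $[\eta(\lfloor s/h\rfloor h),\eta((\lfloor s/h\rfloor+1)h)]$ (proof of Lemma 3.2), each of the first two terms is at most the oscillation of $\eta$ at scale $h$ on $[0,N+1]$; writing $\eta(u)=\o(u+kh)-\o(kh)$ (resp. with $(k+1)h$) and using $kh\le t$, $(k+1)h\le t+1$, all these oscillations are dominated by the modulus of continuity of $\o$ at scale $h$ on the fixed compact $[0,t+N+2]$, hence tend to $0$ uniformly in the admissible values of $k$. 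The third term is handled identically after writing $\t_{kh}(\o)(s)-\t_{(k+1)h}(\o)(s)=\big(\o(s+kh)-\o(s+(k+1)h)\big)-\big(\o(kh)-\o((k+1)h)\big)$, each bracket being a difference of values of $\o$ at points at distance $h$ inside $[0,t+N+1]$. Hence $\sup_{0\le s\le N}|\t_{kh}(\o)(s)-\br{\t}^{(h)}_t(\o)(s)|\to0$ for every $N$, so $D(\t_{kh}(\o),\br{\t}^{(h)}_t(\o))\to0$, and adding the two bounds finishes the proof.

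The only delicate point I anticipate is the first step: the bookkeeping that identifies $\br{\t}^{(h)}_t(\o)$ with a convex combination of the piecewise linear interpolants of the genuinely time-shifted paths $\t_{kh}(\o)$ and $\t_{(k+1)h}(\o)$. Once this identity is in place, the convergence is a soft consequence of Lemma 3.2 and the uniform continuity of $\o$, the only care needed being to keep every compact interval involved fixed, independently of both $h$ and $k$.
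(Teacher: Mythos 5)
Your argument is correct, and it is organized differently from the paper's proof, even though both ultimately rest on the same barycentric identity and the uniform continuity of $\o$ on compacts. The paper expands $\br{\theta}^{(h)}_t(\omega)(s)$ directly as the convex combination of $\f_I^{(h)} \circ (\theta^{(h)})^{\lfloor t/h\rfloor} \circ \f_P^{(h)}(\o)(s)$ and $\f_I^{(h)} \circ (\theta^{(h)})^{\lfloor t/h\rfloor+1} \circ \f_P^{(h)}(\o)(s)$, estimates each term to obtain the pointwise limit $\o(t+s)-\o(t)$, and then passes to the supremum on $[0,n]$ and to the metric $D$. You instead observe that these two constituents are exactly the piecewise linear interpolants $\f_I^{(h)}\circ\f_P^{(h)}\big(\t_{kh}(\o)\big)$ and $\f_I^{(h)}\circ\f_P^{(h)}\big(\t_{(k+1)h}(\o)\big)$ of the genuinely time-shifted paths, and then split $D\big(\t_t(\o),\br{\t}^{(h)}_t(\o)\big)$ by the triangle inequality into the continuity of the flow $s\mapsto\t_s(\o)$ in $(\O,D)$ (with $kh\to t$) plus interpolation errors controlled, as in the paper's Lemma on $\f_I^{(h)}\circ\f_P^{(h)}$, by the modulus of continuity of $\o$ at scale $h$ on a compact fixed independently of $h$ and $k$. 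What your route buys is precisely this last point: the uniformity on $[0,N]$ is made explicit through fixed compacts and moduli of continuity, whereas the paper's passage from pointwise convergence to the supremum bound is stated rather tersely (``by uniform continuity of these applications''); what the paper's route buys is self-containedness, since it never needs the identification with shifted paths nor the continuity of the flow. One harmless caveat: for $\lfloor t/h\rfloor=0$ the paper's convention $(\overline{T}^{(h)})^0=\mathrm{Id}$ makes the zeroth constituent equal to $\o$ rather than $\f_I^{(h)}\circ\f_P^{(h)}(\o)$, so your identity should be invoked only once $h<t$ (the case $t=0$ being trivial); since the limit is taken as $h\to0$ with $t$ fixed, this does not affect the argument, and the paper's own computation glosses over the same point.
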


\begin{proof}
As for the proof of the lemma 1, this result is also based on the convergence of piecewise linear functions to continuous one on compact sets. However, note that two linear interpolations constitute the definition of $\br{\t}_t^{(h)}$ instead of one: one is due to the injection $\f_I^{(h)}$ and the other one is due to the construction of the continuous dynamics on $\RR^m \times \O$.

\smallskip
For all $\omega \in \Omega$ and for all $t$ and $s$ in $\mathbb{R}_+$,
we start by computing $\theta_t(\omega)(s) - \br{\theta}^{(h)}_t(\omega)(s)$.
By definition of $\br{T}^{(h)}_t$, the point $\br{\theta}^{(h)}_t(\omega)(s)$ is obtained by linear interpolation between $\f_I^{(h)} \circ (\theta^{(h)})^{\lfloor t/h\rfloor} \circ \f_P^{(h)}(\o) (s)$ and $\f_I^{(h)} \circ (\theta^{(h)})^{(\lfloor t/h\rfloor +1)} \circ \f_P^{(h)} (\o) (s)$.
Therefore, let us compute these two values. We have
\begin{multline*}
\f_I^{(h)} \circ (\theta^{(h)})^{\lfloor t/h\rfloor} \circ \f_P^{(h)} (\omega) (s) =
\omega((\lfloor t/h\rfloor +\lfloor s/h\rfloor)h) - \omega(\lfloor t/h\rfloor h)+\hfill\\ 
\hfill + \frac{s-\lfloor s/h\rfloor h}{h}\Big{\{}\omega((\lfloor t/h \rfloor + \lfloor s/h \rfloor +1)h) - \omega((\lfloor t/h \rfloor+\lfloor s/h \rfloor)h)\Big{\}}\,.
\end{multline*}
On the other hand,
\begin{multline*}
\f_I^{(h)} \circ (\theta^{(h)})^{(\lfloor t/h\rfloor +1)} \circ \f_P^{(h)} (\omega) (s) =
\omega((\lfloor t/h\rfloor +\lfloor s/h\rfloor+1)h) - \omega((\lfloor t/h\rfloor+1) h)+\hfill\\ 
\hfill+ \frac{s-\lfloor s/h\rfloor h}{h}\Big{\{}\omega((\lfloor t/h \rfloor + \lfloor s/h \rfloor +2)h) - \omega((\lfloor t/h \rfloor+\lfloor s/h \rfloor +1)h)\Big{\}}\,.
\end{multline*}
Since $\br{\theta}^{(h)}_t(\omega)(s)$ is a barycenter between these two points whose coefficients are given by the linear interpolation on $t$, then
\begin{align*}
\br{\theta}^{(h)}_t(\omega)(s) &=
\frac{(\lfloor t/h\rfloor+1)h - t}{h}(\f_I^{(h)} \circ (\theta^{(h)})^{\lfloor t/h\rfloor} \circ \f_P^{(h)})(\omega) (s)+\\
&\ + \frac{t-\lfloor t/h\rfloor h}{h} (\f_I^{(h)} \circ (\theta^{(h)})^{(\lfloor t/h\rfloor +1)} \circ \f_P^{(h)})(\omega) (s)\\
&=\frac{(\lfloor t/h\rfloor+1)h - t}{h} \Big{\{} \omega((\lfloor t/h\rfloor+\lfloor s/h\rfloor)h)+\\ 
&\ +\frac{s-\lfloor s/h\rfloor h}{h}\Big{[}\omega((\lfloor t/h \rfloor + \lfloor s/h \rfloor +1)h) - \omega((\lfloor  t/h \rfloor+\lfloor  s/h \rfloor)h)\Big{]}\Big{\}}+\\ 
&\ + \frac{t-\lfloor t/h\rfloor h}{h}\Big{\{}\omega((\lfloor t/h\rfloor +\lfloor s/h\rfloor+1)h)+\\ 
&\ +\frac{s-\lfloor s/h\rfloor h}{h}\Big{[}\omega((\lfloor t/h \rfloor+ \lfloor s/h \rfloor +2)h) - \omega((\lfloor t/h \rfloor+\lfloor s/h \rfloor+1)h)\Big{]}\Big{\}}+\\
&\ - \frac{( \lfloor t/h\rfloor +1)h - t}{h}\,\omega(\lfloor t/h\rfloor h)- \frac{t-\lfloor t/h\rfloor h}{h} \,\omega((\lfloor t/h\rfloor+1) h)\,.
\end{align*}
As seen in the proof of the lemma 1, the term 
$$- \frac{(\lfloor t/h\rfloor+1)h - t}{h}\,\omega(\lfloor t/h\rfloor h)- \frac{t-\lfloor t/h\rfloor h}{h}\,\omega((\lfloor t/h\rfloor+1) h)$$ 
tends to $-\omega(t)$ when $h$ goes to $0$.

We just have to prove now that the other terms converge to $\omega(t+s)$.
As the function $\omega$ is continuous, we have
\[ \lim_{h \rightarrow 0}\left|\omega((\lfloor t/h \rfloor + \lfloor s/h \rfloor +1)h) - \omega((\lfloor t/h \rfloor+\lfloor s/h \rfloor)h)\right| =0\,.\]
For the same reason, 
\[\lim_{h \rightarrow 0}\left|\omega((\lfloor t/h \rfloor+ \lfloor s/h \rfloor +2)h) - \omega((\lfloor t/h \rfloor+\lfloor s/h \rfloor+1)h)\right|=0\,.\] 
On the other hand,
\[0 \leq \frac{(\left[t/h\right]+1)h - t}{h} \leq 1\,,\hspace{5mm} 0 \leq \frac{t-\left[t/h\right]h}{h} \leq 1\,, \]
and, obviously,
\[  0 \leq \frac{(\left[s/h\right]+1)h - s}{h} \leq 1,\hspace{5mm} 0 \leq \frac{s-\left[s/h\right]h}{h} \leq 1\,, \]
Therefore,
\begin{multline*}
\lim_{h \rightarrow 0}\, \left|\frac{(\lfloor t/h\rfloor+1)h - t}{h} \right|\,\left|\frac{s-\lfloor s/h\rfloor h}{h}\right|\times\hfill\\
\hfill\times\left|\omega((\lfloor t/h \rfloor+ \lfloor s/h \rfloor +2)h) - \omega((\lfloor t/h \rfloor+\lfloor s/h \rfloor+1)h)\right| =0\,,
\end{multline*}
and
\begin{multline*}
\lim_{h \rightarrow 0} \, \left|\frac{t-\lfloor t/h\rfloor h}{h} \right|\,\left|\frac{s-\lfloor s/h\rfloor h}{h}\right| \times \hfill \\
\hfill \times \left|\omega((\lfloor t/h \rfloor + \lfloor s/h \rfloor +1)h) - \omega((\lfloor  t/h \rfloor+\lfloor  s/h \rfloor)h)\right| =0\,.
\end{multline*}
For the same reasons as previously, the remaining terms 
$$\frac{(\lfloor t/h\rfloor+1)h - t}{h} \omega((\lfloor t/h\rfloor+\lfloor s/h\rfloor)h)+\frac{t-\lfloor t/h\rfloor h}{h}\omega((\lfloor t/h\rfloor +\lfloor s/h\rfloor+1)h)$$ 
tend to $\o(t+s)$.

As a conclusion, for all $t$, $s$ in $\RR_+$, we have
 \[\lim_{h \rightarrow 0} \, \left| \theta_t(\omega)(s) - \br{\theta}^{(h)}_t(\omega)(s) \right| = 0\,.\]
Since the interval $[0,n]$  is compact, by uniform continuity of these applications
\[\lim_{h \rightarrow 0}\,\underset{0\leq s \leq n}{\sup} \left| \theta_t(\omega)(s) - \br{\theta}^{(h)}_t(\omega) (s) \right| =0\,.\]
Finally, by Lebesgue's Theorem,
\[\lim_{h \rightarrow 0}\,D (\theta_t(\omega), \br{\theta}^{(h)}_t(\omega)) = 0\,.\]
The theorem is proved. \end{proof}

Eventually, the application $\br{\theta}^{(h)}_t$ converges to the shift $\t_t$ when $h$ goes to $0$ for all $t$ on $\O$ whatever be the continuous time dynamical system as long as the noise of the stochastic differential equation is a $d$-dimensional Brownian motion whose canonical space is $\O$.

\subsection{$L^{p}$ and Almost-Sure Convergence}

After having studied the convergence of the shift, we want now to give conditions on $U^{(h)}$ for the  $L^{p}$ and almost sure convergence, on every time interval $[0,\T]$, of the process $\br{X}^h_t$, the first component of $\br{T}^h_t$, to the solution $X_t$ of a SDE.

\smallskip
As the process $\br{X}^h_t$ is just a linearly interpolated Markov chain, this convergence boils down to the convergence of some schemes of stochastic numerical analysis. Thus, our result belongs to a more general problem, that we shall apply later on to the process $\br{X}^h_t$ in Theorem 5.3.

\smallskip
Consider the solution $X_t$ in $\RR^m$ starting in $X_0$ of the SDE \eqref{eq:A}
\[dX_t = b(X_t) dt + \sigma(X_t) dW_t\,,\]
where $(W_t)$ is a $d$-dimensional Brownian motion, and where the applications $b$ and $\s$, respectively from $\RR^m$ to $\RR^m$ and from $\RR^m$ to $\mathcal{M}_{m,d}(\mathbb{R})$, are Lipschitz. Recall that this assumption is required for the existence and the uniqueness of the solution of the SDE on every time interval $[0, \T]$ and for all initial conditions.

Let $(X^h_{nh})_{n \in \NN}$ be a Markov chain for a time step $h$ whose evolution is given by 
\[X^h_{(n+1)h}= X_{nh}^h+ \sigma(X_{nh}^h)(W_{(n+1)h}- W_{nh}) + h b(X_{nh}^h) + h \eta^{(h)}(X_{nh}^h,W_{(n+1)h}- W_{nh}),\]
with $X^h_0 = x_0$ and where $\eta^{(h)}$ is a measurable application.

Note that the form  of the equation above is identical to the ones previously seen in the physical examples.
Also note that, without the term $\eta^{(h)}$, the scheme above is the usual stochastic Euler one. Hence our context is more general than the usual Euler scheme for the discrete-time approximation of SDE. We have to adapt convergence theorem to this situation.

\bigskip
The Markov chain $(X_{nh}^{h})$ is now linearly interpolated to obtain a continuous time process $(X_t^{h})_{t \in \RR_+}$ defined by
\begin{multline*}
 X_t^h = X_{\lfloor t/h\rfloor h}^h +\dfrac{t-\lfloor t/h\rfloor h}{h} \big{\{} X_{(\lfloor t/h\rfloor +1)h}^h-X_{\lfloor t/h\rfloor h}^h\big{\}}\\
\phantom{X_t^h\ \ \ }= X_{\lfloor t/h\rfloor h}^h +\dfrac{t-\lfloor t/h\rfloor h}{h} \big{\{}\sigma(X_{\lfloor t/h\rfloor h}^h)(W_{(\lfloor t/h\rfloor+1)h}- W_{\lfloor t/h\rfloor h})+\hfill \\
\hfill +h b(X_{\lfloor t/h\rfloor h}^h)+h \eta^{(h)}(X_{\lfloor t/h\rfloor h}^h,W_{(\lfloor t/h\rfloor+1)h}- W_{\lfloor t/h\rfloor h})\big{\}}\,,
\end{multline*}
for all $t$, and with $X^h_0=x_0$.

For the convergence of the process $X^h_t$, we make one more assumption to control the term in  $\eta^{(h)}$.
\begin{enumerate}
\item[(H4)] There exist $\alpha \in ]0, + \infty]$ and $K_2$ such that
\[\left|\eta^{(h)}(x,y) \right| \leq K_2 (h^{\alpha}\left|x\right| + \left|y\right|)\,.\]
\end{enumerate}
We start with the main result on the convergence of processes in the case of globally Lipschitz functions.
\begin{theo}\label{globally}
For all $\T>0$ and for all $p>2$, under Assumptions (H1) and (H4), the process $(X_t^h)$ converges in $L^{p}$ to the solution $(X_t)$ of the SDE \eqref{eq:A} on $\left[0,\T\right]$.

More precisely, for $h$ small enough and $p=2q$ with $q>1$,
\[\mathbb{E}\Big{[}\big( \sup_{t \in \left[0,\T\right]} \left| X_t - X^h_t \right|\big)^{2q}\Big{]} \leq C (h^{2q\alpha} + h^{q-1} (-\log h)^q)\,\]
where $C$ is a constant depending only on $\T$, $q$, $K_0$ and $K_2$.

Moreover, if $q$ is such that $q>2$ and $2q\alpha >1$, then the convergence is almost sure on $\left[0,\T\right]$.
\end{theo}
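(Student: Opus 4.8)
The plan is to run a Picard/Grönwall-type estimate comparing the interpolated scheme $X^h_t$ with the true solution $X_t$, treating the extra drift term $h\eta^{(h)}$ as a perturbation that is controlled by (H4), and borrowing from the classical analysis of the stochastic Euler scheme for the remaining terms. First I would introduce the step-function (piecewise-constant) companion process $\widehat X^h_t := X^h_{\lfloor t/h\rfloor h}$, so that on each interval the scheme can be written in integral form
\[
X^h_t = x_0 + \int_0^t b(\widehat X^h_s)\,ds + \int_0^t \s(\widehat X^h_s)\,dW_s + R^h_t,
\]
where $R^h_t$ collects (i) the linear-interpolation correction to the increments, and (ii) the accumulated $h\,\eta^{(h)}$ contributions. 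I would first establish an a priori moment bound $\mathbb{E}[\sup_{s\le\T}|X^h_s|^{2q}]\le C$ uniformly in $h$, using (H1)'s linear growth, (H4), the Burkholder–Davis–Gundy inequality, and a discrete Grönwall lemma; this is standard but needs (H4) to absorb the $|x|$-part of $\eta^{(h)}$ into the growth constant.

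Next I would bound the error. Write $e^h_t = X_t - X^h_t$. Using $X_t = x_0 + \int_0^t b(X_s)ds + \int_0^t \s(X_s)dW_s$, subtract, and split $b(X_s)-b(\widehat X^h_s) = (b(X_s)-b(X^h_s)) + (b(X^h_s)-b(\widehat X^h_s))$, likewise for $\s$. The first pieces give $\int_0^t K_0|e^h_s|\,(\cdots)$ terms handled by Grönwall. The second pieces measure the oscillation $|X^h_s - \widehat X^h_s|$ over one time step; by the scheme's definition this increment is $O(|\s(\widehat X^h)(W_{s}-W_{\lfloor s/h\rfloor h})| + h|b| + h|\eta^{(h)}|)$, and taking $2q$-th moments with BDG plus the modulus-of-continuity estimate for Brownian motion gives a bound of order $h^{q}(-\log h)^q$ for the martingale-type part and $h^{2q}$ for the drift part — this is the origin of the $h^{q-1}(-\log h)^q$ term after dividing by the extra $h$ inside and summing $\T/h$ steps. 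The $\eta^{(h)}$ remainder $R^h$, by (H4), contributes $h\int_0^t(h^\alpha|\widehat X^h_s| + |W_{s}-W_{\lfloor s/h\rfloor h}|)\,ds$ type terms whose $2q$-th moment, using the a priori bound and $\mathbb{E}|W_s - W_{\lfloor s/h\rfloor h}|^{2q}=O(h^q)$, is $O(h^{2q\alpha}) + O(h^{q})$; the dominant piece is $h^{2q\alpha}$. Assembling, $\mathbb{E}[\sup_{s\le t}|e^h_s|^{2q}] \le C(h^{2q\alpha}+h^{q-1}(-\log h)^q) + C\int_0^t \mathbb{E}[\sup_{r\le s}|e^h_r|^{2q}]\,ds$, and Grönwall closes it; taking $\sup$ over $t$ and then expectation inside via Doob/BDG before Grönwall gives the stated $\sup_{t\in[0,\T]}$ form. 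This yields the $L^{2q}$ (hence, by interpolation, $L^p$ for all $2<p\le 2q$) convergence rate.

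For the almost sure statement I would use a Borel–Cantelli argument along the dyadic sequence $h_k = 2^{-k}$: if $q>2$ and $2q\alpha>1$, then $\sum_k \mathbb{E}[\sup_{t\le\T}|X_t-X^{h_k}_t|^{2q}] \le C\sum_k (2^{-2q\alpha k} + 2^{-(q-1)k}k^q) < \infty$ since $2q\alpha>1$ and $q-1>1$, so by Markov's inequality $\sup_{t\le\T}|X_t - X^{h_k}_t|\to 0$ a.s. along $h_k$; a continuity/monotonicity-in-$h$ argument (or refining the exponential sequence) then upgrades this to convergence as $h\to0$ continuously. The main obstacle I anticipate is the sharp handling of the one-step oscillation term $|X^h_s - \widehat X^h_s|$: getting the logarithmic factor right requires the uniform modulus-of-continuity estimate $\mathbb{E}[\sup_{|u-v|\le h,\,u,v\le\T}|W_u-W_v|^{2q}] = O(h^q(-\log h)^q)$ rather than the naive $O(h^q)$, and one must be careful that dividing by the interpolation factor $h$ inside the increment does not destroy the gain — this is precisely where the scheme differs from a plain Euler scheme and is the technical heart of the theorem.
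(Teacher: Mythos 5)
Your $L^{p}$ argument is essentially the paper's own route (which follows Faure's analysis of the Euler scheme): an integral rewriting of the interpolated chain, uniform-in-$h$ moment bounds via linear growth, (H4), Burkholder and a discrete Gronwall, a one-step oscillation estimate of order $h^{q}(-\log h)^{q}$ coming from the supremum of the Brownian increment on a mesh interval, and a Gronwall closure, with the loss of one power of $h$ (giving $h^{q-1}(-\log h)^{q}$) produced by summing the per-interval oscillation bound over the $\lfloor \T/h\rfloor$ intervals when passing to the supremum in $t$. The only cosmetic difference is bookkeeping: you compare with the piecewise-constant companion $\widehat X^{h}$ plus a remainder $R^{h}$, whereas the paper introduces the It\^o-form process $Y^{h}_{t}$ (agreeing with the chain at mesh points) so that its Lemma 5.2 applies, first runs a subarithmetico-geometric recursion at mesh points and only then treats the supremum; these are interchangeable.

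The genuine gap is in the almost-sure part. Along $h_{k}=2^{-k}$ your series converges geometrically for \emph{any} $q>1$ and $\alpha>0$, so the hypotheses $q>2$ and $2q\alpha>1$ are not doing the work you attribute to them; they are needed precisely when one sums over the denser sequence $h=1/k$, as the paper does, where the bound $k^{-(q-1)}(\log k)^{q}+k^{-2q\alpha}$ is summable exactly under $q-1>1$ and $2q\alpha>1$. More importantly, your final step --- upgrading a.s. convergence from the dyadic sequence to continuous $h\to 0$ by a ``continuity/monotonicity-in-$h$ argument'' --- does not exist as stated: the error $\sup_{t\le\T}|X_{t}-X^{h}_{t}|$ is neither monotone nor continuous in $h$ (the schemes for different $h$ are built on different mesh increments of the same Brownian path, and nothing controls their difference for nearby $h$), so this step would fail. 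The fix is simply to do what the paper does: apply your $L^{2q}$ bound with $h=1/k$, use $q>2$ and $2q\alpha>1$ to get summability, and conclude a.s. convergence along that sequence, which is the content of the theorem's last assertion.
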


In order to prove this theorem, we use the same strategy as Faure considered in his PhD thesis (\cite{F}) for the convergence of the explicit Euler scheme, i.e. without the term $h\eta^{(h)}$.
Before hands we need two long and technical lemmas \ref{le2} and \ref{le4} and a property on solution of stochastic differential equation, Lemma \ref{le3}. These lemmas shall be applied in the proof in the case of locally Lipschitz and linearly bounded applications because the linear growth condition shall be the key property.

\smallskip
The first lemma gives an inequality on the $L^{p}$-norm of some stochastic processes.
In order to obtain this result, the definition of a \emph{$L^{p}$-continuous process} is introduced.

A process $Y_t$ is \emph{$L^{p}$-continuous} if the function $t \longmapsto \mathbb{E}\left[\left|Y_t\right|^{p}\right]$ is continuous.

\begin{lemm}\label{le2}
Let $Y_t$ be a process defined by
$Y_t= Y_0 + \int_0^t A_s\, dW_s + \int_0^t B_s\, ds\,,$
where $A_s$ and $B_s$ are $L^{2p}$-continuous, and $\EE \left[\left|Y_0\right|^{p}\right] < \infty$.
Then $Y_t$ is $L^{p}$-continuous.

Moreover,
\begin{align}
\mathbb{E}\left[\left|Y_t\right|^{p}\right] \leq \mathbb{E}\left[\left|Y_0\right|^{p}\right] + C \int_0^t \mathbb{E}\left[\left|Y_s\right|^{p} + \|A_s\|^{p}  + \left|B_s\right|^{p}\right] \,ds\,.\label{22}
\end{align}
\end{lemm}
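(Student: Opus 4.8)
The plan is to establish the moment inequality \eqref{22} first, since the $L^p$-continuity of $Y_t$ will follow from it together with a standard continuity argument. I would start from the decomposition $Y_t = Y_0 + \int_0^t A_s\,dW_s + \int_0^t B_s\,ds$ and apply the elementary inequality $|a+b+c|^p \leq 3^{p-1}(|a|^p+|b|^p+|c|^p)$ to get
\[
\EE\bigl[|Y_t|^p\bigr] \leq 3^{p-1}\EE\bigl[|Y_0|^p\bigr] + 3^{p-1}\,\EE\Bigl[\Bigl|\int_0^t A_s\,dW_s\Bigr|^p\Bigr] + 3^{p-1}\,\EE\Bigl[\Bigl|\int_0^t B_s\,ds\Bigr|^p\Bigr].
\]
For the drift term, Jensen's (or Hölder's) inequality on the finite interval $[0,t]$ gives $\bigl|\int_0^t B_s\,ds\bigr|^p \leq t^{p-1}\int_0^t |B_s|^p\,ds$, so its expectation is bounded by $t^{p-1}\int_0^t \EE[|B_s|^p]\,ds$. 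For the stochastic integral, the key tool is the Burkholder--Davis--Gundy inequality: $\EE\bigl[\sup_{r\le t}|\int_0^r A_s\,dW_s|^p\bigr] \leq C_p\,\EE\bigl[(\int_0^t \|A_s\|^2\,ds)^{p/2}\bigr]$, and then another application of Jensen in the form $(\int_0^t \|A_s\|^2\,ds)^{p/2} \leq t^{p/2-1}\int_0^t \|A_s\|^p\,ds$ reduces this to $C_p\, t^{p/2-1}\int_0^t \EE[\|A_s\|^p]\,ds$. Collecting the three bounds and absorbing all the $t$-dependent prefactors into a single constant $C = C(p,\T)$ valid for $t \in [0,\T]$ yields \eqref{22}; note a harmless $|Y_s|^p$ term can always be added to the right side.

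For the $L^p$-continuity, I would argue that for $s \le t$, $Y_t - Y_s = \int_s^t A_u\,dW_u + \int_s^t B_u\,du$, and by exactly the same BDG-plus-Jensen estimate as above, $\EE[|Y_t-Y_s|^p] \leq C\bigl((t-s)^{p/2-1} + (t-s)^{p-1}\bigr)\int_s^t \EE[\|A_u\|^p + |B_u|^p]\,du$. Since $A$ and $B$ are $L^{2p}$-continuous — hence in particular $u \mapsto \EE[\|A_u\|^p]$ and $u\mapsto\EE[|B_u|^p]$ are locally bounded (continuous functions on compacts) — the right-hand side tends to $0$ as $t \to s$. Combined with $\EE[|Y_0|^p] < \infty$ and the bound \eqref{22} showing $\sup_{t\le\T}\EE[|Y_t|^p] < \infty$, the map $t \mapsto \EE[|Y_t|^p]$ is continuous: write $|Y_t|^p - |Y_s|^p$ and use $\bigl|\,|Y_t|^p - |Y_s|^p\,\bigr| \leq p\,(|Y_t|+|Y_s|)^{p-1}|Y_t-Y_s|$, then Hölder with exponents $p/(p-1)$ and $p$ to bound $\EE\bigl[\bigl|\,|Y_t|^p-|Y_s|^p\,\bigr|\bigr]$ by a constant times $\EE[|Y_t-Y_s|^p]^{1/p}$, which vanishes as $t\to s$.

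The only genuinely delicate point is ensuring the constant $C$ in \eqref{22} is \emph{uniform} in $t$ over $[0,\T]$ rather than blowing up, and that the BDG constant $C_p$ is applied in a way that is consistent with the later uses of this lemma; this is why I restrict to $t \in [0,\T]$ and fold $\T^{p/2-1}$, $\T^{p-1}$, $3^{p-1}$ and $C_p$ all into one constant. Everything else is routine: the powers $p > 2$ make $p/2 - 1 > 0$, so no singular time factors appear, and the $L^{2p}$-continuity hypothesis (stronger than needed for just integrability) is exactly what guarantees the integrands $\EE[\|A_s\|^p]$, $\EE[|B_s|^p]$ are finite and well-behaved so that the dominated/continuity arguments go through.
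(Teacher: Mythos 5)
Your argument for the moment bound does not actually prove inequality \eqref{22} as stated, and the discrepancy matters. Starting from $|a+b+c|^p \le 3^{p-1}(|a|^p+|b|^p+|c|^p)$ you end up with $\mathbb{E}[|Y_t|^p] \le 3^{p-1}\mathbb{E}[|Y_0|^p] + \dots$, whereas \eqref{22} has the initial term with coefficient exactly $1$; no amount of "absorbing prefactors into $C$" fixes this, because $C$ only multiplies the integral, not $\mathbb{E}[|Y_0|^p]$. This sharp coefficient is not cosmetic: in the proofs of Lemma 5.5 and Theorem 5.2 the lemma is applied on each interval $[nh,(n+1)h]$ and iterated over roughly $t/h$ steps, producing a subarithmetico-geometric recursion $x_{n+1}\le(1+Ch)x_n+\gamma$ whose solution stays bounded as $h\to 0$. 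With your version the recursion would read $x_{n+1}\le 3^{p-1}x_n+\gamma$, and iterating gives a factor $3^{(p-1)t/h}$ that blows up, destroying the convergence argument. The paper avoids this by applying It\^o's formula to $|Y_t|^{2q}$ (with $p=2q$), taking expectations so that the stochastic-integral term has zero mean and the initial value enters with coefficient $1$, and then bounding the remaining integrands by $|Y_s|^{2q}+\|A_s\|^{2q}+|B_s|^{2q}$ via elementary inequalities such as $x^{2q-k}y^{k}\le x^{2q}+y^{2q}$; note also that the resulting constant $C$ is independent of $t$, with no restriction to a compact time interval, whereas your constant $C(p,\tau)$ carries factors $t^{p-1}$, $t^{p/2-1}$.

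Your treatment of the $L^p$-continuity is essentially sound and close in spirit to the paper's (which applies It\^o between $s$ and $t$ and then H\"older), and your BDG-plus-Jensen computation is a correct way to see that $\sup_t\mathbb{E}[|Y_t|^p]$ is finite on compacts. So the gap is localized: you need to replace the crude triangle-inequality step for \eqref{22} by an It\^o (or equivalent martingale) argument that preserves the coefficient $1$ in front of $\mathbb{E}[|Y_0|^p]$.
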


\begin{proof}
First, by the convexity of the function $x \longrightarrow x^{q}$, we have for all $x$, $y$, and $z$ non-negative reals 
\[(x+y+z)^{2q} \leq C_{2q} (x^{2q}+ y^{2q}+z^{2q})\,.\]
Hence,
\[\EE\left[\left|Y_t\right|^{2q}\right] \leq C_{2q} \Big{(}\EE\left[\left|Y_0\right|^{2q}\right] + \EE\Big{[}\Big{|}\int_0^t A_s\,dW_s \Big{|}^{2q}\,\Big{]} + \EE\Big{[}\Big| \int_0^t B_s\,ds\, \Big| ^{2q}\,\Big{]}\Big{)} \,.\]
By H\"older's inequality we claim that
\begin{equation}\label{holder}
\EE\Big[\Big|\int_0^t B_s\,ds\, \Big|^{2q}\,\Big] \leq C_0\, t^{2q-1} \int_0^t \EE\left[\l B_s\r^{2q}\,\right]\,ds\,.
\end{equation}
Indeed, first note that 
\[\Big| \int_0^t B_s\,ds\, \Big|^{2q}= \Big( \sum_{i=1}^m \Big|\int_0^t B^i_s\,ds\, \Big|^{2}\,\Big)^q\,.\]
Component by component, we have by H\"older's inequality,
$$
\EE\Big[\Big|\int_0^t B_s^i\,ds\, \Big|^{2q}\,\Big] \leq  t^{2q-1}\EE\Big[\int_0^t \left|B_s^i \right|^{2q} \,ds\Big]\,.$$
By the convexity of the application $t\longmapsto t^q$ we have the announced inequality (\ref{holder}).
For the second term, there exists Burkholder inequality based on It\^o's formula which gives a similar bound:
\[\EE\Big[\Big|\int_0^t A_s\,dW_s \,\Big|^{2q}\,\Big] \leq C_1 t^{q-1}  \int_0^t \EE\left[\| A_s\|^{2q}\right]\,ds\,.\]
Finally, we have obtained the following bound
\[\EE \left[\left|Y_t\right|^{2q}\right] \leq C_{2q} \Big{(}\EE\left[\left|Y_0\right|^{2q}\right] + C_0 t^{2q-1} \int_0^t \EE\left[\l B_s\r^{2q}\right]\,ds + C_1 t^{q-1}  \int_0^t \EE\left[\| A_s\|^{2q}\right]\,ds \Big{)}\,.\]
Now for the $L^{p}$-continuity of this process, It\^o's formula is applied between two times $s, t$ with $s \leq t$ to the process $(Y_t)$.
Indeed, since the application $x \longmapsto \left|x\right|^{2q}$ is twice differentiable for $q\geq 1$, we get
\begin{align*}
\left|Y_t\right|^{2q} =\left|Y_s\right|^{2q} &+ \sum_{i=1}^m \int_s^t 2q \left|Y_u\right|^{2q-2} Y_u^i \,dY_u^i \ +\\
& + \frac{1}{2} \sum_{i\neq j} \int_s^t 2q(2q-2)\left|Y_u\right|^{2q-4}Y_u^i Y_u^j \,d\langle Y^i,Y^j \rangle_u \ +\\
&+ \frac{1}{2} \sum_{i} \int_s^t 2q(2q-2) (Y_u^i)^2 \left|Y_u\right|^{2q-4} + 2q\left|Y_u\right|^{2q-2}\, d\langle Y^i,Y^i \rangle_u
\end{align*}
where $Y_u=(Y_u^i)_{i=1,\cdots,m}$\,.

From the definition of $Y_u$, 
$$dY_u^i= \sum_{j=1}^d A_u^{i,j}\,dW^j_u + B_u^i\, du\,,$$
we have
$$d\langle Y^i,Y^j \rangle_u = \sum_{k=1}^d \sum_{l=1}^d A_u^{i,k} A_u^{j,l}\,d\langle W^k_, W^l \rangle_u = \sum_{k=1}^d A_u^{i,k} A_u^{j,k}\, du\,.$$
Hence we get
\begin{align*}
\left|Y_t\right|^{2q} = \left|Y_s\right|^{2q} &+ 2q \sum_{i=1}^m \int_s^t  \left|Y_u\right|^{2q-2} Y_u^i B_u^i\, du +2q \sum_{i=1}^m \sum_{j=1}^d \int_s^t  \left| Y_u\right|^{2q-2} Y_u^i A_u^{i,j}\,dW^j_u +\\
&+ 2q(q-1) \sum_{i,j} \sum_{k=1}^d \int_s^t \left|Y_u\right|^{2q-4}Y_u^i Y_u^j A_u^{i,k} A_u^{j,k} \,du +\\
&+ p \sum_{i=1}^m \sum_{k=1}^d \int_s^t \left|Y_u\right|^{2q-2}( A_u^{i,k})^2 \,du\,.
\end{align*}
Taking the expectation, we obtain
\begin{align*}
\mathbb{E}\left[\left|Y_t\right|^{2q}\right] =& \mathbb{E}\left[\left|Y_s\right|^{2q}\right] + 2q \sum_{i=1}^m \int_s^t 
\mathbb{E}\left[\left|Y_u\right|^{2q-2} Y_u^i B_u^i\right]\,du+ \nonumber \\
 &\qquad+ 2q(q-1) \sum_{i,j} \sum_{k=1}^d \int_s^t \EE\left[\left|Y_u\right|^{2q-4}Y_u^i Y_u^j A_u^{i,k} A_u^{j,k}\right] \,du + \nonumber\\
&\qquad+ p \sum_{i=1}^m \sum_{k=1}^d \int_s^t \EE\left[\left|Y_u\right|^{2q-2}( A_u^{i,k})^2 \right]\,du\,.
\end{align*}
Hence,
\begin{multline}
\left|\mathbb{E}\left[\left|Y_t\right|^{2q}\right] - \mathbb{E}\left[\left|Y_s\right|^{2q}\right]\right|\leq  C_1 \int_s^t 
\mathbb{E}\left[\left|Y_u\right|^{2q-1} \left| B_u\right|\right]\,du + \hfill \\
\hfill + C_2 \int_s^t \EE\left[\left|Y_u\right|^{2q-2} \|A_u\|^2\right] \,du\,.\label{44}
\end{multline}
Consider the first term in Inequality (\ref{44}). By H\"older's inequality we have
\[\mathbb{E}\left[\left|Y_u\right|^{2q-1} \left| B_u\right|\right] \leq \EE\left[\left|Y_u\right|^{2q}\right]^{(2q-1)/2q}\ \EE\left[\left| B_u\right|^{2q}\right]^{1/2q}\,.\]
In the same way for the second term in (\ref{44}) we get
 \[\mathbb{E}\left[\left|Y_u\right|^{2q-2} \|A_u\|^2\right] \leq \EE\left[\left|Y_u\right|^{2q}\right]^{(q-1)/q}\ \EE\left[\|A_u\|^{2q}\right]^{1/q}\,.\]
Therefore, 
 \begin{multline*}
\left|\mathbb{E}\left[\left|Y_t\right|^{2q}\right] - \mathbb{E}\left[\left|Y_s\right|^{2q}\right]\right|\leq  C_1 \int_s^t 
\EE\left[\left|Y_u\right|^{2q}\right]^{(2q-1)/2q}\ \EE\left[\left| B_u\right|^{2q}]\right]^{1/2q}\,du + \hfill\\
\hfill + C_2 \int_s^t \EE\left[\left|Y_u\right|^{2q}\right]^{(q-1)/q}\ \EE\left[\|A_u\|^{2q}\right]^{1/q} \,du\,.
\end{multline*}
Since $\EE\left[\left|Y_u\right|^{2q}\right]$ is bounded for all $s\leq u \leq t$ and by the $L^{p}$-continuity of $A_u$ and $B_u$, one can conclude that the process $Y_t$ is $L^{p}$-continuous.

Let us proceed now with the proof of Inequality (\ref{22}).
By It\^o's formula between $t$ and $0$ we have
\begin{align*}
\left|Y_t\right|^{2q} =\left|Y_0\right|^{2q} &+ \sum_{i=1}^m \int_0^t 2q \left|Y_s\right|^{2q-2} Y_s^i dY_s^i + \\
&+\frac{1}{2} \sum_{i\neq j} \int_0^t 2q(2q-2)\left|Y_s\right|^{2q-4}Y_s^i Y_s^j d\langle Y^i,Y^j \rangle_s +\\
&+ \frac{1}{2} \sum_{i} \int_0^t 2q(2q-2) (Y_s^i)^2 \left|Y_s\right|^{2q-4} + 2q\left|Y_s\right|^{2q-2} d\langle Y^i,Y^i \rangle_s\,.
\end{align*}
Hence, taking the expectation we get
\begin{align}
\mathbb{E}\left[\left|Y_t\right|^{2q}\right] = \mathbb{E}\left[\left|Y_0\right|^{2q}\right] &+ 2q \sum_{i=1}^m \int_0^t 
\mathbb{E}\big[\left|Y_s^i\right|^{2q-2} Y_s^i B_s^i\big]\,ds +\nonumber \\
 &+ 2q(q-1) \sum_{i,j} \sum_{k=1}^d \int_0^t \EE\left[\left|Y_s\right|^{2q-4}Y_s^i Y_s^j A_s^{i,k} A_s^{j,k}\right] \,ds +\nonumber\\
&+ q \sum_{i=1}^m \sum_{k=1}^d \int_0^t \EE\left[\left|Y_s\right|^{2q-2}( A_s^{i,k})^2 \right]\,ds\,.\label{55}
\end{align}
Let us start with the last term of the (\ref{55}).
First recall that
\[\sum_{i=1}^m \sum_{k=1}^d ( A_s^{i,k})^2 = \|A_s \|^2\,.\]
Note that, for all $k$ in $\left[\!\left| 0 , 2q \right|\!\right]$ and $x,y$ in $\mathbb{R}$, we claim that $x^{2q-k}y^{k} \leq x^{2q} + y^{2q}$. Indeed we have  $Y^k \leq 1+ Y^{2q}$, for all $Y$ in $\mathbb{R}$ and we apply it to $Y= y/x$ (if $x=0$, the inequality is  clearly true).
Therefore we get
\[\sum_{i=1}^m \sum_{k=1}^d \left|Y_s\right|^{2q-2} ( A_s^{i,k})^2 \leq \left|Y_s\right|^{2q} + \|A_s \|^{2q}\,.\]
Now consider the first term in (\ref{55}). If the scalar product on $\RR^m$ is denoted by $(\,,\,)$, then
\[
\sum_{i=1}^m \left|Y_s\right|^{2q-2} Y_s^i B_s^i = \left|Y_s\right|^{2q-2} (Y_s,B_s) \leq \left|Y_s\right|^{2q-1} \left|B_s \right|\,.\]
From the previous inequality, one obtains 
\[\sum_{i=1}^m \left|Y_s\right|^{2q-2} Y_s^i B_s^i \leq  \left|Y_s\right|^{2q} + \left|B_s\right|^{2q}\,.\]
Let us  consider now the last term of (\ref{55}).
Note that,
\[ \left|Y_s\right|^{2q-4} Y_s^i Y_s^j A_s^{i,k} A_s^{j,k} \leq \left|Y_s\right|^{2q-2} \|A_s \|^{2}\,.\]
Hence,
\[\sum_{i,j} \sum_{k=1}^d  \left|Y_s\right|^{2q-4} Y_s^i Y_s^j A_s^{i,k} A_s^{j,k}  \leq C_3( \left|Y_s\right|^{2q} +  \|A_s\|^{2q})\,.\]
In conclusion, there exists a constant $C$ such that
\[\mathbb{E}\left[\left|Y_t\right|^{2q}\right] \leq \mathbb{E}\left[\left|Y_0\right|^{2q}\right] + C \int_0^t \mathbb{E}\left[\left|Y_s\right|^{2q} + \|A_s\|^{2q}  + \left|B_s\right|^{2q}\right]\,ds\,.\]
The lemma is proved.
\end{proof}

The next lemma (proved in \cite{KP}) gives some regularities of trajectories of solutions $X_t$ of the SDE \eqref{eq:A}.

\begin{lemm}\label{le3}

Let $(X_t)$ be the solution of \eqref{eq:A} for all $t \in \left[0,\T\right]$. Suppose that the maps $b$ and $\s$ are locally Lipschitz (H2) and linearly bounded (H3) . 

Then, for all $t \in \left[0,\T\right]$ and for all $q\leq 1$,
\begin{align}
\mathbb{E}\left[\left| X_{t}\right|^{2q}\right] \leq (1+ \mathbb{E}\left[\left| x_0 \right|^{2q}\right]) e^{Ct},\label{3311}
\end{align}
and, for all $t$, $s$ such that $t\geq s$,
\begin{align}
\mathbb{E}\left[\left| X_{t}-X_s\right|^{2q}\right] \leq D  (1+ \mathbb{E}\left[\left| x_0 \right|^{2q}\right]) (t-s)^q e^{C (t-s)},\label{3322}
\end{align}
where $C$ and $D$ are positive constants depending only on $\T$, $q$ and $K_1$.

Moreover, for all $\T>0$
\begin{align*}
\mathbb{E}\left[\sup_{t \in \left[ 0,\T\right]}\left| X_{t}\right|^{2q}\right] < + \infty
\end{align*}
\end{lemm}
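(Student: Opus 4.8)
The plan is to derive the three estimates by the standard route of \cite{KP}: apply the moment inequality \eqref{22} of Lemma~\ref{le2} (equivalently, It\^o's formula for $\l X_t\r^{2q}$) to a localized version of the solution, feed in the linear growth bound (H3), and close the estimate with Gronwall's lemma; the increment bound and the supremum bound then follow from H\"older's inequality and the Burkholder and Burkholder--Davis--Gundy inequalities together with \eqref{3311}. Throughout one takes $q\geq 1$, which is what makes $x\longmapsto\l x\r^{2q}$ twice differentiable and Lemma~\ref{le2} applicable (the condition written $q\leq 1$ in the statement has to be read as $q\geq 1$).

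For the first bound \eqref{3311}, recall that under (H2)--(H3) the SDE \eqref{eq:A} has a unique, non-exploding solution. Introduce the localizing stopping times $\rho_N=\inf\{t\geq 0:\l X_t\r\geq N\}$, so that $\rho_N\uparrow\infty$ almost surely, and set $X^N_t=X_{t\wedge\rho_N}$. Then $X^N$ solves an equation with \emph{bounded} coefficients $\s(X_s)\mathds{1}_{\{s\leq\rho_N\}}$ and $b(X_s)\mathds{1}_{\{s\leq\rho_N\}}$, so \eqref{22} applies and gives
\[\EE\big[\l X^N_t\r^{2q}\big]\leq \EE\big[\l x_0\r^{2q}\big]+C\int_0^t\EE\big[\l X^N_s\r^{2q}+\|\s(X^N_s)\|^{2q}+\l b(X^N_s)\r^{2q}\big]\,ds\,.\]
By (H3), $\|\s(x)\|^{2q}+\l b(x)\r^{2q}\leq C'(1+\l x\r^{2q})$, so $u_N(t):=\EE[\l X^N_t\r^{2q}]$ satisfies $1+u_N(t)\leq(1+\EE[\l x_0\r^{2q}])+C''\int_0^t(1+u_N(s))\,ds$, and Gronwall's lemma yields $1+u_N(t)\leq(1+\EE[\l x_0\r^{2q}])e^{C''t}$, uniformly in $N$. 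Letting $N\to\infty$ and using Fatou's lemma proves \eqref{3311}; in particular $M_\T:=\sup_{t\in[0,\T]}(1+\EE[\l X_t\r^{2q}])\leq(1+\EE[\l x_0\r^{2q}])e^{C\T}<\infty$.

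For the increment bound \eqref{3322}, write $X_t-X_s=\int_s^t b(X_u)\,du+\int_s^t\s(X_u)\,dW_u$ for $t\geq s$ and estimate the $2q$-th moments of the two terms exactly as in the proof of Lemma~\ref{le2} (H\"older on the drift, Burkholder on the stochastic integral):
\begin{multline*}
\EE\big[\l X_t-X_s\r^{2q}\big]\leq C\Big((t-s)^{2q-1}\int_s^t\EE[\l b(X_u)\r^{2q}]\,du\\
+(t-s)^{q-1}\int_s^t\EE[\|\s(X_u)\|^{2q}]\,du\Big)\,.
\end{multline*}
Using (H3) and the bound $M_\T$, each integral is at most $C'M_\T(t-s)$, so the right-hand side is $\leq C''M_\T\big((t-s)^{2q}+(t-s)^q\big)\leq C''(1+\T^q)M_\T(t-s)^q$ for $t-s\leq\T$, which is of the form \eqref{3322} after renaming constants.

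Finally, for the supremum bound, write $X_t=x_0+\int_0^t b(X_u)\,du+I_t$ with $I_t=\int_0^t\s(X_u)\,dW_u$, so that
\[\sup_{t\in[0,\T]}\l X_t\r^{2q}\leq C\Big(\l x_0\r^{2q}+\Big(\int_0^\T\l b(X_u)\r\,du\Big)^{2q}+\sup_{t\in[0,\T]}\l I_t\r^{2q}\Big)\,.\]
By H\"older, $\EE[(\int_0^\T\l b(X_u)\r\,du)^{2q}]\leq\T^{2q-1}\int_0^\T\EE[\l b(X_u)\r^{2q}]\,du<\infty$ by (H3) and \eqref{3311}, and by the Burkholder--Davis--Gundy inequality, $\EE[\sup_{t\leq\T}\l I_t\r^{2q}]\leq C_q\EE[(\int_0^\T\|\s(X_u)\|^2\,du)^q]\leq C_q\T^{q-1}\int_0^\T\EE[\|\s(X_u)\|^{2q}]\,du<\infty$, again by (H3) and \eqref{3311}; hence $\EE[\sup_{t\in[0,\T]}\l X_t\r^{2q}]<\infty$. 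The only genuinely delicate point is that one may not apply It\^o's formula (or Lemma~\ref{le2}) to $\l X_t\r^{2q}$ before knowing it is integrable; the localization by $\rho_N$ together with Fatou's lemma is exactly what removes that obstruction, and everything else is a routine combination of H\"older, Burkholder(--Davis--Gundy) and Gronwall.
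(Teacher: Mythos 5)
Your proof is correct, and it is worth noting that the paper does not actually prove this lemma at all: it is quoted with a citation to Kloeden--Platen \cite{KP}, so there is no internal proof to compare against. Your argument is precisely the standard one behind that reference (and the one the paper itself reuses for the discrete analogue in Lemma \ref{le4}): localize by the exit times $\rho_N$, apply the moment inequality \eqref{22} (equivalently It\^o's formula for $\left|x\right|^{2q}$) with the linear growth bound (H3), close with Gronwall and pass to the limit by Fatou, then get the increment bound from H\"older plus Burkholder and the supremum bound from H\"older plus Burkholder--Davis--Gundy. Your reading of the hypothesis as $q\geq 1$ (rather than the misprinted $q\leq 1$) is the right one, since this is what is needed both for the smoothness of $x\longmapsto \left|x\right|^{2q}$ and for the later applications with $q>1$. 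The only cosmetic caveat is that Lemma \ref{le2} as stated assumes $L^{2p}$-continuity of the integrands, which the stopped coefficients $b(X_s)\mathds{1}_{\{s\leq\rho_N\}}$, $\sigma(X_s)\mathds{1}_{\{s\leq\rho_N\}}$ need not literally satisfy at the jump of the indicator; but, as you remark, the bound \eqref{22} for the stopped (hence bounded) process follows directly from It\^o's formula and optional stopping, so this does not affect the validity of the argument.
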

For the convergence of the process $X^h_t$ to $X_t$ the main tool shall be Lemma \ref{le2}.
However, notice that the evolution of $X^h_t$ doesn't allow to apply this lemma because of the linear interpolation. More precisely, for all $n$, between the times $nh$ and $(n+1)h$, this process is not of the form $X_{nh}^h + \int_{nh}^t A_s\, dW_s + \int_{nh}^t B_s\, ds$. Therefore, in order to apply it, it's natural to introduce the new process $Y^h_t$ defined by
\begin{multline*}
Y^h_{t} = Y^h_{\lfloor t/h \rfloor h} + \int_{\lfloor t/h \rfloor  h}^{t} b(Y^h_{\lfloor t/h \rfloor  h}) +\eta^h( Y^h_{\lfloor t/h \rfloor  h}, W_{(\lfloor t/h \rfloor  +1)h} - W_{\lfloor t/h \rfloor  h}))\,ds + \hfill\\
 \hfill +\int_{\lfloor t/h \rfloor  h}^{t} \sigma(Y^h_{\lfloor t/h \rfloor  h}) \,dW_s\,.
 \end{multline*}
Note that $X^h_{(n +1)h}=Y^h_{(n +1)h}$ for all $n$.

\smallskip
The last lemma gives equivalent bounds as in Lemma \ref{le3} but for the processes $(X^h_t)$ and $(Y_t^h)$.

\begin{lemm}\label{le4}
Let $(X^h_t)$ and $(Y^h_t)$ be the processes defined above.
Suppose that the maps $b$ and $\s$ are locally Lipschitz and linearly bounded. Moreover, suppose that Hypothesis (H4) are satisfied. Then, for all $t \in \left[0,\T\right]$ and all $q \geq 1$,
\begin{align*}
\mathbb{E}\big[\left|X^h_t\right|^{2q}\big] \leq C_0\big(1+\mathbb{E}\big[\left|X^h_0\right|^{2q}\big]\big)e^{C_1t}\,,
\end{align*}
and,
\begin{align}
\mathbb{E}\big[\left|Y^h_t\right|^{2q}\big] \leq C_0\big(1+\mathbb{E}\big[\left|X^h_0\right|^{2q}\big]\big)e^{C_1t}\,.\label{411}
\end{align}
Moreover, for all $t$ and for $h$ small enough,
\begin{align*}
\mathbb{E}\big[\left|X^h_t-X^h_{\lfloor t/h\rfloor h}\right|^{2q}\big] \leq C_2 (h^{2q} + h^q(-\log h)^q)\,,
\end{align*}
and,
\begin{align}
\mathbb{E}\big[\left|Y^h_t-Y^h_{\lfloor t/h\rfloor h}\right|^{2q}\big] \leq C_2 (h^{2q} + h^q(-\log h)^q)\,.\label{422}
\end{align}
On the other hand, for all $h\leq h_0$,
\begin{align*}
\EE\big[ (\sup_{ t \leq \T} \l Y^h_t \r)^{2q} \big] \leq C_3 \big(1+\mathbb{E}\big[\left|X^h_0\right|^{2q}\big]\big)\,,
\end{align*}
where $(C_i)$ are constants depending only on $\T$, $q$, $K_1$ and $K_2$.
\end{lemm}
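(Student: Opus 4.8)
The plan is to establish the four types of bounds in Lemma~\ref{le4} essentially by bootstrapping Lemma~\ref{le2} together with a discrete Gr\"onwall argument, treating the two processes $(X^h_t)$ and $(Y^h_t)$ in parallel since they agree at the grid points $nh$ and differ only through which linear/stochastic interpolant is used between grid points. The starting observation is that on each interval $[nh,(n+1)h)$ the process $Y^h_t$ \emph{is} of the form required by Lemma~\ref{le2}, with frozen integrands $A_s=\sigma(Y^h_{nh})$ and $B_s=b(Y^h_{nh})+\eta^{(h)}(Y^h_{nh},W_{(n+1)h}-W_{nh})$; these are constant in $s$ on the interval, hence trivially $L^{2q}$-continuous there, so inequality~\eqref{22} applies on each such piece. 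Using (H3) to bound $\|\sigma(Y^h_{nh})\|^{2q}\leq C(1+\EE[|Y^h_{nh}|^{2q}])$, (H3) again and (H4) to bound $\EE[|B_s|^{2q}]\leq C(1+\EE[|Y^h_{nh}|^{2q}]+\EE[|W_{(n+1)h}-W_{nh}|^{2q}])\leq C(1+\EE[|Y^h_{nh}|^{2q}]+h^q)$, one gets a recursion of the shape $u_{n+1}\leq (1+Ch)u_n+Ch$ with $u_n=\EE[|Y^h_{nh}|^{2q}]$ (the $h^{2q\alpha}|x|$ term from (H4) is absorbed into the $u_n$ coefficient for $h$ small). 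Iterating yields $u_n\leq C_0(1+u_0)e^{C_1nh}$, which is~\eqref{411}; the same bound for $X^h_t$ between grid points follows because $X^h_t$ is a convex combination of $X^h_{nh}$ and $X^h_{(n+1)h}=Y^h_{(n+1)h}$, so $|X^h_t|^{2q}\leq C(|X^h_{nh}|^{2q}+|Y^h_{(n+1)h}|^{2q})$.

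Next I would prove the one-step increment bounds~\eqref{422} and its $X^h$-analogue. For $Y^h_t-Y^h_{\lfloor t/h\rfloor h}$ one writes it as $\sigma(Y^h_{nh})(W_t-W_{nh})+(t-nh)(b(Y^h_{nh})+\eta^{(h)}(\cdots))$ with $n=\lfloor t/h\rfloor$, takes the $2q$-th moment, and uses the moment bounds on Brownian increments together with~\eqref{411}: the drift contribution gives a term of order $h^{2q}(1+\EE[|Y^h_{nh}|^{2q}])$ plus, from the $\eta^{(h)}$ piece, $h^{2q}\EE[|W_{(n+1)h}-W_{nh}|^{2q}]\sim h^{3q}$, while the martingale part gives $\EE[\|\sigma(Y^h_{nh})\|^{2q}|W_t-W_{nh}|^{2q}]\leq C h^q(1+\EE[|Y^h_{nh}|^{2q}])$. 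This already produces a clean bound of order $h^q+h^{2q}$; the presence of the $(-\log h)^q$ factor in the statement signals that the bound is actually wanted uniformly in the choice of grid point over $[0,\T]$ — i.e. for $\sup_{n\leq \T/h}$ of the increments — which is where the Brownian modulus-of-continuity estimate $\EE[\max_{n\leq \T/h}|W_{(n+1)h}-W_{nh}|^{2q}]\leq C h^q(-\log h)^q$ enters. I would obtain that maximal estimate from the Gaussian tail of a single increment and a union bound over the $\Theta(1/h)$ grid points.

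Finally, for the uniform bound $\EE[(\sup_{t\leq\T}|Y^h_t|)^{2q}]\leq C_3(1+\EE[|X^h_0|^{2q}])$ I would not freeze the integrands but instead write, for $t\leq\T$, $Y^h_t=Y^h_0+\int_0^t \sigma(Y^h_{\lfloor s/h\rfloor h})\,dW_s+\int_0^t(b+\eta^{(h)})(Y^h_{\lfloor s/h\rfloor h},\cdot)\,ds$ as a genuine It\^o process on all of $[0,\T]$, apply the Burkholder--Davis--Gundy inequality to the martingale part and H\"older to the drift part exactly as in the proof of Lemma~\ref{le2}, bound the integrands by (H3)/(H4) in terms of $\sup_{u\leq s}|Y^h_u|$, and close with Gr\"onwall's lemma using that $\sup_{t\leq\T}\EE[|Y^h_t|^{2q}]$ is already controlled by~\eqref{411}. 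The main obstacle, I expect, is bookkeeping: keeping the constants genuinely independent of $h$ while the $\eta^{(h)}$ term contributes both an $h^\alpha|x|$ piece (harmless, absorbed once $h$ is small) and a $|y|=|W_{(n+1)h}-W_{nh}|$ piece whose moments must be traded against powers of $h$ carefully enough that nothing of order worse than $h^q(-\log h)^q$ survives; and ensuring the passage from "fixed grid point" increment bounds to the "maximum over all grid points" versions needed downstream is done with the right log factor rather than a crude $1/h$ loss.
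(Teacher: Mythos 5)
Your plan reproduces the paper's own proof essentially step for step: freezing the coefficients on each grid interval and invoking Lemma \ref{le2} together with a subarithmetico-geometric (discrete Gronwall) recursion at the grid points for the moment bounds \eqref{411}, a direct one-step expansion for the increment bounds \eqref{422}, and the indicator-sum It\^o representation of $Y^h$ with H\"older, Burkholder and Gronwall for the supremum estimate. The only deviation is in where the logarithm comes from: the paper obtains the $(-\log h)^q$ factor already for fixed $t$ by bounding $\mathbb{E}\big[\sup_{s\le h}\left|W_s\right|^{2q}\big]$ via truncation at level $2\sqrt{h(-\log h)}$ and the reflection principle, whereas your sharper $O(h^q)$ single-increment estimate (with the union bound over grid points kept in reserve for the supremum stage) implies the stated inequalities a fortiori, so this is a harmless refinement rather than a gap.
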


\begin{proof} Before the proof, we want to note that Assumption (H1) implies the fact that the functions $b$ and $\s$ are linearly bounded. More precisely, there exists $K_1\geq 0$ such that
\[\left|b(x)\right| \leq K_1(1 + \left|x\right|)\qquad\mbox{and}\qquad
\|\sigma(x)\|\leq K_1(1 + \left|x\right|)\,.\]
This linear growth property shall be often used in the following proofs.

\smallskip
The proof of this lemma is achieved in several steps.

The first step is to bound $\mathbb{E}\big[\left|X^h_t\right|^{2q}\big]$ according to $\mathbb{E}\big[\big|X^h_{\lfloor t/h\rfloor h}\big|^{2q}\,\big]$. From the definition of $X^h_t$, we have
\begin{align*}
\left|X^h_t\right|^{2q} \leq C_0 \big(\left|X^h_{\lfloor t/h\rfloor h}\right|^{2q} + &h^{2q} \left|b(X^h_{\lfloor t/h\rfloor h})\right|^{2q} +\\
&\qquad+ \|\sigma(X^h_{\lfloor t/h\rfloor h})\|^{2q}\left|W_{(\lfloor t/h\rfloor +1)h} - W_{\lfloor t/h\rfloor h}\right|^{2q}\\ 
&\qquad+ h^{2q} \left|\eta^h( X^h_{\lfloor t/h\rfloor h}, W_{(\lfloor t/h\rfloor +1)h} - W_{\lfloor t/h\rfloor h})\right|^{2q}\big)\,.
\end{align*}
The process $X^h_t$ at time $\lfloor t/h\rfloor h$ is independent of $W_{(\lfloor t/h\rfloor +1)h} - W_{\lfloor t/h\rfloor h}$. Therefore, by the linear growth property, we get
\begin{multline}
\mathbb{E}\big[\left|X^h_t\right|^{2q}\big] \leq C_0 \Big{(} \mathbb{E}\big[\left|X^h_{\lfloor t/h\rfloor h}\right|^{2q}\big] +C_pK_1^{2q} h^{2q} (1+\mathbb{E}\big[\left|X^h_{\lfloor t/h\rfloor h}\right|^{2q}\big])+\hfill  \\
\hfill + C_pK_1^{2q} (1+\mathbb{E}\big[\left|X^h_{\lfloor t/h\rfloor h}\right|^{2q}\big]) \mathbb{E}\big[\left|W_{(\lfloor t/h\rfloor+1)h} - W_{\lfloor t/h\rfloor h}\right|^{2q}\big] +\\ 
\hfill + h^{2q} \mathbb{E}\big[\left|\eta^h( X^h_{\lfloor t/h\rfloor h}, W_{(\lfloor t/h\rfloor+1)h} - W_{\lfloor t/h\rfloor h})\right|^{2q}\big]\Big{)}\,.\label{433}
\end{multline}
In the same way as above, the following bound can be found for  $(Y_t^h)$:
\begin{multline}
\mathbb{E}\big[\left|Y^h_t\right|^{2q}\big] \leq C_0 \Big{(} \mathbb{E}\big[\left|Y^h_{\lfloor t/h\rfloor h}\right|^{2q}\big] +C_qK_1^{2q} h^{2q} (1+\mathbb{E}\big[\left|Y^h_{\lfloor t/h\rfloor h}\right|^{2q}\big]) + \hfill \\
\hfill + C_pK_1^{2q} (1+\mathbb{E}\big[\left|Y^h_{\lfloor t/h\rfloor h}\right|^{2q}\big]) \mathbb{E}\big[\left|W_{t} - W_{\lfloor t/h\rfloor h}\right|^{2q}\big]+ \\
\hfill+ h^{2q} \mathbb{E}\big[\left|\eta^h( Y^h_{\lfloor t/h\rfloor h}, W_{(\lfloor t/h\rfloor+1)h} - W_{\lfloor t/h\rfloor h})\right|^{2q}\big]\Big{)}\,.\label{444}
\end{multline}
The next step is now to bound $\mathbb{E}\big[\left|W_{(\lfloor t/h\rfloor +1)h} - W_{\lfloor t/h\rfloor h}\right|^{2q}\big]$ and\\ $\mathbb{E}\big[\left|W_{t} - W_{\lfloor t/h\rfloor h}\right|^{2q}\big]$.
By definition of the norm, one can find an upper bound by regarding the supremum of $d$ $1$-dimensional standard Brownian motions $(B^k_t)_{k=1 \cdots d}$ on the time interval $\left[0,h\right]$.
Thus, consider the process $M_h$ defined by
\[M_h=\max_{k \in \left[\left|1,d\right|\right]} \sup_{t \in \left[0,h\right]} \left|B^k_t \right|\,.\]
The aim is to find a bound on $\mathbb{E}\left[M_h^{2q}\right]$. Firstly, note that, 
\[\mathbb{E}\left[M_h^{2q}\right] \leq \mathbb{E}\Big[M_h^{2q}\ \mathds{1}_{\left\{M_h>2\sqrt{h(-\log h)}\right\}}\Big]+ C_1 h^q(-\log h)^p\,.\]
But,
\[\mathbb{E}\Big[M_h^{2q}\ \mathds{1}_{\left\{M_h>2\sqrt{h(-\log h)}\right\}}\Big] \leq \sum_{k=1}^{d} \mathbb{E}\Big[(\sup_{t \in \left[0,h\right]} \left|B^k_t \right|)^{2q} \ \mathds{1}_{\big\{\underset{t \in \left[0,h\right]}{\sup} \left|B^k_t \right|>2\sqrt{h(-\log h)}\big\}}\Big]\,.\]
By using the reflexion principle we get
\begin{align*}
\mathbb{E}\Big[M_h^{2q}\ \mathds{1}_{\left\{M_h>2\sqrt{h(-\log h)}\right\}}\Big] &\leq 2\sum_{k=1}^{d} \mathbb{E}\Big[ (\sup_{t \in \left[0,h\right]} B^k_t)^{2q}\ \mathds{1}_{\big\{\underset{t \in \left[0,h\right]}{\sup} B^k_t >2\sqrt{h(-\log h)}\big\}}\Big]\\
&\leq C_2\int_{x\geq2\sqrt{h(-\log h)}} x^{2q} g(x)\,dx\,,
\end{align*}
where $g(x)= 2 \ e^{-x^2/2h}/ \sqrt{2\pi h}\,.$

Hence,
\[\mathbb{E}\Big[M_h^{2q}\ \mathds{1}_{\left\{M_h>2\sqrt{h(-\log h)}\right\}}\Big] \leq C_3 h^{q-1/2} \int_{u\geq2\sqrt{-\log h}} u^{2q}e^{-u^2/2}\,du\,.\]
Let us compute now this integral.
By integration by parts, we have
\begin{align*}
I_{q}=\int_{u\geq2\sqrt{-\log h}} u^{2q}e^{-u^2/2}\,du =& \left[u^{2q-1}(-e^{-u^2/2})\right]^{\infty}_{2\sqrt{-\log h}}\\ &+ (2q-1)\int_{u\geq2\sqrt{-\log h}} u^{2q-2}e^{-u^2/2}dx\\
=&(2\sqrt{-\log h})^{2q-1} e^{-(2\sqrt{-\log h})^2/2} + (2q-1) I_{q-1}\\
=&(2\sqrt{-\log h})^{2q-1} h^2 +(2q-1) I_{q-1}\,.
\end{align*}
By recurrence, $I_q \leq C_4 (\sqrt{-\log h})^{2q-1} h^2$.
Therefore, for $h$ small enough, 
\[\mathbb{E}\left[M_h^{2q}\right] \leq C_5 h^q(-\log h)^q\,.\]
Thus, we obtain
\[\mathbb{E}\big[\left|W_{(\lfloor t/h\rfloor +1)h} - W_{\lfloor t/h\rfloor h}\right|^{2q}\big] \leq C_6 h^q(-\log h)^q\,,\]
and
\[\mathbb{E}\big[\left|W_{t} - W_{\lfloor t/h\rfloor h}\right|^{2q}\big] \leq C_6 h^q(-\log h)^q\,.\]
Then, from Assumption (H3),
\begin{multline*}
\mathbb{E}\big[\left|\eta^h( X^h_{\lfloor t/h\rfloor h}, W_{(\lfloor t/h\rfloor +1)h} - W_{\lfloor t/h\rfloor h})\right|^{2q}\big] \leq C_7 (h^{2q\alpha} \mathbb{E}\big[\left|X^h_{\lfloor t/h\rfloor h}\right|^{2q}\big] +\hfill \\
\hfill+ \mathbb{E}\big[\left|W_{(\lfloor t/h\rfloor +1)h} - W_{\lfloor t/h\rfloor h}\right|^{2q}\big])\,
\end{multline*}
and with the previous inequality on $\mathbb{E}\big[\left|W_{(\lfloor t/h\rfloor +1)h} - W_{\lfloor t/h\rfloor h}\right|^{2q}\big]$,
\begin{multline}
\mathbb{E}\big[\left|\eta^h( X^h_{\lfloor t/h\rfloor h}, W_{(\lfloor t/h\rfloor +1)h} - W_{\lfloor t/h\rfloor h})\right|^{2q}\big] \leq C_7 (h^{2q\alpha}  \mathbb{E}\big[\left|X^h_{\lfloor t/h\rfloor h}\right|^{2q}\big] +\hfill \\
\hfill+ C_6 h^q(-\log h)^q)\,.\label{466}
\end{multline}
Hence, for $h$ small enough, Inequilties (\ref{433}) and (\ref{444}) becomes
\begin{align}
\mathbb{E}\big[\left|X^h_t\right|^{2q}\big] \leq& C_8 \Big{(}(1 +  h^{2q}+  h^q(-\log h)^q)\ \mathbb{E}\big[\left|X^h_{\lfloor t/h\rfloor h}\right|^{2q}\big]+  h^{2q} +  h^q(-\log h)^q \Big{)}\,,\label{477}
\end{align}
and,
\begin{align}
\mathbb{E}\big[\left|Y^h_t\right|^{2q}\big] \leq& C_8 \Big{(}(1 +  h^{2q}+  h^q(-\log h)^q)\  \mathbb{E}\big[\left|Y^h_{\lfloor t/h\rfloor h}\right|^{2q}\big]+  h^{2q} +  h^q(-\log h)^q \Big{)}\,.\label{488}
\end{align}
These inequalities allow us to bound the expectation of the norm of the processes $X^h_t$ and $Y^h_t$ according to the time in $h\mathbb{N}^*$ just before.

The next step is to understand how the norm of this process between two successive times in $h\mathbb{N}^*$ evolves. In other words, we want to study the evolution of the norm of the Markov chain $(X_{nh}^h)=(Y_{nh}^h)$.

Recall that
\begin{multline*}
X^h_{(n+1)h}= X_{n h}^h +\sigma(X_{n h}^h)(W_{(n +1)h}- W_{n h}) + h b(X_{n}^h)+ \hfill \\
 \hfill +h \eta^{(h)}(X_{n h}^h,W_{(n+1)h}- W_{n h})  \,.
\end{multline*}
This can be also written
\begin{multline*}
X^h_{(n +1)h} = X^h_{n h} + \int_{n h}^{(n +1)h} b(X^h_{n h}) +\eta^h( X^h_{n h}, W_{(n +1)h} - W_{n h}))\,ds +\hfill \\
 \hfill +\int_{n h}^{(n +1)h} \sigma(X^h_{n h}) \,dW_s\,.
\end{multline*}
One obtains from Lemma 2 between the times $nh$ and $(n+1)h$ for the process $Y^h_{t+nh}$ that
\begin{multline*}
\mathbb{E}\big[\left|X^h_{(n +1)h}\right|^{2q}\big] \leq \mathbb{E}\big[\left|X^h_{n h}\right|^{2q}\big] + C_9 \int_{n h}^{(n +1)h} \big{(}\mathbb{E}\big[\left|Y^h_s\right|^{2q}\big]+ \mathbb{E}\big[\|\sigma(X^h_{n h})\|^{2q}\big] + \hfill \\
\hfill + \mathbb{E}\big[\left| b(X^h_{n h}) +\eta^h( X^h_{n h}, W_{(n +1)h} - W_{n h}))\right|^{2q}\big]\big{)} \,ds\,.
\end{multline*}
The value of $\mathbb{E}\big[\left|Y^h_s\right|^{2q}\big]$ can be bounded with (\ref{488}).
Hence, for $h$ sufficiently small, and from the linear growth property, it follows that
\begin{align*}
\mathbb{E}\big[\left|X^h_{(n +1)h}\right|^{2q}\big]
\leq& (1+C_{10}h)\mathbb{E}\big[\left|X^h_{n h}\right|^{2q}\big] + C_{11} h\,.
\end{align*}
Note that, the sequence $(\EE\big[\l X_{nh}^h\r^{2q}\big])_{n \in \NN}$ is subarithmetico-geometric, that is, this sequence has the following form $x_{n+1} \leq \beta x_n +\gamma$ where $\beta \geq 1$.
Thus, each $\EE\big[\l X_{nh}^h\r^{2q}\big]$ can be controlled by only $\EE\big[\l X_{0}^h\r^{2q}\big]$ and the time $n$.
Indeed, if a sequence $(x_n)$ satisfies the previous inequality, then, for all $n$, 
\[x_n \leq \beta^n x_0 + n e^{n(\beta-1)} \gamma\,.\]
Therefore, for all $t$ in $[0, \T]$,
\begin{align*}
\mathbb{E}\big[\left|X^h_{\lfloor t/h\rfloor h}\right|^{2q}\big] \leq& (1+C_{10}h)^{t/h} \mathbb{E}\big[\left|X^h_0\right|^{2q}\big] + \frac{t}{h} e^{C_{10}t} C_{11} h \\
\leq& (\mathbb{E}\big[\left|X^h_0\right|^{2q}\big]+ C_{11}t)e^{C_{10}t}\,.
\end{align*}
Hence with Inequality (\ref{477}),
\begin{align*}
\mathbb{E}\big[\left|X^h_t\right|^{2q}\big] &\leq C_{12}(\mathbb{E}\big[\left|X^h_0\right|^{2q}\big]+ C_{11}t)e^{C_{10}t}\\
&\leq C_{13}(1+ \mathbb{E}\big[\left|X^h_0\right|^{2q}\big])e^{C_{10}t}\,,
\end{align*}
and, for the same reason,
\begin{align*}
\mathbb{E}\big[\left|Y^h_t\right|^{2q}\big] \leq C_{13}(1+ \mathbb{E}\big[\left|X^h_0\right|^{2q}\big])e^{C_{10}t}\,.
\end{align*}
Let us proceed now with the proof of Inequality (\ref{422}). 
For all $t$, we get
\begin{align*}
\mathbb{E}\big[\left|X^h_t-X^h_{\lfloor t/h\rfloor h}\right|^{2q}\big] \leq C_0 \mathbb{E} \big[& h^{2q} \left|b(X^h_{\lfloor t/h\rfloor h})\right|^{2q} + \\
& \quad+ \|\sigma(X^h_{\lfloor t/h\rfloor h})\|^{2q}\left|W_{(\lfloor t/h\rfloor +1)h} - W_{\lfloor t/h\rfloor h}\right|^{2q} +\\
& \quad+ h^{2q} \left|\eta^h( X^h_{\lfloor t/h\rfloor h}, W_{(\lfloor t/h\rfloor +1)h} - W_{\lfloor t/h\rfloor h})\right|^{2q}\big]\,.
\end{align*}
Then, we have
\begin{multline*}
\mathbb{E}\big[\big|X^h_t-X^h_{\lfloor t/h\rfloor h}\big|^{2q}\big] \leq C_8 \Big{(}( h^{2q}+  h^q(-\log h)^q)
 \mathbb{E}\big[\big|X^h_{\lfloor t/h\rfloor h}\big|^{2q}\big]+\hfill \\
 \hfill+  h^{2q} +  h^q(-\log h)^q \Big{)}\,.
\end{multline*}
Finally, from the bound (\ref{411}) on $\mathbb{E}\big[\big|X^h_{\lfloor t/h\rfloor h}\big|^{2q}\big]$, we obtain
\begin{align*}
\mathbb{E}\big[\left|X^h_t-X^h_{\lfloor t/h\rfloor h}\right|^{2q}\big] \leq C_{14} ( h^{2q}+  h^q(-\log h)^q)\,
\end{align*}
and the same reasons
\begin{align*}
\mathbb{E}\big[\left|Y^h_t-Y^h_{\lfloor t/h\rfloor h}\right|^{2q}\big] \leq C_{14} ( h^{2q}+  h^q(-\log h)^q)\,.
\end{align*}
We now proceed with the proof of the last inequality.
Note that the process $Y^h_t$ can be also written by this way
\begin{align*}
Y^h_t = X_0&+\int_0^t \sum_{k=0}^{\lfloor \T/h\rfloor}  \mathds{1}_{\left] kh,(k+1)h \right]} (s) \big(b(Y_{kh}^h) + \eta^{(h)} (Y_{kh}^h,W_{(k+1)h} - W_{kh})\big)\, ds +\\
 &+ \int_0^t \sum_{k=0}^{\lfloor \T/h\rfloor}  \mathds{1}_{\left] kh,(k+1)h \right]}(s)\sigma(Y_{kh}^h)\, d W_s\,.
\end{align*}
We define the process $Z^h_t$ by
\[Z^h_t = \sup_{s \leq t} \big| Y^h_t \big|\,.\]
The aim is to find a bound on $\EE\big[(Z^h_t)^{2q}\big]$ independent of $h$.

Note that from the definition of the norm and the convexity of the application~$x \longmapsto \left| x \right|^q$, 
\begin{align*}
\mathbb{E}\big[(Z^h_t)^{2q}\big] \leq C_{15} \sum_{i=1}^m \mathbb{E}\big[\big( Z^{h,i}_t \big)^{2q}\big]\,,
\end{align*}
where $ Z^{h,i}_t$ is defined by $ Z^{h,i}_t= \sup_{s \leq t} \big| Y^{h,i}_t \big|$.
Therefore it is sufficient to bound component by component.
For all $i \in \left[|1,m\right|]$, by the triangle inequality, 
\begin{multline}\label{sup}
\EE\big[\big|Z^{h,i}_t\big| ^{2q}\big] \leq C_{16} \big( \EE \big[ \l X_0 \r ^{2q} \big]+\hfill \\ \hfill+ \EE\big[\big( \sup_{s \leq t} \big| \int_0^s \sum_{k=0}^{\lfloor \T/h\rfloor}  \mathds{1}_{\left] kh,(k+1)h \right]} (u) \big(b^i(Y_{kh}^h) + \eta^{(h),i} (Y_{kh}^h,W_{(k+1)h} - W_{kh})\big)\, du \big| \big)^{2q}\,\big] +\\
\hfill+\EE\big[\big( \sup_{s \leq t} \big| \int_0^s \sum_{k=0}^{\lfloor \T/h\rfloor}  \sum_{j=0}^{d}\mathds{1}_{\left] kh,(k+1)h \right]}(u)\sigma^{i,j}(Y_{kh}^h)\, d W_u^j\big| \big)^{2q}\, \big]\big)\,.
\end{multline}
Consider the second term in this inequality. By H\"older's inequality, 
\begin{multline*}\big| \int_0^s \sum_{k=0}^{\lfloor \T/h\rfloor}  \mathds{1}_{\left] kh,(k+1)h \right]} (u) \big(b^i(Y_{kh}^h) + \eta^{(h),i} (Y_{kh}^h,W_{(k+1)h} - W_{kh})\big)\, du \big|^{2q} \leq \hfill \\
\hfill \leq s^{2q-1} \int_0^s \big| \sum_{k=0}^{\lfloor \T/h\rfloor}  \mathds{1}_{\left] kh,(k+1)h \right]} (u) \big(b^i(Y_{kh}^h) + \eta^{(h),i} (Y_{kh}^h,W_{(k+1)h} - W_{kh})\big)\big|^{2q}\, du\,.
\end{multline*}
Note that this sum over $k$ is reduced to one term for each $s$. Hence,
\begin{multline*}
\int_0^s \big| \sum_{k=0}^{\lfloor \T/h\rfloor}  \mathds{1}_{\left] kh,(k+1)h \right]} (u) \big(b^i(Y_{kh}^h) + \eta^{(h),i} (Y_{kh}^h,W_{(k+1)h} - W_{kh})\big)\big|^{2q}\, du\,. \leq \hfill \\
\hfill \leq C_{17}\int_0^s \sum_{k=0}^{\lfloor \T/h\rfloor}  \mathds{1}_{\left] kh,(k+1)h \right]} (u)\big( \big| b^i(Y_{kh}^h) \big|^{2q}+\big| \eta^{(h),i} (Y_{kh}^h,W_{(k+1)h} - W_{kh})\big|^{2q}\big)\, du\,.
\end{multline*}
From Assumption (H3) and the linear growth bound on the function $b$,
\begin{multline*}
 \int_0^s \sum_{k=0}^{\lfloor \T/h\rfloor}  \mathds{1}_{\left] kh,(k+1)h \right]} (u)\big( \big| b^i(Y_{kh}^h) \big|^{2q}+\big| \eta^{(h),i} (Y_{kh}^h,W_{(k+1)h} - W_{kh})\big|^{2q}\big)\, du\leq \hfill \\
 \hfill \leq C_{18} \int_0^s \sum_{k=0}^{\lfloor \T/h\rfloor}  \mathds{1}_{\left] kh,(k+1)h \right]} (u)\big( 1 +  \big(Z_u^h\big)^{2q} + h^{\alpha 2q}\big( Z_u^h \big)^{2q} + \big| W_{(k+1)h} - W_{kh}\big|^{2q}\big)\, du\,.
\end{multline*}
Since we are interested in $h$ small, we can consider $h \leq 1$. Therefore, since $t \leq \T$, we obtain
\begin{multline*}
\EE\big[\big( \sup_{s \leq t} \big| \int_0^s \sum_{k=0}^{\lfloor \T/h\rfloor}  \mathds{1}_{\left] kh,(k+1)h \right]} (u) \big(b^i(Y_{kh}^h) + \eta^{(h),i} (Y_{kh}^h,W_{(k+1)h} - W_{kh})\big)\, du \big| \big)^{2q}\,\big] \leq \hfill \\
\hfill \leq C_{19} (1 + \int_0^t  \EE\big[\big( Z_u^h \big)^{2q}\,\big]  +\EE\big[\sum_{k=0}^{\lfloor \T/h\rfloor}  \mathds{1}_{\left] kh,(k+1)h \right]}(u) \big| W_{(k+1)h} - W_{kh}\big|^{2q}\,\big] \, du\,.
\end{multline*}
Recall that for each $u$ the sum over $k$ is reduced to one term. Since each term can be bounded by $h^q(-\log h)^q$ as previously shown, then, for $h$ sufficiently small, 
\begin{multline*}
\EE\big[\big( \sup_{s \leq t} \big| \int_0^s \sum_{k=0}^{\lfloor \T/h\rfloor}  \mathds{1}_{\left] kh,(k+1)h \right]} (u) \big(b^i(Y_{kh}^h) + \eta^{(h),i} (Y_{kh}^h,W_{(k+1)h} - W_{kh})\big)\, du \big| \big)^{2q}\,\big] \leq \hfill \\
\hfill \leq C_{20} \big( 1 + \int_0^t  \EE\big[\big( Z_u^h \big)^{2q}\,\big] \, du \big) \,.
\end{multline*}
Consider now the last term in Inequality (\ref{sup}). By Burkh\"older Inequality, we get 
\begin{multline*}
\EE\big[\big( \sup_{s \leq t} \big| \int_0^s \sum_{k=0}^{\lfloor \T/h\rfloor}  \sum_{j=0}^{d}\mathds{1}_{\left] kh,(k+1)h \right]}(u)\sigma^{i,j}(Y_{kh}^h)\, d W_u^j\big| \big)^{2q}\, \big] \leq \hfill \\
\hfill \leq C_{21} t^{q-1} \big( \int_0^t \EE\big[\sum_{k=0}^{\lfloor \T/h\rfloor}  \sum_{j=0}^{d}\mathds{1}_{\left] kh,(k+1)h \right]}(u)\big|\sigma^{i,j}(Y_{kh}^h)\big|^{2q}\,\big]\,du \big)\,.
\end{multline*}
From the the linear growth bound of $\s$, we obtain
\begin{multline*}
\EE\big[\big( \sup_{s \leq t} \big| \int_0^s \sum_{k=0}^{\lfloor \T/h\rfloor}  \sum_{j=0}^{d}\mathds{1}_{\left] kh,(k+1)h \right]}(u)\sigma^{i,j}(Y_{kh}^h)\, d W_u^j\big| \big)^{2q}\, \big] \leq \hfill \\
 \hfill \leq C_{22} \big( 1+ \int_0^t \EE\big[\big( Z^h_u\big)^{2q}\,\big]\,du \big)\,.
\end{multline*}
Finally, we get
$$
\EE\big[\big|Z^{h,i}_t\big| ^{2q}\big] \leq C_{23 }\big( 1+ \int_0^t \EE\big[\big( Z^h_u\big)^{2q}\,\big]\,du \big)\,,
$$
and then,
$$
\EE\big[\big(Z^{h}_t\big) ^{2q}\big] \leq C_{24}\big( 1+ \int_0^t \EE\big[\big( Z^h_u\big)^{2q}\,\big]\,du \big)\,.
$$
Hence, by Gronwall's lemma,
$$\EE\big[\big(Z^{h}_{\T}\big) ^{2q}\big] \leq C_{25}\,,$$
where $C_{25}$ is independent of $h$.
\end{proof}

\begin{proof}{Theorem 5.2.}
For all positive $\T$, the same strategy as in the proof of Lemma \ref{le4} is set up to show the convergence on the time interval $\big[0, \T]$.

The error between the solution $X_t$ of the SDE and the process $X_t^h$ is denoted by~$\epsilon_t$.
Let us begin with a formula which lies the errors at two consecutive points of $h\NN$,
\begin{align*}
\epsilon_{(n+1)h} =& X_{(n+1)h} - X_{(n+1)h}^h\\
=&\epsilon_{nh} +\int_{nh}^{(n+1)h} \sigma(X_s) - \sigma(X_{nh}^h)\, d W_s + \\
 &\quad \ +\int_{nh}^{(n+1)h} b(X_s)- b(X_{nh}^h)-\eta^h (X_{nh}^h, W_{(n+1)h}-W_{nh})\, ds\,.
\end{align*}
As previously seen, Lemma \ref{le2} is applied to the process $(X_{nh+t} - Y^h_{nh+t})$ instead of $\epsilon_{nh+t}$ because of the linear interpolation in the definition of $X_t^h$. 
Then, we get
\begin{align*} \mathbb{E}\big[\big| \epsilon_{(n+1)h} \big|^{2q}\big] \leq &\mathbb{E}\big[\left| \epsilon_{nh} \right|^{2q}\big] +  C_0
 \int_{nh}^{(n+1)h} \big{(} \mathbb{E}\big[ \left| X_{s} - Y^h_{s} \right|^{2q}\big]+ \\ 
 &\quad+ \mathbb{E}\big[ \|\sigma(X_s)-\sigma(X_{nh}^h)\|^{2q}\big]  +\\
 &\quad+ \mathbb{E}\big[\left|b(X_s)- b(X_{nh}^h) - \eta^h (X_{nh}^h, W_{(n+1)h}-W_{nh}) \right|^{2q}\big]\big{)}\, ds\,.
\end{align*}
Or,
\[X_{s} - Y^h_{s}  = \epsilon_{nh} + X_{s} - X_{nh} + Y_{nh}^h - Y_{s}^h\,.\]
Then, we obtain
\[\mathbb{E}\big[\left| X_{s} - Y^h_{s}  \right|^{2q}\big] \leq C_1 \big(\mathbb{E}\big[\left| \epsilon_{nh} \right|^{2q}\big] + \mathbb{E}\big[\left| X_{s} - X_{nh}\right|^{2q}\big] + \mathbb{E}\big[\left| Y_{s}^h - Y_{nh}^h\right|^{2q}\big]\big)\,.\]
From Lemma \ref{le3},
\[\mathbb{E}\big[\left| X_{s} - X_{nh}\right|^{2q}\big] \leq C_2 (1+ \mathbb{E}\big[\left| X_0 \right|^{2q}\big]) (s-nh)^q
\leq C_3 h^q\]
For the process $Y^h_t$, Lemma \ref{le4} gives the inequality
\[\mathbb{E}\big[\left| Y^h_{s} - Y^h_{nh}\right|^{2q}\big] \leq C_4 (h^{2q} + h^q(-\log h)^q)\,.\]
Therefore, for a small $h$,
\begin{align*}
\mathbb{E}\big[ \left| X_{s} - Y^h_{s} \right|^{2q}\big] \leq C_5 (\mathbb{E}\big[\left| \epsilon_{nh} \right|^{2q}\big] + h^q(-\log h)^q)\,.
\end{align*}
Since $\sigma$ is a Lipschitz function,
\begin{align*}
\mathbb{E}\big[\|\sigma(X_s)-\sigma(X_{nh}^h)\|^{2q}\big] &\leq K_0^{2q}\ \mathbb{E}\big[ \left|X_s-X_{nh}^h\right|^{2q}\big]\\
&\leq K_0^{2q}\ \mathbb{E}\big[\left|X_s- X_{nh} + X_{nh} - X_{nh}^h\right|^{2q}\big]\\
&\leq C_6 (\mathbb{E}\big[\left| \epsilon_{nh} \right|^{2q}\big] + h^q)\,.
\end{align*}
On the other hand, the last term can be also bounded
\begin{multline*}
\mathbb{E}\big[\left|b(X_s)- b(X_{nh}^h) - \eta^h (X_{nh}^h, W_{(n+1)h}-W_{nh}) \right|^{2q}\big] \hfill \\ 
\hfill \leq C_7 \big(\mathbb{E}\big[\left|b(X_s)- b(X_{nh}^h)\right|^{2q}\big] + 
 \mathbb{E}\big[\left|\eta^h (X_{nh}^h, W_{(n+1)h}-W_{nh}) \right|^{2q}\big]\big)\,.
\end{multline*}
As $\mathbb{E}\big[\left| X_{nh}^h \right|^{2q}\big]$ is finite, 
\[\mathbb{E}\big[\left|\eta^h (X_{nh}^h, W_{(n+1)h}-W_{nh}) \right|^{2q}\big] \leq K (h^q(-\log h)^q + h^{2q\alpha})\,.\]
And, finally, we have
\begin{multline*}
\mathbb{E}\big[\left|b(X_s)- b(X_{nh}^h) - \eta^h (X_{nh}^h, W_{(n+1)h}-W_{nh}) \right|^{2q}\big] \leq C_8 (\mathbb{E}\big[\left| \epsilon_{nh} \right|^{2q}\big]+\hfill\\
\hfill + h^q(-\log h)^q + h^{2q\alpha})\,.
\end{multline*}
Thus, for $h$ sufficiently small,
\begin{multline*} 
\mathbb{E}\big[\left| \epsilon_{(n+1)h} \right|^{2q}\big]\leq \mathbb{E}\big[\left| \epsilon_{nh} \right|^{2q}\big] + C_0 h\Big{(}C_5 (\mathbb{E}\big[\left| \epsilon_{nh} \right|^{2q}\big] + h^q(-\log h)^q)+ \hfill \\
\hfill + C_6 (\mathbb{E}\big[\left| \epsilon_{nh} \right|^{2q}\big] + h^q) +C_8 (\mathbb{E}\big[\left| \epsilon_{nh} \right|^{2q}\big] + h^q(-\log h)^q + h^{2q\alpha})\Big{)}\\
\phantom{\mathbb{E}\big[\left| \epsilon_{(n+1)h} \right|^{2q}\big]\ \ \ } \leq (1 + C_9h)\mathbb{E}\big[\left| \epsilon_{nh} \right|^{2q}\big] + C_{10} h^{q+1}(-\log h)^q + C_{11} h^{1+2q\alpha}  \,.\hfill
\end{multline*}
Note that $(\mathbb{E}\big[\left| \epsilon_{nh} \right|^{2q}\big])$ is an other subarithmetico-geometric sequence.
Hence, as $\epsilon_0 =0$,
\begin{align*}
\mathbb{E}\left[\left| \epsilon_{nh} \right|^{2q}\right] \leq& n e^{nC_9 h} (C_{10} h^{q+1}(-\log h)^q + C_{11} h^{1+2q\alpha})\,.
\end{align*}
On the other hand,
\[\epsilon_t= \epsilon_{\lfloor t/h \rfloor h} + X_s-X_{nh} + X_{nh}^h- X_s^h\,.\]
Therefore, for a time $t$ in $[0,\T]$,
\begin{align}
\mathbb{E}\big[\left| \epsilon_t \right|^{2q}\big] \leq& C_5 (\mathbb{E}\left[\left| \epsilon_{\lfloor t/h \rfloor h} \right|^{2q}\right] + h^q(-\log h)^q )\nonumber \\
\leq& C_5 \big[ \frac{t}{h} e^{t/h C_9 h} (C_{10} h^{q+1}(-\log h)^q + C_{11} h^{1+2q\alpha}) + h^q (-\log h)^q\big] \nonumber \\
\leq& C_{12}( h^q(-\log h)^q + h^{2q\alpha})\,. \label{51}
\end{align}
Eventually, a bound on $\mathbb{E}\big[\left| X_t - X_t^h \right|^{2q}\big]$ is found.
However, we want to prove the inequality on the supremum
\[\mathbb{E}\big{[}\big( \sup_{t \in [0,\T]} \left| \epsilon_t \right|\big)^{2q}\big{]} \leq C (h^{2q\alpha} + h^q (-\log h)^q)\,.\]
Let us proceed with the proof of this inequality.
From the definition of the norm and the convexity of the application $x \longmapsto \left| x \right|^p$, 
\begin{align*}
\mathbb{E}\big[\big(\sup_{t \in [0,\T]} \left| \epsilon_t \right|\big)^{2q}\big] \leq C_{13} \sum_{i=1}^m \mathbb{E}\big[\big(\sup_{t \in [0,\T]} \left|\epsilon_t^i\right|\big)^{2q}\big]\,.
\end{align*}
Thus, it is sufficient to control the supremum of each component.

For all $i$ in $[|1,m|]$,
\begin{align*}
\epsilon_t^i =& \int_0^t \sum_{k=0}^{\lfloor \T/h\rfloor} \big[b^i(X_s)-b^i(X_{kh}^h) - (\eta^{(h)})^i (X_{kh}^h,W_{(k+1)h} - W_{kh})\big] \mathds{1}_{\left] kh,(k+1)h \right]} (s)\, ds\\
 &+ \int_0^t \sum_{k=0}^{\lfloor \T/h\rfloor} \sum_{j=1}^d \big[\sigma^{i,j}(X_s)-\sigma^{i,j}(X_{kh}^h)\big] \mathds{1}_{\left] kh,(k+1)h \right]} (s)\, d W^j_s\\
&+ \sum_{j=1}^d \sigma^{i,j}(X_{\lfloor t/h\rfloor h}^h)\big[W^j_t - W^j_{\lfloor t/h\rfloor h} - \frac{t - \lfloor t/h\rfloor h}{h} (W^j_{(\lfloor t/h\rfloor +1)h} - W^j_{\lfloor t/h\rfloor h})\big]\,.
\end{align*}
Then,
\begin{multline}
\mathbb{E}\big{[}\big(\sup_{t \in [0,\T]} \left|\epsilon_t^i\right|\big)^{2q}\big{]} \hfill \\
\leq C_{14}\big( \mathbb{E}\big{[}\big(\sup_{t \in [0,\T]} \big| \int_0^t \sum_{k=0}^{\lfloor \T/h\rfloor} \big\{b^i(X_s)-b^i(X_{kh}^h)+\hfill \\
 \hfill - (\eta^{(h)})^i (X_{kh}^h,W_{(k+1)h} - W_{kh})\big\} \mathds{1}_{\left] kh,(k+1)h \right]} (s)\, ds \big|\big)^{2q} \big{]} \\
+\mathbb{E} \big{[}\big(\sup_{t \in [0,\T]} \big| \int_0^t \sum_{k=0}^{\lfloor \T/h\rfloor} \sum_{j=1}^d \big\{\sigma^{i,j}(X_s)-\sigma^{i,j}(X_{kh}^h)\big\} \mathds{1}_{\left] kh,(k+1)h \right]} (s)\, d W^j_s \big|\big)^{2q} \big{]} \hfill \\
+ \mathbb{E}\big{[}\big(\sup_{t \in [0,\T]} \big|\sum_{j=1}^d \sigma^{i,j}(X_{\lfloor t/h\rfloor h}^h)\big\{W^j_t - W^j_{\lfloor t/h\rfloor h} + \hfill \\
\hfill- \frac{t - \lfloor t/h\rfloor h}{h} (W^j_{(\lfloor t/h\rfloor +1)h} - W^j_{\lfloor t/h\rfloor h})\big\}\big|\big)^{2q} \big{]} \big)\,.\label{52}
\end{multline}
Consider the first term of (\ref{52}), by H\"older's inequality, we get
\begin{multline*}
\mathbb{E}\big{[}\big(\sup_{t \in [0,\T ]} \big| \int_0^t \sum_{k=0}^{\lfloor \T/h\rfloor} \big\{b^i(X_s)-b^i(X_{kh}^h) + \hfill \\
\hfill- (\eta^{(h)})^i (X_{kh}^h,W_{(k+1)h} - W_{kh})\big\} \mathds{1}_{\left] kh,(k+1)h \right]} (s)\, ds \big\}\big|\big)^{2q}\big{]} \\
\leq C_{15}\ \mathbb{E}\big{[}\int_0^{\T} \big| \sum_{k=0}^{\lfloor \T /h\rfloor} \big\{b^i(X_s)-b^i(X_{kh}^h) +\hfill \\
\hfill - (\eta^{(h)})^i (X_{kh}^h,W_{(k+1)h} - W_{kh})\big\} \mathds{1}_{\left] kh,(k+1)h \right]} (s)\big|^{2q}\, ds\big{]}\,.
\end{multline*}
Note that, for all $s$, the sum over $k$ is just composed by only one term, thus
\begin{multline*}
\mathbb{E}\big{[}\sup_{t \in [0,\T]} \big| \int_0^t \sum_{k=0}^{\lfloor \T /h\rfloor} \big\{b^i(X_s)-b^i(X_{kh}^h)+\hfill \\
\hfill - (\eta^{(h)})^i (X_{kh}^h,W_{(k+1)h} - W_{kh})\big\} \mathds{1}_{\left] kh,(k+1)h \right]} (s)\, ds \big|^{2q}\big{]}\\
\leq C_{15} \int_0^{\T } \mathbb{E}\big{[}\sum_{k=0}^{\lfloor \T /h\rfloor} \big| b^i(X_s)-b^i(X_{kh}^h) +\hfill \\
\hfill- (\eta^{(h)})^i (X_{kh}^h,W_{(k+1)h} - W_{kh})\big|^{2q} \mathds{1}_{\left] kh,(k+1)h \right]} (s)\big{]}\, ds\,.
\end{multline*}
Or, from previous inequalities on processes,
\begin{multline*}
\mathbb{E}\big{[}\sup_{t \in [0,\T]}  \big| \int_0^t \sum_{k=0}^{\lfloor \T /h\rfloor} \big\{b^i(X_s)-b^i(X_{kh}^h)+\hfill \\
\hfill - (\eta^{(h)})^i (X_{kh}^h,W_{(k+1)h} - W_{kh})\big\} \mathds{1}_{\left] kh,(k+1)h \right]} (s)\, ds \big|^{2q}\big{]} \\
\hfill \leq C_{16} ( h^q(-\log h)^q + h^{2q\alpha})\,.
\end{multline*}
Consider now the second term of (\ref{52}). By Burkholder's inequality, we obtain 
\begin{align*}
\mathbb{E}\big{[} &\big(\sup_{t \in [0,\T ]} \big| \int_0^t  \sum_{k=0}^{\lfloor \T/h\rfloor} \sum_{j=1}^d (\sigma^{i,j}(X_s)-\sigma^{i,j}(X_{kh}^h)) \mathds{1}_{\left] kh,(k+1)h \right]} (s)\, d W^j_s \big|\big)^{2q}\big{]}\\
&\quad\leq C_{17}\mathbb{E}\big{[} \int_0^{\T} \big| \sum_{k=0}^{\lfloor \T /h\rfloor} \sum_{j=1}^d (\sigma^{i,j}(X_s)-\sigma^{i,j}(X_{kh}^h)) \mathds{1}_{\left] kh,(k+1)h \right]} (s)\big|^{2q} \,ds \big{]}\,\\
&\quad \leq C_{17} \int_0^{\T} \EE\big[ \sum_{k=0}^{\lfloor \T /h\rfloor} \big| \sum_{j=1}^d (\sigma^{i,j}(X_s)-\sigma^{i,j}(X_{kh}^h))\big|^{2q}\mathds{1}_{\left] kh,(k+1)h \right]} (s)\big]\,ds \,.
\end{align*}
Therefore, from Assumption (H1) and the bound on $\epsilon_t$,
\begin{multline*}
\mathbb{E}\big{[}\big( \sup_{t \in [0,\T ]} \big{|} \int_0^t  \sum_{k=0}^{\lfloor \T /h\rfloor} \sum_{j=1}^d (\sigma^{i,j}(X_s)-\sigma^{i,j}(X_{kh}^h)) \mathds{1}_{\left] kh,(k+1)h \right]} (s)\, d W^j_s \big{|}\big)^{2q} \big{]} \hfill \\
\hfill \leq C_{18}(h^q(-\log h)^q + h^{2q\alpha})\,.
\end{multline*}
Let us proceed with the remaining term of (\ref{52}). Note that
\begin{multline*}
\mathbb{E}\big{[}\big(\sup_{t \in [0,\T]} \big|\sum_{j=1}^d \sigma^{i,j}(X_{\lfloor t/h\rfloor h}^h)\big\{W^j_t - W^j_{\lfloor t/h\rfloor h} + \hfill \\
\hfill- \frac{t - \lfloor t/h\rfloor h}{h} (W^j_{(\lfloor t/h\rfloor +1)h} - W^j_{\lfloor t/h\rfloor h})\big\}\big|\big)^{2q} \big{]}\leq \\
\hfill \leq C_{19}\mathbb{E}\big{[}\sup_{t \in [0,\T]} \big(\sum_{j=1}^d \big|\sigma^{i,j}(X_{\lfloor t/h\rfloor h}^h)\big|^{2q} \sup_{s \in [\lfloor t/h\rfloor h,(\lfloor t/h\rfloor +1) h]}\big|W^j_s - W^j_{\lfloor t/h\rfloor h}\big|^{2q}\big)\big{]}\\
\hfill \leq C_{19}\mathbb{E}\big{[}\sup_{k \in [|0,\lfloor \T /h \rfloor|]} \big(\sum_{j=1}^d \big|\sigma^{i,j}(X_{kh}^h)\big|^{2q} \sup_{s \in [k h,(k +1) h]}\big|W^j_s - W^j_{kh}\big|^{2q}\big)\big{]}
\end{multline*}
Since the supremum of non-negative elements is less than the sum of this elements and from the linear growth bound on $\s$, we get
\begin{multline*}
\mathbb{E}\big{[}\sup_{k \in [|0,\lfloor \T /h \rfloor|]} \big(\sum_{j=1}^d \big|\sigma^{i,j}(X_{kh}^h)\big|^{2q} \sup_{s \in [k h,(k +1) h]}\big|W^j_s - W^j_{kh}\big|^{2q}\big)\big{]} \leq \hfill \\
\hfill \leq C_{20} \mathbb{E}\big{[} \sum_{k=1}^{\lfloor \T /h \rfloor} \big(1+ \big|X_{kh}^h\big|^{2q}\big) \sup_{s \in [k h,(k +1) h]}\big|W^j_s - W^j_{kh}\big|^{2q} \big{]}
\end{multline*}
Since $ \mathbb{E}\big[ \underset{t \in [0,\T ]}{\sup} \big|\sigma^{i,j}(X_{\lfloor t/h\rfloor h}^h)\big|^{2q}\big]$ is finite (Lemma \ref{le4}), by independence of increments of Brownian motion and, from the bound on the supremum of these supremum on a time interval of length $h$, we obtain
\begin{multline*}
\mathbb{E}\big{[}\big(\sup_{t \in [0,\T]} \big|\sum_{j=1}^d \sigma^{i,j}(X_{\lfloor t/h\rfloor h}^h)\big\{W^j_t - W^j_{\lfloor t/h\rfloor h} + \hfill \\
\hfill- \frac{t - \lfloor t/h\rfloor h}{h} (W^j_{(\lfloor t/h\rfloor +1)h} - W^j_{\lfloor t/h\rfloor h})\big\}\big|\big)^{2q} \big{]}\leq \\
\hfill \leq C_{21} \lfloor \T /h\rfloor h^{q} (-\log h)^q\,.
\end{multline*}
Finally, we obtain
\begin{align*}
\mathbb{E}\big[ \big( \sup_{t \in [0,\T]} \left|\epsilon_t^i\right|\big)^{2q}\big] \leq C_{22} ( h^{q-1} (-\log h)^q + h^{2q\alpha})\,.
\end{align*}
And, therefore, we have
\[\mathbb{E}\big[  \big(\sup_{t \in [0,\T]} \left| \epsilon_t \right|\big)^{2q}\big] \leq C_{22} ( h^{q-1} (-\log h)^q + h^{2q\alpha})\,.\]
This inequality implies the convergence in $L^{p}$ of the process $(X^{h}_t)$ to $(X_t)$ on $[0,\T]$ when $h$ tends to $0$.

\smallskip
For the almost sure convergence, we suppose also that the exponent $p$ is such that $p>2$ and $p \alpha >1$.

Thus, the previous result gives
\[\sum_{k=1}^{\infty} \mathbb{E}\big[\big(\sup_{t \in [0,\T]} \big|X_t - X_t^{1/k}\big|\big)^{2q}\big] < \infty\,.\]
As $\big(\sup_{t \in [0,\T]} \big|X_t - X_t^{1/k}\big|\big)^{2q}$ are non-negative random variables, then
\[\lim_{h\rightarrow 0} \sup_ {t \in [0,\T]} \left|\epsilon_t\right| =0 \hspace{3mm} \text{a.s.}\,.\]
\end{proof}
From the convergence of this scheme of stochastic numerical analysis, the main result on the convergence of the process $\br{X}^h_t$ to a solution of a SDE is deduced.
\begin{theo}\label{Main}
Suppose that there exist measurable maps $b$, $\s$, and $\eta^{(h)}$ which verify Assumption (H1) for $b$ and $\s$, and (H4) for $\eta^{(h)}$ such that the application on the first component of $\td{T}^{(h)}$ is of the following form,
\[ U^{(h)}(x,y) = x+ \sigma(x)y+ hb(x)+h \eta^{(h)}(x,y)\,.\]
For all $x_0$ in $\RR^m$, and all $\T>0$, let $X_t^{x_0}$ be the solution on $[0,\T]$ of the SDE
\[dX^{x_0}_t= b(X^{x_0}_t)dt + \sigma(X_t^{x_0})dW_t\,.\]
Then, the process $(\br{X}^h_t)$, starting in $x_0$, converges to $(X_t^{x_0})$ when $h$ tends to $0$ in $L^{p}$, for all $p>2$ on $\left[0, \T\right]$.

\smallskip
Moreover, the convergence is almost sure on $\left[0, \T\right]$.
\end{theo}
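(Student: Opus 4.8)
The plan is to reduce the statement to Theorem~\ref{globally}. Once the embedding of Section~3 is written out explicitly, the process $(\br{X}^h_t)$ turns out to be \emph{exactly} the linearly interpolated Markov chain studied there, realised on the Wiener space and driven by the canonical Brownian motion; the conclusion is then a direct application of that theorem, with no further analysis required.

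First I would unfold the definition of $\br{X}^h_t$. Recall from Subsection~3.2 that $\br{T}^{(h)}_t(x,\o)=(\br{X}^{(h)}_t(\f_P^{(h)}(\o)),\br{\t}^{(h)}_t(\o))$ and that $(\br{T}^{(h)})^n=\F_I^{(h)}\circ(\td{T}^{(h)})^n\circ\F_P^{(h)}$ with $\F_I^{(h)}=(\mathrm{Id},\f_I^{(h)})$. Plugging in the identity $(\td{T}^{(h)})^n(x,y)=(X^{(h),x}_{nh}(y),(\t^{(h)})^n(y))$ from Subsection~2.1 and using that $\f_I^{(h)}$ affects only the second coordinate, the first component of $(\br{T}^{(h)})^n(x,\o)$ is $X^{(h),x}_{nh}(\f_P^{(h)}(\o))$, where $X^{(h),x}_{(n+1)h}=U^{(h)}(X^{(h),x}_{nh},(\f_P^{(h)}(\o))_{(n+1)h})$ and $X^{(h),x}_0=x$. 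Since $(\f_P^{(h)}(\o))_{(n+1)h}=W_{(n+1)h}(\o)-W_{nh}(\o)$ and $U^{(h)}$ has the assumed form, this recursion reads
\[X^{(h),x}_{(n+1)h}=X^{(h),x}_{nh}+\s(X^{(h),x}_{nh})(W_{(n+1)h}-W_{nh})+h\,b(X^{(h),x}_{nh})+h\,\eta^{(h)}(X^{(h),x}_{nh},W_{(n+1)h}-W_{nh}),\]
which is precisely the chain of Theorem~\ref{globally}; moreover the linear interpolation in $t$ built into $\br{T}^{(h)}_t$ is the very interpolation used there to pass to the process $X^h_t$. Hence $\br{X}^h_t=X^h_t$, as processes on $(\O,\mathcal{F},\PP)$, with $(W_t)$ the canonical Brownian motion.

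Next I would verify the hypotheses of Theorem~\ref{globally}. On the Wiener space the increments $(W_{(n+1)h}-W_{nh})_{n\ge0}$ are i.i.d.\ $\mathcal{N}(0,h\,I_d)$, which is the environment law implicit in the repeated-interaction scheme, and by assumption $b,\s$ satisfy (H1) while $\eta^{(h)}$ satisfies (H4). Theorem~\ref{globally} therefore applies verbatim and yields, for every $\T>0$ and every $p>2$, the $L^p$-convergence of $(\br{X}^h_t)$ to the solution $(X^{x_0}_t)$ of $dX^{x_0}_t=b(X^{x_0}_t)\,dt+\s(X^{x_0}_t)\,dW_t$ on $[0,\T]$, together with the quantitative rate stated there; choosing $q$ with $q>2$ and $2q\alpha>1$ (possible since $\alpha>0$) gives the almost sure convergence on $[0,\T]$.

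The only delicate point is the bookkeeping in the first step: one has to track how $\F_I^{(h)}$, $\F_P^{(h)}$, the shift $\t^{(h)}$ and the two superposed linear interpolations (the one hidden in $\f_I^{(h)}$ and the one in the definition of $\br{T}^{(h)}_t$) fit together, and to observe that $\f_I^{(h)}$ is immaterial for the first coordinate because $\F_I^{(h)}=(\mathrm{Id},\f_I^{(h)})$ and $\f_P^{(h)}\circ\f_I^{(h)}=\mathrm{Id}$. Once this identification is in place nothing remains to be done beyond invoking Theorem~\ref{globally}; that is what makes the reduction essentially a matter of definitions.
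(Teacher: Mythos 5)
Your proposal is correct and follows the same route as the paper: the paper states Theorem \ref{Main} as an immediate consequence of Theorem \ref{globally}, the point being exactly the identification you spell out, namely that the first component of $\br{T}^{(h)}_t(x_0,\cdot)$ is the linearly interpolated chain driven by the Brownian increments $W_{(n+1)h}-W_{nh}$ via $U^{(h)}$, so that $\br{X}^h_t=X^h_t$ and the $L^p$ and (for $q>2$, $2q\alpha>1$) almost sure convergence of Theorem \ref{globally} apply verbatim. Your write-up merely makes explicit the bookkeeping with $\F_I^{(h)}$, $\F_P^{(h)}$ and the interpolation that the paper leaves implicit.
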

We now present a similar result in the case of locally Lipschitz and linearly bounded applications $b$ and $\s$, that is to say satisfying Assumptions (H2) and (H3).

\begin{theo}\label{locally}
For all $\T>0$, under the assumptions (H1), (H2), and (H3), the process $(X_t^h)$ converges in $L^{p}$ to the solution $(X_t)$ of the SDE \eqref{eq:A} on $\left[0,\T\right]$ for all $p > 2$.

More precisely,
\[\lim_{h \rightarrow 0}\mathbb{E}\big[ (\sup_{t \in [0,\T]} \big| X_t -X_t^h\big|)^p \big] = 0\,.\]
\end{theo}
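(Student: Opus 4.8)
The plan is to reduce this locally Lipschitz, linearly bounded case to the globally Lipschitz case already settled in Theorem~\ref{globally}, by a standard localisation (truncation) argument. The point is that $X_t$ and $X^h_t$ coincide with truncated versions of themselves as long as they stay in a large ball, and the uniform-in-$h$ moment estimates of Lemmas~\ref{le3} and~\ref{le4} let us make the probability of leaving that ball small, uniformly in $h$.

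First I would fix $N>0$ and replace $b$, $\s$ and $\eta^{(h)}$ by modifications $b_N$, $\s_N$, $\eta^{(h)}_N$ which agree with $b$, $\s$, $\eta^{(h)}$ on the ball $\BB(0,N)$: since $b$ and $\s$ are locally Lipschitz and linearly bounded, a cut-off outside $\BB(0,N)$ produces $b_N$, $\s_N$ satisfying (H1), and setting $\eta^{(h)}_N(x,y)=\eta^{(h)}(\pi_N(x),y)$, where $\pi_N$ is the (nonexpansive) projection onto $\overline{\BB}(0,N)$, preserves the bound (H4) since $|\pi_N(x)|\le|x|$. Let $X^{(N)}_t$ be the solution of the SDE with coefficients $b_N$, $\s_N$, and let $X^{h,(N)}_t$ be the interpolated scheme built with $b_N$, $\s_N$, $\eta^{(h)}_N$. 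Theorem~\ref{globally} applies to this truncated system, so for each fixed $N$ and each $p>2$ one has $\EE\big[(\sup_{t\in[0,\T]}|X^{(N)}_t-X^{h,(N)}_t|)^p\big]\to 0$ as $h\to0$.

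Next I would compare $(X_t,X^h_t)$ with $(X^{(N)}_t,X^{h,(N)}_t)$. Introduce the event $A_N^h=\{\max_{nh\le\T}|X^h_{nh}|<N\}\cap\{\sup_{t\in[0,\T]}|X_t|<N\}$. By convexity of the norm, if all grid values $X^h_{nh}$ lie in $\BB(0,N)$ then so does the whole piecewise-linear path $X^h_t$; since the recursion defining $X^h$ only evaluates $b$, $\s$, $\eta^{(h)}$ at those grid values, on $A_N^h$ one gets $X^h_t=X^{h,(N)}_t$, while pathwise uniqueness gives $X_t=X^{(N)}_t$ on $A_N^h$. Splitting,
\[
\EE\Big[\big(\sup_{[0,\T]}|X_t-X^h_t|\big)^p\Big]\le
\EE\Big[\big(\sup_{[0,\T]}|X^{(N)}_t-X^{h,(N)}_t|\big)^p\Big]
+\PP\big((A_N^h)^c\big)^{1/2}\,\EE\Big[\big(\sup_{[0,\T]}|X_t-X^h_t|\big)^{2p}\Big]^{1/2},
\]
the second summand coming from Cauchy--Schwarz. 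Here $\EE[(\sup_{[0,\T]}|X_t-X^h_t|)^{2p}]$ is bounded by a constant $M$ independent of $h$: Lemma~\ref{le3} gives finite sup-moments of $X$, and $\sup_{[0,\T]}|X^h_t|=\max_{nh\le\T}|X^h_{nh}|\le\sup_{[0,\T]}|Y^h_t|$, so Lemma~\ref{le4} bounds the sup-moments of $X^h$ uniformly for $h\le h_0$. The same uniform moment bounds and Markov's inequality give $\PP((A_N^h)^c)\le C\,N^{-2q}$ with $C$ and $q$ (any $q>1$) independent of $h$.

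Finally I would conclude with an $\e$--$N$--$h$ argument: given $\e>0$, first choose $N$ so large that $C^{1/2}N^{-q}M^{1/2}<\e/2$; then, $N$ being fixed, choose $h$ small enough that $\EE[(\sup_{[0,\T]}|X^{(N)}_t-X^{h,(N)}_t|)^p]<\e/2$ by the truncated convergence above. This yields $\limsup_{h\to0}\EE[(\sup_{[0,\T]}|X_t-X^h_t|)^p]\le\e$, hence the stated limit is $0$. The main point to be handled carefully is the localisation itself: constructing $b_N$, $\s_N$, $\eta^{(h)}_N$ so that Theorem~\ref{globally} genuinely applies, verifying that $X^h$ and $X^{h,(N)}$ (and $X$ and $X^{(N)}$) really agree on $A_N^h$ — in particular that the interpolated scheme does not escape the ball between grid points — and checking that Lemmas~\ref{le3} and~\ref{le4} indeed supply the $2p$-th sup-moment bounds needed for the Cauchy--Schwarz splitting, which they do since the exponent $q$ in those lemmas is arbitrary.
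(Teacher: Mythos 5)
Your proposal is correct and follows the same overall localisation strategy as the paper, but the implementation is genuinely different. The paper does not truncate the coefficients: it introduces the exit times $\T_N=\inf\{t:\left|X_t\right|\ge N\}$ and $\T_N^{(h)}=\inf\{t:\left|X_t^{h}\right|\ge N\}$, asserts that Theorem \ref{globally} applied to the stopped processes yields a bound $C_N(h^{q-1}(-\log h)^q+h^{2q\alpha})$ with $C_N$ depending on the local Lipschitz constant $K_N$, and controls the two tail terms $\EE[\,\cdot\,:\T_N\le\T]$ and $\EE[\,\cdot\,:\T_N^{(h)}\le\T]$ by the elementary bound $\mathds{1}_{\{\sup\left|X\right|\ge N\}}\le \sup\left|X\right|/N$, i.e. by $\tfrac2N$ times $(2q+1)$-th sup-moments, instead of your Cauchy--Schwarz splitting with $2p$-th moments; both routes then rest on the uniform-in-$h$ sup-moment bounds of Lemmas \ref{le3} and \ref{le4} and the same ``$N$ first, then $h$'' limiting argument. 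What your version buys is that the appeal to Theorem \ref{globally} becomes literal: that theorem is stated for globally Lipschitz $b$, $\s$, so invoking it for stopped processes with a constant $C_N$, as the paper does, implicitly presupposes exactly the coefficient truncation $b_N$, $\s_N$, $\eta^{(h)}_N$ and the identification of original and truncated dynamics before exit from $\BB(0,N)$ that you spell out; in that sense your argument is the more airtight rendering of the same idea. Two small points to tidy: the event $A_N^h$ should also include the grid value at $(\lfloor \T/h\rfloor+1)h$, which may exceed $\T$, since the interpolated path on the last subinterval uses it (apply the moment bounds on $[0,\T+h_0]$, say); and the identity $X=X^{(N)}$ on $\{\sup_{t\le\T}\left|X_t\right|<N\}$ deserves the standard local-uniqueness justification, namely that both processes solve the truncated (Lipschitz) SDE up to the exit time and hence coincide there.
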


\begin{proof}
For all $\T>0$ and all natural $N$, we define two stopping times $\T_N$ and $\T_N^{(h)}$ by 
\[\T_N = \inf\{t; \left| X_t \right| \geq N \} \ \text{ and }\  \T_N^{(h)} = \inf\{t; \big| X_t^{(h)} \big| \geq N \}\,.\]
Note that, from Theorem \ref{globally},
\[\EE\big[ \big(\sup_{t \in \left[0,\T\right]} \big|X^h_{t \wedge \T_N^{(h)} \wedge \T_N} - X_{t \wedge \T_N^{(h)} \wedge \T_N} \big|)^{2q} \big] \leq C_N (h^{q-1}(-\log h)^q+ h^{2q\alpha})\,,\]
where $C_N$ notably depends on $K_N$.
\begin{multline*}
\EE\big[\big(\sup_{t \in \left[0,\T\right]} \big|X^h_{t} - X_{t} \big|)^{2q} \big]\leq \EE\big[\big(\sup_{t \in \left[0,\T\right]} \big|X^h_{t \wedge \T_N^{(h)} \wedge \T_N} - X_{t \wedge \T_N^{(h)} \wedge \T_N} \big|)^{2q} \big]+\hfill \\
\hfill+\EE\big[\big(\sup_{t \in \left[0,\T\right]} \big|X^h_{t}\big|^{2q}+\big| X_{t} \big|^{2q} : \T_N^{(h)} \leq \T \big] +\\
\hfill+ \EE\big[\big(\sup_{t \in \left[0,\T\right]} \big|X^h_{t}\big|^{2q}+\big| X_{t} \big|^{2q} : \T_N \leq \T \big]\,.
\end{multline*}
Hence, we get for all $N$
\begin{multline*}
\EE\big[\big(\sup_{t \in \left[0,\T\right]} \big|X^h_{t} - X_{t} \big|)^{2q} \big]\leq \EE\big[\sup_{t \in \left[0,\T\right]} \big|X^h_{t \wedge \T_N^{(h)} \wedge \T_N} - X_{t \wedge \T_N^{(h)} \wedge \T_N} \big|^{2q} \big]+\hfill \\
\hfill+\dfrac2N \EE\big[\sup_{t \in \left[0,\T\right]} \big|X^h_{t}\big|^{2q+1}+\sup_{t \in \left[0,\T\right]} \big| X_{t} \big|^{2q+1} \big]\,.
\end{multline*}
Since the expectation of the supremum of $X_t$ (Lemma \ref{le2}) and the supremum of $X_t^h$ (Lemma \ref{le3}) can be bounded by a constant independent of $h$ for all $h$ sufficiently small, then
\[\lim_{h \rightarrow 0}\mathbb{E}\big[ (\sup_{t \in [0,\T]} \big| X_t -X_t^h\big|)^{2q} \big] = 0\,.\]
\end{proof}
This theorem \ref{locally} implies the following result on the limit evolution of the system when the time step $h$ for the interactions goes to $0$.
\begin{theo}\label{Main2}
Suppose that there exist measurable maps $b$, $\s$, and $\eta^{(h)}$ which verify Assumptions (H2) and (H3) for $b$ and $\s$, and (H4) for $\eta^{(h)}$ such that the application on the first component of $\td{T}^{(h)}$ is of the following form,
\[ U^{(h)}(x,y) = x+ \sigma(x)y+ hb(x)+h \eta^{(h)}(x,y)\,.\]
For all $x_0$ in $\RR^m$, and all $\T>0$, let $X_t^{x_0}$ be the solution on $[0,\T]$ of the SDE
\[dX^{x_0}_t= b(X^{x_0}_t)dt + \sigma(X_t^{x_0})dW_t\,.\]
Then, the process $(\br{X}^h_t)$, starting in $x_0$, converges to $(X_t^{x_0})$ when $h$ tends to $0$ in $L^{p}$, for all $p>2$ on $\left[0, \T\right]$.
\end{theo}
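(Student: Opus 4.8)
The strategy is to deduce Theorem \ref{Main2} from Theorem \ref{locally} by showing that, once the hypothesised form of $U^{(h)}$ is inserted, the first component $\br X^h_t$ of the embedded dynamics $\br T^{(h)}_t$ is literally the linearly interpolated scheme $X^h_t$ analysed in Subsection 5.2. First I would unwind the construction of Section 3.2. Since $\F_I^{(h)}=(\mathrm{Id},\f_I^{(h)})$ and $\F_P^{(h)}=(\mathrm{Id},\f_P^{(h)})$ act as the identity on the first ($\RR^m$) coordinate, the map $\f_I^{(h)}$ only reshapes the environment component, and the first coordinate of $(\br T^{(h)})^{n}(x,\o)$ equals $X^x_n(\f_P^{(h)}(\o))$, where $(X^x_n)$ is the iterate $X^x_{n+1}(y)=U^{(h)}(X^x_n(y),y_{(n+1)h})$, $X^x_0=x$, of the repeated-interaction Markov chain. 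This is where one must be slightly careful: there are two different linear interpolations in play --- the one defining $\f_I^{(h)}$ and the one defining $\br T^{(h)}_t$ --- and the point is that the former does not touch the first coordinate, so the embedded dynamics reproduces the scheme of Subsection 5.2 exactly, not merely up to lower order terms.

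Next I would substitute $U^{(h)}(x,y)=x+\s(x)y+hb(x)+h\eta^{(h)}(x,y)$ together with the explicit environment sequence $\f_P^{(h)}(\o)=(W_{nh}(\o)-W_{(n-1)h}(\o))_{n\ge1}$ given by the Brownian increments. This yields, for every $n$,
\[
\br X^h_{(n+1)h}=\br X^h_{nh}+\s(\br X^h_{nh})(W_{(n+1)h}-W_{nh})+h\,b(\br X^h_{nh})+h\,\eta^{(h)}(\br X^h_{nh},W_{(n+1)h}-W_{nh}),
\]
with $\br X^h_0=x_0$, which is exactly the Markov chain $(X^h_{nh})$ of Subsection 5.2. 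Moreover the time interpolation built into $\br T^{(h)}_t$ restricts, on the first coordinate, to the linear interpolation between $\br X^h_{\lfloor t/h\rfloor h}$ and $\br X^h_{(\lfloor t/h\rfloor+1)h}$, which is precisely the interpolation used to define $X^h_t$. Hence $\br X^h_t(\f_P^{(h)}(\o))=X^h_t(\o)$ for all $\o\in\O$ and all $t$; in particular, since $\f_P^{(h)}$ pushes the Wiener measure onto the product measure of the environment, the processes $(\br X^h_t)$ and $(X^h_t)$ have the same law.

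Finally, under Assumptions (H2), (H3) for $b,\s$ and (H4) for $\eta^{(h)}$, Theorem \ref{locally} applies to $(X^h_t)$ and gives $\EE[(\sup_{t\in[0,\T]}|X_t-X^h_t|)^p]\to0$ as $h\to0$ for every $p>2$. Transporting this bound through the identity $\br X^h_t=X^h_t\circ\F_P^{(h)}$ (equivalently, through the equality in law) yields the $L^p$ convergence of $(\br X^h_t)$, started at $x_0$, to the solution $(X^{x_0}_t)$ of the SDE $dX^{x_0}_t=b(X^{x_0}_t)\,dt+\s(X^{x_0}_t)\,dW_t$ on $[0,\T]$, which is the assertion of the theorem. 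There is no genuine obstacle here --- the result is essentially a corollary of Theorem \ref{locally} --- and the only work is the definitional bookkeeping described above, namely verifying that the embedding of Section 3.2 transports the abstract repeated-interaction dynamical system into precisely the concrete numerical scheme rather than a perturbation of it.
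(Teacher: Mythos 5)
Your proposal is correct and follows essentially the same route as the paper: the paper also treats this theorem as a direct corollary of Theorem \ref{locally}, with the identification of the first component $\br{X}^h_t$ of the embedded dynamics (driven by the Brownian increments $\f_P^{(h)}(\o)$) with the interpolated scheme $X^h_t$ left implicit. Your explicit bookkeeping of the two interpolations simply writes out what the paper takes for granted.
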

Physically, the results of these two theorems of convergence can be understood as follows. If the effective action of the environment on the system is roughly linear, the limit evolution of the system is given by a solution of a stochastic differential equation. This SDE is deduced from a Taylor expansion of the application $U^{(h)}$.

With the quantum repeated interactions scheme, Attal and Pautrat (in \cite{AP}) find quantum Langevin equations as limits of some Hamiltonian systems. There are some similarities with their results, particularly on the form of considered interactions.

\section{Back to the Physical Systems}
\subsection{Charged Particle in an Uniform Electric Field}
The first example was a charged particule in an uniform electric field.
Recall that the evolution for a time step $h$ is given by
\[X(nh)= U^{(h)}(X((n-1)h),E(nh))\,,\]
where the map $U^{(h)}$ is defined by
\[ U^{(h)}(x,y)= x +  \s(x) y +h b(x) + h\eta^{h}(x,y)\,,\]
with
\[\s \left(\begin{array}{c} x_1\\ x_2 \end{array}\right)=\left(\begin{array}{c} 0\\ q \end{array}\right) \hspace{10mm}b\left(\begin{array}{c} x_1\\ x_2 \end{array}\right)=\left(\begin{array}{c} \dfrac{x_2}{m}\\ 0 \end{array}\right)\hspace{10mm}\eta^{h}(x,y)=\left(\begin{array}{c} \dfrac{qy}{2m}\\ 0 \end{array}\right)\,.\]
Note that this application $U^{(h)}$ is of the form of Theorem \ref{Main}.
The function $\s$ is constant and the function $b$ is linear. Finally, they are Lipschitz functions.
For the application $\eta^{h}$, Assumption (H4) is also verified with $\alpha = + \infty$ and $K_2 = \dfrac{q}{2m}$.
Therefore, Theorem \ref{Main} can be applied for this system.

For all intial $x_0$ and $p_0$, all $\T>0$, the limit process which gives the evolutions of the charged particule is almost surely the solution $X_t=\left(\begin{array}{c}
X_t^1\\
X_t^2
\end{array}\right)$ on $[0,\T]$ of the stochastic differential equation
\[dX_t =\left(\begin{array}{c}
\frac{X_t^2}{m}\\
0
\end{array}\right) dt + \left(\begin{array}{c}
0\\
q
\end{array} \right)dW_t,\]
where $W_t$ is a $1$-dimensional standard Brownian motion and with $X_0 = (x_0,p_0)$.

\subsection{Harmonic Interaction}

The second example was a harmonic interaction between the system and the environment.
The evolution of the system was descibed by the Markov chain
\[X(nh)=U^{(h)}(X((n-1)h), Y(nh))\,\]
 where
\begin{align*}
U^{(h)}(X,Y)= X + \sigma(X) Y + h b(X) +h \eta^{(h)}(X,Y)\,,
\end{align*}
with
\[b\left(\begin{array}{c} x_1\\x_2  \end{array} \right) = \left(\begin{array}{c} x_2\\ -(x_1+l) \end{array} \right), \hspace{5mm} \sigma \left(\begin{array}{c} x_1\\ x_2 \end{array} \right) =\left(\begin{array}{cc} 0& 0\\ 1& 0 \end{array} \right)\,,\]
and
\[\eta^{(h)}\left[\left(\begin{array}{c} x_1\\ x_2 \end{array}\right), \left(\begin{array}{c} y_1\\ y_2 \end{array}\right) \right] = \dfrac{1}{2}\left(\begin{array}{c} y_1\\ y_2 \end{array}\right)-\dfrac{h}{2} \left(\begin{array}{c} x_1+l-y_2/3  \\ x_2+2y_1/3 \end{array}\right) + \circ(h)\,.\]
The functions $b$ and $\s$ are Lipschitz.
For the application $\eta^{(h)}$, Assumptions (H4) is verified too for $\alpha=1$.
Theorem \ref{Main} can be applied for this system.
Therefore, we conclude that for all $\T>0$ and all $Q_1(0)$, $P_1(0)$ in $\mathbb{R}$, the limit evolution on $\left[0,\T \right]$ of the system is given almost surely by the solution of the stochastic differential equation 
 \[dX_t = \left( \begin{array}{c}
X_t^2\\
-(X^1_t +l)
\end{array} \right)dt +\left(\begin{array}{cc}
0 &0\\
1& 0
\end{array} \right)dW_t\,,\]
starting at time $0$ in $X_0 = \left(\begin{array}{c} Q_1(0)\\ P_1(0) \end{array}\right)$.

Note that this SDE is the equation of a harmonic oscillator perturbated by a Brownian noise. This kind of SDE was already considered in \cite{T} for instance.

\smallskip
Let us focus now on the asymptotic behaviour of this process. This stochastic differential equation has no stationaty measure since the non-perturbated differential equation has no stable point. Physically, the energy of the system whose evolution is governed by this SDE inscreases with the time. This rise of energy can be explained by the fact that repeated interactions bring energy to the system and there is no way to dissipate it. The next example shall be different.

\subsection{Damped Harmonic Oscillator}

The last example is the damped harmonic oscillator. The system undergoing repeated interactions evolves following the Markov chain $(X(nh))$ defined by
\[X(nh)=U^{(h)}(X((n-1)h), Y(nh))\,\]
 where
\begin{align*}
U^{(h)}(X,Y)= X + \sigma(X) Y + h b(X) +h \eta^{(h)}(X,Y),
\end{align*}
with
\[b\left(\begin{array}{c} x_1\\x_2  \end{array} \right) = \left(\begin{array}{c} x_2\\ -x_1 - f x_2 \end{array} \right), \hspace{5mm} \sigma \left(\begin{array}{c} x_1\\ x_2 \end{array} \right) =\left(\begin{array}{cc} 0& 0\\ 1& 0 \end{array} \right)\,.\]
The application $\eta^{(h)}$ is not explicitly expressed. But as said in the section 4, from Newton's law of motion, high derivative of $P_1$ and $Q_1$ can be bounded with $Q_1$, $P_1$, $Q_2$ et $P_2$. Then a required bound on $\eta^{(h)}$ can be found. Moreover, the maps $b$ et $\s$ are Lipschitz. Therefore Theorem \ref{Main} can be applied.

Hence, the limit evolution of the system is governed by the solution of the stochastic differential equation
\[\begin{cases}
 d\, Q_1 = P_1 dt\\
 d\,P_1 = - f \,P_1 - Q_1 dt + d\,W_t^1\,.
  \end{cases}\]
Note that in all these examples, the states of parts of the environment are sampled from the increments of a Brownian motion whose variance is $1$.
A new parameter which could be interpreted as the temperature of the environment can be added by taking a variance $T$. Mathematically, it can be introduced by multipling states of the environment in the setup of repeated interactions by a factor $\sqrt{T}$.
Physically, more the temperature of the environment is high, more the interaction with the environment influences the dynamics of the system.
The effect of this temperature is the change of the diffucion term. The new stochastic differential equation is  
\[\begin{cases}
 d\, Q_1 = P_1 dt\\
 d\,P_1 = - f \,P_1 - Q_1 dt + \sqrt{T}\, d\,W_t^1\,.
  \end{cases}\]
Then, we examine the asymptotic behaviour of the system.
This stochastic differential equation is a Langevin equation whose stationary measure is given by the Gibbs measure
$$ d\,\mu = \frac{e^{-f \frac{(Q_1)^2+(P_1)^2}{2T}}}{\mathcal{Z}}\,d\,Q_1\,d\,P_1\,,$$
where $\mathcal{Z}$ is a normalizing constant.

Contrary to the previous example, the friction force allows the system to dissipate a part of its energy and then, the convergence of the dynamics to the stationary state (see \cite{MSH}). Physically, this convergence to this Gibbs measure can be understood as follows. The repeated interactions of the environment at temperature $T$ lead the system to tend to the thermal state related to the same temperature $T$. Finally, the system is thermalised by the environment.

\end{document}